\documentclass[a4paper,leqno,10pt]{article}

\usepackage{ams math,amssymb,amsfonts,amsthm,mathrsfs,MnSymbol}
\usepackage{graphicx,epsfig,color}

\usepackage{setspace}
\usepackage{latexsym}
\usepackage{epsfig}
\usepackage{yfonts}

\usepackage{enumitem}
\setitemize{wide}

\newtheorem{Definition}{Definition}
\newtheorem{Proposition}{Proposition}
\newtheorem{Theorem}{Theorem}

\newtheorem{Lemma}{Lemma}

\newtheorem{Corollary}{Corollary}

\newtheorem{Aux-Lemma}{Aux-Lemma}

\newcommand{\vs}{\vspace{0.2cm}}
\newcommand{\n}{\noindent}
\newcommand{\be}{\begin{equation}}
\newcommand{\ee}{\end{equation}}
\newcommand{\ben}{\begin{equation*}}
\newcommand{\een}{\end{equation*}}

\newcommand{\dist}{{\rm{dist}}}

%%%%%%%%%%%%%%%%%%%%%%%%%%%%%%%%%%%%%%%

%\usepackage{fourier}
%\usepackage[T1]{fontenc}

\usepackage{etoolbox}

\usepackage{fancyhdr}
\pagestyle{fancy}

\AtBeginDocument{\thispagestyle{plain}}
\fancypagestyle{plain}{
\fancyhead{}
\fancyfoot{}
\fancyhead[LE,RO]{}
\fancyhead[LO,RE]{}
\fancyfoot[R]{\thepage}

}

\fancyhead{}
\fancyfoot{}
\fancyhead[LE,RO]{\slshape \rightmark}
\fancyhead[LO,RE]{\slshape \leftmark}
\fancyfoot[R]{\thepage}

%%%%%%%%%%%%%%%%%%%%%%%%%%%%%%%%%%%%%%%

\newcommand{\br}{\bar{r}}
\newcommand{\AR}{{\mathcal A}_{\mathbb{R}^{3}}}
\newcommand{\length}{{\rm{length}}}
\newcommand{\const}{\varrho}

%%%%%%%%%%%%%%%%%%%%%%%%%%%%%%%%%%%%%%%

%FIRST STYLE. TO PRINT IN THE AEI IN DOUBLE PAGE PER SIDE.
%\addtolength{\hoffset}{-1.5cm}\addtolength{\textwidth}{1.4cm}\addtolength{\textheight}{15pt}\addtolength{\oddsidemargin}{-1cm}\addtolength{\voffset}{-3cm}\linespread{1.1}\addtolength{\headwidth}{1.3cm}

%SECOND STYLE. TO PRINT ONE PAGE PER SIDE.
\addtolength{\hoffset}{-1.5cm}\addtolength{\textwidth}{1.4cm}\addtolength{\textheight}{15pt}\addtolength{\oddsidemargin}{0.8cm}\addtolength{\voffset}{-.6cm}\linespread{1.1}\addtolength{\headwidth}{1.5cm}

\begin{document}

\thispagestyle{plain}

%\n {\fontsize{18pt}{13pt}\selectfont \sc Stationary solutions and Asymptotic flatness I.}

\begin{center}
{\sc\LARGE Stationary solutions and

\vspace{.3cm}
Asymptotic flatness I}

\vspace{.3cm}
\n {Martin Reiris}\\
\vspace{.2cm}
{\small email: martin@aei.mpg.de}\\

\vspace{.1cm}
\n \textsc{Max Planck Institute f\"ur Gravitationsphysik \\ Golm - Germany}\\

\vspace{0.6cm}
\n \begin{minipage}[l]{11cm}
\begin{spacing}{1}
{\small In this article and its sequel we discuss the asymptotic structure of space-times representing isolated bodies in General Relativity. 
Such space-times are usually required to be asymptotically flat (AF), and thus to have a prescribed type of asymptotic.
Despite all the ``reasonable" that the requirement is, it seems to be against the spirit of General Relativity where the global structure of the space-time should be also considered as a variable. It is shown here that, even eliminating from the definition any a priori reference or assumption about the asymptotic, the space-times of isolated bodies are unavoidably and a posteriori AF. 
In precise terms, between the two articles it is proved that any vacuum strictly stationary space-time end whose (quotient) manifold is diffeomorphic to $\mathbb{R}^{3}$ minus a ball and whose Killing field has its norm bounded away from zero
is necessarily AF with Schwarzschidian fall off. The ``excised" ball would contain (if any) the actual material body, but this information or any other is not necessary to reach the conclusion. Physical and mathematical implications are also discussed. 

\vs
{\sc PACS}: $02.40. -$ k, $04.20. -$ q.

}
\end{spacing}
\n \end{minipage}

%\vspace{.6cm}
\end{center}
\section{Introduction.}

In this section we discuss with certain freedom the physical motivations of this article.
Around 13 billions of years ago the first galaxies started to form. Since then they continued accreting matter, delimited their visible shapes and, as the volume of the universe expanded, they drifted apart.    
Then the profiles of their gravitational potentials settled and became a distinguishable imprint of their material contents. One can imagine that along this journey the space surrounding a given galaxy decomposes naturally into a bulk, a far field zone and, farther away, the outside world \cite{MR2441850}. 
Moreover in this landscape the outside world is so far away that to model the far field zone we could simply prescind of it and replace it by an essentially flat empty space. 
In more pragmatic words, at the end of this eternal expansion the far field zone can be modeled as an empty and AF stationary region of space. This (a bit romantic) story is based in facts and also in intuitions.
But, does it have to be like that? More concretely, could the far field zone of isolated galaxies reach eventually a different type of asymptotic? The quest, which at first sight seems to be only of an academic motivation, has indeed some physical relevance. 
The reason is that galaxies in their current stage do not accompany entirely this picture. 

Let us bring one among the many astronomical data available into the discussion. Observations show indisputably that far outside their visible regions the rotational velocity of stars and satellite galaxies around the center of disc galaxies remains remarkably constant in the radius (see \cite{Sofue:2000jx} and, more recently, \cite{Persic:1995ru}, \cite{Salucci:2007tm}). For a typical disc galaxy like the Milky Way, the rotational velocities are of the order of 250 km/s. The problem about this behavior is that it is against a flat asymptotic (for a classical discussion of isolated systems see \cite{Ehlers1980}). Of course the usual (and plausible) explanation claims the existence of huge haloes composed of weakly interacting dark matter particles enclosing the galaxies and causing such gravitational distortions. It is not the purpose of this article to adhere to or to refute this belief (we do not have the background to do so) but rather to investigate what General Relativity says about the asymptotic of isolated systems. Despite of this we will adventure a physical incursion at the end of the introduction. 

A simple static perfect fluid solution of the Einstein equations displaying such flat rotation curves around the origin can be given explicitly as follows. The metric (in geometrized units) is given by
\ben
{\bf g}=-r^{\displaystyle 2v^{2}}dt^{2}+ (1+4\pi v^{2}(2-v^{2}))dr^{2}+r^{2}d\Omega^{2}
\een
where, as a simple calculation shows, $v$ is the rotational velocity of circular orbits around the origin $r=0$ (hence constant). 
The stress-energy tensor is that of a perfect fluid with $\rho=\rho_{0}/r^{2}$ and $p=p_{0}/r^{2}$ where 
\ben
\rho_{0}=\frac{1}{8\pi+{\displaystyle \frac{2}{(2-v^{2})v^{2}}}}\qquad \text{and}\qquad p_{0}=\frac{v^{4}}{2+8\pi(2-v^{2})v^{2}}
\een
This space-time is found by solving the Tolman-Oppenheimer-Volkoff equation of hydrostatic equilibrium \cite{MR757180} with the ansatz $\rho=\rho_{0}/r^{2}$ and $p=p_{0}/r^{2}$ and has a number of interesting properties. It is spherically symmetric, the Killing field $\partial_{r}$ is static and $r\partial_{r}+(1-v^{2})t\partial_{t}$ is a conformal Killing field. Moreover $r\partial_{r}$ is a homothetic Killing field of the slice $t=0$ which is hence self-similar. When $v\sim 0$ then $v\sim \sqrt{m(r)/r}$ where is $m(r)$ is the mass contained until the coordinate radius $r$ (on the $t=0$ slice). Moreover the energy density
$\rho_{0}$ measures the area defect as we have $8\pi \rho_{0}=(A_{E}(\bar{r})-A(\bar{r}))/A_{E}(\bar{r})$ where here $A_{E}(\bar{r})=4\pi \bar{r}^{2}$ is the Euclidean area of a sphere of radius $\bar{r}$ and $A(\bar{r})$ is the area of the sphere of metric radius $\bar{r}$ on the $t=0$ slice (i.e. $\bar{r}=\bar{r}(r)$ is the physical distance from the sphere of coordinate radius $r$ to the origin $r=0$). When $v\sim 0$ then (in natural units) $(v/c)^{2}\sim (A_{E}(\bar{r})-A(\bar{r}))/2A_{E}(\bar{r})$. In particular, for rotational velocities of the order of $300$ km/s the area defect is of the order of $2\, 10^{-6}$. In any case the area defect does not depend on the radius is a measure of the non-asymptotic flatness. Finally it is worth remaking that in the regime of small rotational velocities this model reduces to the so called {\it singular isothermal model} which is widely used in Galactic dynamics \cite{MR7571801}. 

\begin{figure}[h]
\centering
\includegraphics[width=10cm,height=5cm]{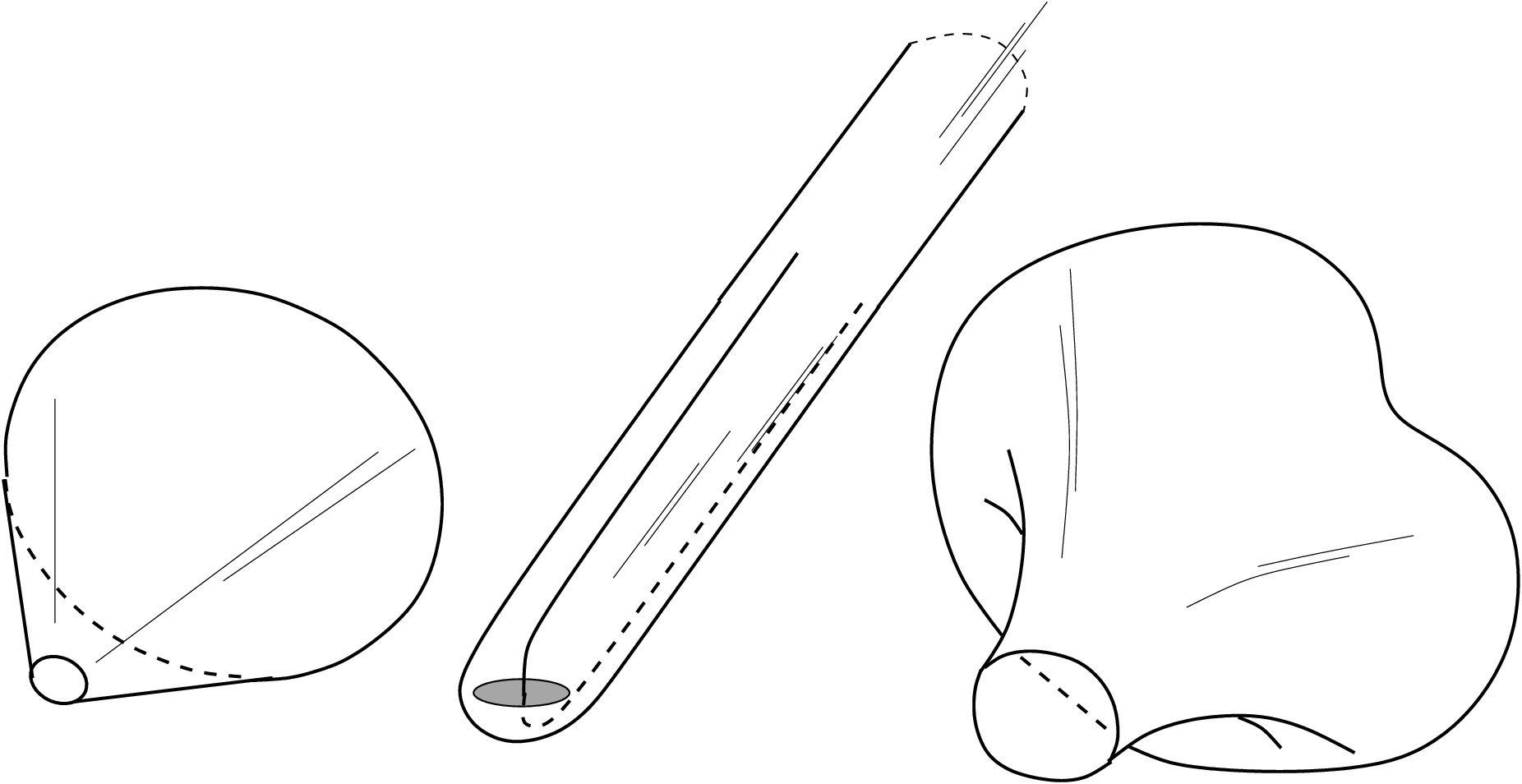}
\caption{Representation of the space-like sections of the three different asymptotic discussed in the introduction. On the left is a self similar three-geometry. Centered is represented a rotating disc having the cylindrical asymptotic of a near-horizon geometry. On the right is an asymptotically flat three-geometry.}
\label{Fig1}
\end{figure} 

The previous is an interesting and relatively appealing stationary space-time filled with matter but not AF. But what about vacuum space-times representing the stationary exterior of isolated systems? The paradigmatic case is of course the Schwarzschild solution which is AF. The Kerr black-holes can also be considered to represent isolated bodies (black-holes) and there are other exact solutions, like the Tomimatsu-Sato class, that could also account for the exterior of isolated bodies. Numerical solutions, also AF, have been studied extensively in the literature \cite{MR2441850}. But, are there systems generating a non-asymptotically flat stationary space-time? Interestingly the answer is yes. As R. Mainel and G. Neugebauer have shown (see for instance \cite{MR2441850}), when a disc of dust rotates at a particular rate it generates a stationary space-time that is not AF but rather asymptotically cylindrical. More precisely far away from the disc the space-time approaches the so called near horizon solution which has an appropriate space-like section displaying a cylindrical geometry (see Figure \ref{Fig1}). 
Concretely such slice of the near horizon geometry is diffeomorphic to $S^{2}\times \mathbb{R}$ and the  initial data on it (normalized to have Komar angular momentum equal to one) has metric $g$ and the second fundamental form $K$ given by
\ben
\left\{\begin{array}{l}
g= \frac{\displaystyle 4\sin^{2}\theta}{\displaystyle 1+\cos^{2}\theta}d\varphi^{2}+(1+\cos^{2}\theta)d\theta^{2} +(1+\cos^{2}\theta) d r^{2},\vs\vs\\
K=\frac{\displaystyle 2\sin^{2}\theta}{\displaystyle (1+\cos^{2}\theta)^{\frac{3}{2}}} (d\varphi d r+d r d\varphi)
\end{array}
\right.
\een
where $(\theta,\varphi)$ are the coordinates in $S^{2}$ and $r$ is the coordinate in the $\mathbb{R}$-factor. The metric has a translational symmetry along $r$ and is therefore cylindrical. The space-time metric of its globally hyperbolic development is
\begin{align*}
{\bf g}=&-\big(1+\cos^{2}\theta\big)d t^{2}+\big(1+\cos^{2}\theta\big)\big(r \tan t\, d t +d r\big)^{2}
\\
\nonumber &+\frac{4\sin^{2} \theta}{(1+\cos^{2}\theta)}\big(r d t-\varphi\big)^{2}+\big(1+\cos^{2}\theta\big)^{2}d \theta^{2}
\end{align*}
which contains a Cauchy horizon at $t=\pi/2$ and is not geodesically complete. This space-time has the peculiar property that its ergoregion (i.e. the region where the stationary Killing field is space-like) extends to infinity. However, it is not this but geodesically incompleteness what is a serious drawback to qualify as the far field metric of an ``isolated body" solution. Should one include in the definition of isolated body space-time also geodesically completeness? In many respects this is a natural condition. 

In these articles we are able to describe the asymptotic of stationary space-times representing isolated systems and having two characteristics. On one side we require the stationary Killing field to be time-like and to be bounded in norm away from zero outside an arbitrarily large but finite region surrounding the object. On the other hand we require space-time geodesic completeness (until the boundary) also in the exterior region. 
In this scenario we show that the space-time is necessarily AF with Schwarzschidian fall off. The method of proof of this result is such, that we can prescind altogether of the details of the material body, even of its existence, and work completely in the exterior vacuum region. These ``exterior" vacuum space-times are defined in Definition \ref{DEFEND} and called Strongly Stationary. 
All this is put in Section \ref{THEPRESTA} into a formal mathematical setup. 

To finish the introduction we would like to adventure a physical sentence. It appears from our findings that, granting General Relativity to be correct and granting the stationarity of the exterior regions of galaxies, then the existence of dark matter seems to be an inevitable fact. In other words it is not possible to explain the observed gravitational distortions around disc galaxies in terms of vacuum General Relativity alone. 

\subsection{The setup.}\label{THEPRESTA}

Strongly stationary ends are defined as follows.

\begin{Definition}\label{DEFEND} Let $({\bf M},{\bf g})$ be a smooth chronological solution of the vacuum Einstein equations with smooth boundary. Then, $({\bf M},{\bf g})$ is said to be a stationary space-time end if the following conditions are fulfilled, 
\begin{enumerate}[labelindent=\parindent,leftmargin=*,label={\rm (Q\arabic*)}]
\item[\rm (S1)] There is a time-like complete Killing field $X$ in ${\bf M}$ tangent to $\partial {\bf M}$ at $\partial {\bf M}$ such that the quotient of ${\bf M}$ by the orbits of $X$ is diffeomorphic to $\mathbb{R}^{3}$ minus an open ball,
\item[\rm (S2)] $({\bf M},{\bf g})$ is geodesically complete until the boundary, namely, geodesics either end at $\partial {\bf M}$ or are defined for infinite parametric time.
\end{enumerate}
A stationary space-time end $({\bf M},{\bf g})$ is said to be strongly stationary if in addition
\begin{enumerate}[labelindent=\parindent,leftmargin=*,label={\rm (Q\arabic*)}]
\item[\rm (S3)] There is a positive constant $c$ such that $-\langle X,X\rangle_{\bf g}\geq c$ all over ${\bf M}$.
\end{enumerate}
\end{Definition}

\n Above $\langle\ ,\ \rangle_{\bf g}$ is the ${\bf g}$-inner product. In the following discussion we summarize the mathematics of strong stationary ends $({\bf M},{\bf g})$ as seen in the quotient space. 
We refer to \cite{MR2003646} for a detailed account on stationary solutions. 

Let $E$ be the manifold that result from the quotient of ${\bf M}$ by $X$, let $\mathcal{\pi}:{\bf M}\rightarrow E$ be the projection and let $\tilde{g}$ be the quotient three-metric. By (S1) $E$ is diffeomorphic to $\mathbb{R}^{3}$ minus an open ball and by (S2) $(E,\tilde{g})$ is geodesically complete until the boundary (recall that geodesics in $(E,\tilde{g})$ can be lifted to geodesics in $({\bf M},{\bf g})$ perpendicular to $X$ and preserving the arc-length).
All the relevant components of the Einstein equations can be written in the quotient space in terms of $\tilde{g}$, 
the ``norm" $u$ of the Killing $X$, i.e. $u:=\sqrt{-\langle X,X\rangle_{\bf g}}$, and the twist one-form $\omega$ which is defined by 
\ben
\omega=\pi_{*}\big(\scalebox{.6}{$\bigstar$}(\frac{1}{2}\xi\wedge {\rm d}\, \xi\, )\, \big)
\een
where $\scalebox{.6}{$\bigstar$}$ is the ${\bf g}$-Hodge star and the one-form $\xi$ is the ${\bf g}$-dual of the Killing, i.e. $\xi:=\langle X,\  \rangle_{\bf g}$. 
In ${\bf M}$ the metric ${\bf g}$ is ${\bf g}=-(\xi\otimes \xi)/u^{2}+\pi^{*} g$ and in terms of the data $(g,\omega,u)$ an isommetric copy $(\hat{\bf M}, {\hat{\bf g}})$ of $({\bf M},{\bf g})$ can be obtained by making $\hat{\bf M}=\mathbb{R}\times E$ and $\hat{\bf g}=-u^{2}({\rm d}t+\theta)^{2}+\tilde{g}$, where here $t$ is the coordinate in the $\mathbb{R}$-factor and the one form $\theta$ (in $E$) is found by solving ${\rm d}\theta = - \star 2\omega/u^{4}$ where $\star$ is the $\tilde{g}$-Hodge star. 

In principle one can work with the variables $(\tilde{g},\omega,u)$ but, as it turns out, the Einstein equations display a rich structure when expressed instead in terms of the conformally transformed metric 
\ben
g:=u^{2}\tilde{g},
\een
the form $\omega$ and the function $u$. In terms of $(g,\omega,u)$ the vacuum Einstein equations are equivalent to (\cite{MR2003646})
\be\label{MEE}
\left\{
\begin{array}{l}
\ Ric=2\, \nabla \ln u \otimes \nabla \ln u +\frac{\displaystyle 2}{\displaystyle u^{4}}\, {\displaystyle \omega\otimes \omega},\\
\ \Delta \ln u=-\frac{\displaystyle 2\, \vert\, \omega\, \rvert^{2}}{\displaystyle u^{4}},\vs\\
\ {\rm div}\, \omega=4\langle \nabla \ln u, \omega \rangle,\vs\\
\ {\rm d}\, \omega=0
\end{array}
\right.
\ee
where $Ric$ is the Ricci tensor of $g$, $\Delta$ is the $g$-Laplacian, $|\ \ |$ denotes $g$-norm and $\langle\ ,\ \rangle$ denotes the $g$-inner product. In the last two equations $div\, \omega$ is the $g$-divergence of $\omega$ and $d\, \omega$ its exterior derivative.
The equations (\ref{MEE}) enjoy remarkable structures which will be introduced however as the article progresses.    
The data $(E; \tilde{g},\omega,u)$ or the equivalent data $(E;g,\omega,u)$ will be called a {\it stationary end} when $({\bf M},{\bf g})$ is a stationary end and a {\it strong stationary end} when $({\bf M},{\bf g})$ is strongly stationary.

By (S3) the space $(E,g)$ of a strongly stationary end is also geodesically complete (until the boundary). As a matter of fact, 
the assumption (S3) is introduced to guarantee the completeness of $(E,g)$. The analysis and the results of this article remain unchanged if instead of (S3) we impose just the completeness of $(E,g)$. We will recall this later when we comment Corollary \ref{COROLLL}.

To be explicit, the definition of asymptotically flatness with Schwarzschidian fall off that we adopt is the following (c.f. \cite{MR1314057}) 

\begin{Definition}\label{AFDEF} Let $(E; \tilde{g},\omega,u)$ be a stationary end. Then, it is asymptotically flat with Schwazschidian fall off if there is a coordinate system $\{x=(x^{1},x^{2},x^{3})\}$ covering $E$ up to a compact set such that
\ben 
\big|\, \delta_{ij}-\tilde{g}_{ij}\, \big|\leq K/|x|,\qquad \big|\, \partial_{k} \tilde{g}_{ij}\, \big|\leq K/|x|^{2},\qquad \big|\partial_{m}\partial_{k}\, \tilde{g}_{ij}\, \big|\leq K/|x|^{3},
\een
and,
\ben
\big|\, \partial_{i} u\, \big|+\big|\, \omega_{i}\, \big| \leq K/|x|^{2}
\een
plus further progressive power-law decay for the norms of the multiple $\partial$-derivatives of $\tilde{g}_{ij}$, $u$ and $\omega_{i}$, where here $K$ is a positive constant and $|x|$ is the norm of $x=(x^{1},x^{2},x^{3})$ as a vector in $\mathbb{R}^{3}$.
\end{Definition}

The definition is the same if instead of $(E;\tilde{g},\omega,u)$ we consider $(E;g,\omega,u)$. Before stating the main Theorem let us recall the definition of cubic volume growth. Denote by ${\mathcal T}_{g}(\partial E,r)$ the metric (tubular) neighborhood of $\partial E$ and radius $r>0$, that is, the set of points in $E$ at a distance (see Section \ref{NOTATION}) from $\partial E$  less than $r$. Then, $(E,g)$ is said to have cubic if $\lim Vol_{g}\big({\mathcal T}_{g}(\partial E,r)\big)/r^{3}=\mu>0$. Note that the limit always exists as a result of the Bishop-Gromov monotonicity of $Vol_{g}\big({\mathcal T}_{g}(\partial E,r)\big)/r^{3}$ due to the non-negativity of the Ricci curvature of $g$ (first equation in (\ref{MEE})). 

The purpose of this article and its sequel is then to prove:
\begin{Theorem}\label{MAINT}
Any strongly stationary end $(E; g,\omega,u)$ having cubic volume growth is asymptotically flat with Schwarzschidian fall off.
\end{Theorem}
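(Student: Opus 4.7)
The plan is to exploit the geometric and analytic structure of the quotient data simultaneously. The first equation of (\ref{MEE}) gives $\mathrm{Ric}(g)\geq 0$ since both $2\,\nabla\ln u\otimes\nabla\ln u$ and $(2/u^{4})\omega\otimes\omega$ are nonnegative symmetric tensors. Together with the cubic volume growth hypothesis, this places $(E,g)$ in the nonnegatively-curved, Euclidean-volume-growth regime where the Bishop–Gromov inequality asymptotically saturates. Step one would be to invoke the Cheeger–Colding structure theory to conclude that every tangent cone at infinity of $(E,g)$ is a metric cone $C(Y)$ over a compact two-dimensional length space $Y$ whose Hausdorff measure is comparable to $\mu$. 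Because $E$ is diffeomorphic to $\mathbb{R}^{3}$ minus a ball, one end is distinguished and the blow-down sequence $(E,r_i^{-2}g,p_i)$ can be taken along any diverging scale.

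Step two is to analyze the Killing data $(u,\omega)$ under rescaling. The function $\ln u$ is bounded below by $\tfrac{1}{2}\ln c$ and superharmonic (second equation of (\ref{MEE})); on a complete manifold with $\mathrm{Ric}\geq 0$ and cubic volume growth, the Yau/Cheng-type gradient estimates and a Harnack argument force $u$ to have a well-defined limit $u_{\infty}>0$ at infinity and $|\nabla\ln u|$ to decay. Feeding this into the Laplacian equation integrated against suitable cutoffs yields $\int |\omega|^{2}/u^{4}\,<\infty$, and combined with the closedness and divergence equations for $\omega$ (together with $\mathrm{Ric}\geq 0$ Bochner-type identities applied to $\omega$) one deduces pointwise decay of $\omega$. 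Consequently $\mathrm{Ric}(g)\to 0$ at infinity in an integrated sense.

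Step three is the rigidity of the tangent cone. Having $u\to u_{\infty}$ and $\omega\to 0$ in rescaled limits reduces the limiting equations to those of a Ricci-flat cone with cubic volume growth on a manifold diffeomorphic to the complement of a ball in $\mathbb{R}^{3}$; in dimension three this forces the cross-section $Y$ to be a round $S^{2}$ of area $4\pi$ and the cone to be flat $\mathbb{R}^{3}$. Thus $(E,g)$ is asymptotic to flat Euclidean space in the Gromov–Hausdorff and (by the $\varepsilon$-regularity of Cheeger–Colding on the smooth nonnegative Ricci side) in the $C^{1,\alpha}$ sense on annuli after rescaling. This yields a provisional asymptotically flat coordinate chart.

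The final step is the standard elliptic bootstrap: the coupled system (\ref{MEE}) for $(g,u,\omega)$ is quasi-linear elliptic in harmonic/wave coordinates adapted to the asymptotic chart, with source terms quadratic in the first derivatives of $u$ and in $\omega$. Once the leading-order behavior is flat, the decay of $u-u_{\infty}$ and $\omega$ is governed by Poisson-type equations on $\mathbb{R}^{3}$, giving the $1/|x|^{2}$ decay of $\partial u$ and $\omega$, and correspondingly the $1/|x|$ decay of $g_{ij}-\delta_{ij}$ (which transfers to $\tilde g_{ij}-\delta_{ij}$ via $g=u^{2}\tilde g$ since $u\to u_{\infty}$), with progressively faster decay of higher derivatives from Schauder estimates. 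The main obstacle I would expect is Step three: promoting the abstract Cheeger–Colding cone structure to the rigid conclusion that the cone is Euclidean (not merely a flat metric cone with a conical defect) requires genuinely using the strongly stationary structure — the lower bound on $u$, the topological condition, and the coupling in (\ref{MEE}) — to preclude angle deficits on the cross-sectional $S^{2}$.
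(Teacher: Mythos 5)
Your proposal takes a genuinely different route from the one the paper follows, and it has a real gap at exactly the point you flag.

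The paper splits Theorem~\ref{MAINT} across the two articles: Part~II is supposed to prove that a strongly stationary end with cubic volume growth is WAF (Definition~\ref{WAFD}), while Part~I (the present text) proves WAF $\Rightarrow$ AF with Schwarzschildian falloff. Your Steps 1--3 correspond in spirit to what Part~II must accomplish, and Step 4 is roughly the content of Part~I, but the techniques differ substantially. Part~I never invokes Cheeger--Colding structure theory; instead, once WAF provides nearly-Euclidean annuli, it works with a concrete hierarchy of decay estimates: equidistant spheres and a Riccati/focusing analysis for the shape operator (Proposition~\ref{LLI}), superharmonicity of $1/d^{\alpha}$ combined with the subharmonicity $\Delta R \geq R^{2}$ and the maximum principle to get $R \lesssim 1/d^{\alpha}$ (Propositions~\ref{PP},~\ref{PRDEC}), then a blow-down argument on the Ernst system~(\ref{LSCH}) treated as a \emph{linear} system with slowly varying coefficient $\zeta$, compared against the model problem of Proposition~\ref{PREPREL}, to bootstrap the decay exponent up to $4-\eta$ and finally to the clean $1/d^{4}$ rate. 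Your approach would trade this hands-on PDE bootstrap for abstract tangent-cone rigidity, which is an interesting alternative but is not what the paper does.

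The gap you yourself identify in Step~3 is genuine and is in fact the crux of the matter. Nonnegative Ricci plus cubic volume growth ($\mu>0$) with Cheeger--Colding only yields that tangent cones at infinity are flat metric cones over a length space of diameter $\leq\pi$; in dimension three this is a cone over a round $S^{2}(\kappa)$ with $\kappa\leq 1$, and nothing in Steps 1--2 forces $\kappa=1$. Cubic volume growth as defined in the paper does \emph{not} mean $\mu$ equals the Euclidean constant, so Bishop--Gromov does not ``asymptotically saturate''; a flat cone with angle deficit has $\mathrm{Ric}\geq 0$, cubic volume growth, and the complement of a ball has the right topology, so topology does not rule it out either. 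To exclude it one must use the coupled stationary structure in an essential way, and you do not say how. Separately, Step~2 asserts more than the equations give: $\ln u$ is superharmonic and bounded below, and Anderson's estimate gives $|\nabla\ln u|\lesssim 1/d$, but this is compatible with $\ln u$ growing logarithmically, so $u\to u_{\infty}$ does not follow from a Yau/Cheng gradient estimate or Harnack argument alone. The paper sidesteps this by only proving a \emph{local} Harnack bound on scaled annuli (item (II) in the proof of Proposition~\ref{PREL}), never claiming global convergence of $u$ before the decay of $R$ has been established. So the proposal's overall scaffolding is reasonable, but the two load-bearing steps — cone rigidity and convergence of $u$ — are asserted rather than proved, and the first of them is precisely the content the paper defers to Part~II.
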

Although we will keep including explicitly inside the statements that the strongly stationary solutions have cubic volume growth, it is a very important fact that this condition can be entirely removed due to very general geometric facts [arXiv:1212.1317]. In this point we found it important to distinguish between what requires the full structure of stationary solutions and what is indeed a property of a much more general character which as a matter of fact requires for its proof quite different techniques. 
The reason why this condition is unnecessary is the following. First, as remarked before, strongly stationary ends have non-negative Ricci curvature , and as proved by M. T. Anderson \cite{MR1806984}, they have also quadratic curvature decay. 
Second it was proved in [arXiv:1212.1317] that complete metrics in $\mathbb{R}^{3}$ minus a ball with non-negative Ricci curvature and quadratic curvature decay have cubic volume growth. 
The combination of these two facts and Theorem \ref{MAINT} shows that assuming cubic volume growth is unnecessary. We state this as a corollary to the Theorem \ref{MAINT}, this time in terms of the physical variables $(\tilde{g},\omega,u)$. The corollary is the most important consequence emerging out of this research.
\begin{Corollary}\label{MAINTI}
Any strongly stationary end $(E;\tilde{g},\omega,u)$ is asymptotically flat and has Schwarzschidian fall off.
\end{Corollary}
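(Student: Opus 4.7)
The plan is to reduce Corollary \ref{MAINTI} to Theorem \ref{MAINT} by showing that cubic volume growth is automatic for strongly stationary ends, and then to transfer the asymptotic conclusion from the conformal metric $g$ back to the physical metric $\tilde g = g/u^{2}$.

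First I verify the geometric hypotheses of the two external ingredients on $(E,g)$. The first line of (\ref{MEE}) exhibits $Ric(g)$ as a sum of two manifestly non-negative rank-one tensors, $2\,\nabla\ln u\otimes\nabla\ln u$ and $(2/u^{4})\,\omega\otimes\omega$, so $(E,g)$ has $Ric\geq 0$. Completeness of $(E,g)$ until $\partial E$ follows from (S3), which guarantees that $u$ is bounded below and therefore forces $g$- and $\tilde g$-completeness to agree (a point already recorded after Definition \ref{DEFEND}). Finally, $E$ is diffeomorphic to $\mathbb{R}^{3}$ minus an open ball by (S1).

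Second I invoke the two cited results in sequence. Anderson's theorem \cite{MR1806984} for stationary vacuum solutions supplies the quadratic curvature decay $|\mathrm{Rm}_{g}|(p)\leq C/\mathrm{dist}_{g}(p,\partial E)^{2}$. The companion paper [arXiv:1212.1317] then shows that any complete three-metric on $\mathbb{R}^{3}$ minus a ball satisfying $Ric\geq 0$ and quadratic curvature decay must have cubic volume growth in the sense of the definition preceding Theorem \ref{MAINT}. Hence all hypotheses of Theorem \ref{MAINT} are met for $(E;g,\omega,u)$, and we conclude AF with Schwarzschildian fall off in the conformal variables.

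Third I transfer this conclusion to the physical data $(\tilde g,\omega,u)$. The Schwarzschildian asymptotic of $g$ forces $u$ to approach a positive limit $u_{\infty}$ at rate $|u-u_{\infty}|=O(1/|x|)$, with matching Schwarzschildian control on its derivatives; together with $u\geq \sqrt{c}>0$ from (S3), this makes the conformal factor $u^{-2}$ asymptotically constant with the same power-law fall off. After a linear rescaling of the asymptotic coordinates by $u_{\infty}$, a direct chain-rule computation on $\tilde g_{ij}=u^{-2}g_{ij}$ transfers the estimates of Definition \ref{AFDEF} verbatim from $g$ to $\tilde g$. Because the two deep ingredients (Anderson's decay and the volume-growth theorem of the companion paper) are invoked as black boxes, the only conceptual obstacle in this corollary is checking these transfers; the real work lies in Theorem \ref{MAINT} itself.
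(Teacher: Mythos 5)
Your argument is correct and follows the paper's own route exactly: observe $Ric_g\geq 0$ from the first equation of (\ref{MEE}), cite Anderson's quadratic curvature decay, cite the companion paper [arXiv:1212.1317] to obtain cubic volume growth, and then apply Theorem \ref{MAINT}. The only addition is your explicit transfer of the asymptotic estimates from $g$ back to $\tilde g=u^{-2}g$, a step the paper simply declares to be automatic (``The definition is the same if instead of $(E;\tilde g,\omega,u)$ we consider $(E;g,\omega,u)$''), so your spelled-out check is a welcome but non-essential elaboration.
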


A relevant open question is whether stationary space-time ends are always strongly stationary or not. Or, in the light of Corollary \ref{MAINTI}, it is open whether stationary space-time ends are always AF or not. The following ``Gap Corollary" partially answers this question. To size the relevance of the result keep in mind the a priori estimates from \cite{MR1806984} according to which   
there is a universal constant ${\mathcal K}>0$ (so far unknown) such that for any strongly stationary end $(E;\tilde{g},\omega,u)$ we have 
\ben
\big|(\nabla \ln u)(p)\big|_{\tilde{g}}\leq \frac{\mathcal K}{\dist_{\tilde{g}}(p,\partial E)},\quad \big|\, \omega(p)\, \big|_{\tilde{g}}\leq \frac{\mathcal K}{\dist_{\tilde{g}}(p,\partial E)}\quad \text{and}\quad \big|Ric_{\tilde{g}}(p)\big|_{\tilde{g}}\leq \frac{\mathcal K}{\dist^{2}_{\tilde{g}}(p,\partial E)}
\een 
where $\dist_{\tilde{g}}(p,\partial E)$ is the $\tilde{g}$-distance from $p$ to $\partial E$ (see Section \ref{NOTATION}). 
\begin{Corollary}\label{COROLLL} 
%There is a universal constant ${\mathcal K}>0$ such that for any strongly stationary end $(E;\tilde{g},\omega,u)$ we have $|Ric_{\tilde{g}}(p)|\leq {\mathcal K}/d_{\tilde{g}}(p)^{2}$. 
Let $(E;\tilde{g},\omega,u)$ be a stationary end. If 
\be\label{DECAYU}
\big|(\nabla \ln u) (p)\big|_{\tilde{g}}\leq \frac{1}{\dist_{\tilde{g}}(p,\partial E)}
\ee
then the end is asymptotically flat with Schwazschidian fall off. 
\end{Corollary}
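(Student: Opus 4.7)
The plan is to reduce Corollary \ref{COROLLL} to Corollary \ref{MAINTI} by showing that the decay bound (\ref{DECAYU}) forces the conformal metric $g=u^{2}\tilde g$ to be complete until the boundary. The author has already noted, in the discussion following Definition \ref{DEFEND}, that (S3) enters the proofs of this paper only through its role of guaranteeing the completeness of $(E,g)$, and that completeness of $(E,g)$ may replace (S3) throughout the article without change. This reduction therefore suffices.

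The first step is a pointwise lower bound $u(p)\ge c_{0}/d(p)$, where $d(p):=\dist_{\tilde g}(p,\partial E)$. Since $\partial E\simeq S^{2}$ is compact (by (S1)) and $(E,\tilde g)$ is complete until the boundary (by (S2)), the metric sphere $\{p\in E:d(p)=1\}$ is compact, so the smooth positive function $u$ attains on it a positive minimum $c_{0}$. For $p$ with $d(p)\ge 1$, choose a $\tilde g$-minimizing geodesic $\gamma:[0,d(p)]\to E$ from $\partial E$ to $p$, so that $d(\gamma(s))=s$. Hypothesis (\ref{DECAYU}) then yields $|(d/ds)\ln u(\gamma(s))|\le 1/s$, and integrating from $s=1$ to $s=d(p)$ gives
$$u(p)\;\ge\; u(\gamma(1))\,\frac{1}{d(p)}\;\ge\;\frac{c_{0}}{d(p)}.$$

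The second step is the completeness of $(E,g)$. Let $\sigma:[0,L)\to E$ be a $\tilde g$-unit-speed divergent curve; by completeness of $(E,\tilde g)$ either $\sigma$ approaches $\partial E$, or $L=+\infty$. In the latter case $d(\sigma(s))\le d(\sigma(0))+s$, so for $s_{0}$ large enough that $d(\sigma(s))\ge 1$ on $[s_{0},\infty)$ we get
$$L_{g}(\sigma)\;\ge\;\int_{s_{0}}^{\infty}u(\sigma(s))\,ds\;\ge\;\int_{s_{0}}^{\infty}\frac{c_{0}}{d(\sigma(0))+s}\,ds\;=\;+\infty.$$
Hence no curve escapes to infinity in finite $g$-distance, $(E,g)$ is complete until the boundary, and the extended Corollary \ref{MAINTI} delivers asymptotic flatness with Schwarzschildian fall off.

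The main technical subtlety, which is also why the corollary stops short of settling the open question raised just before it, is the sharpness of the constant $1$ in (\ref{DECAYU}). Replacing it by any $C>1$ would only give $u\gtrsim d^{-C}$, making $\int d^{-C}ds$ convergent and collapsing the completeness argument above. Closing the gap between $C=1$ and the universal constant $\mathcal K$ of Anderson's a priori estimate is what is really needed to decide whether every stationary end is automatically strongly stationary, and it lies beyond the reach of this direct approach.
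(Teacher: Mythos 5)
Your argument follows exactly the route the paper sketches: integrate the hypothesis along $\tilde g$-minimizing geodesics to get $u(p)\ge c_{0}/\dist_{\tilde g}(p,\partial E)$, deduce completeness of $(E,g)$ from divergence of $\int c_{0}/(d(\sigma(0))+s)\,ds$, and then invoke the stated observation that completeness of $(E,g)$ may replace (S3) in Corollary \ref{MAINTI}. The details you supply (compactness of the level set $\{d=1\}$ giving $c_{0}>0$, the Lipschitz bound $d(\sigma(s))\le d(\sigma(0))+s$) are correct and are precisely what the paper leaves to the reader.
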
 
\n This shows that if one can prove that the universal constant ${\mathcal K}$ must be a priori less or equal than one then every stationary end is AF with Schwarzschidian fall off. The question is undoubtedly of fundamental importance.
The proof of the Corollary, whose details are left to the reader, is done by showing that under (\ref{DECAYU}) we have $u(p)\geq c/\dist_{\tilde{g}}(p,\partial E)$ for some constant $c>0$ and for all $p$ with $\dist_{\tilde{g}}(p,\partial E)\geq 1$, and then observing from this that if $\tilde{g}$ is complete then so is $g=u^{2}\tilde{g}$. As commented before this is enough to get the same conclusions as in Corollary \ref{MAINT}. 

Before passing to the more technical sections let us glance on the contents of each of the two articles. 

In this first article (Part I) we work with Weakly Asymptotically Flat (WAF) stationary ends and prove that WAF ends are AF with Schwarzschidian decay. The notion of WAF end is a generalization of that of ``weakly decaying solutions" as defined by D. Kennefick and N. \'O Murchadha in \cite{MR1314057}. Roughly speaking we are defining as much as could be allowed a notion of ``decaying into flat" without relying on any coordinate system and power-law decay for the curvature. Incidentally, by proving that WAF ends are AF we are answering a question raised in  {\it Remark 2} of \cite{MR1314057} which (quoted) says: ``The $\delta < 0$ condition classically means that we are willing to consider any metric which decays to flat space as $1/r^{\varepsilon}$ for any $\varepsilon$ . Obviously one would like to replace this with `going flat' and not require any kind of power law decay. It is difficult to see how this might be achieved; none of the present battery of weighted spaces (classical, Holder, .., ) seem suitable." [\footnote{We thanks H. Friedrich for pointing out this reference.}]. 
The definition of WAF ends is given in Section \ref{WAFECHP}. In Section \ref{PREWAFE} we work out the main properties of WAF ends and reach the conclusion that to prove that they are AF it is indeed enough to prove that the so called $\varepsilon$-flat ends, which enjoy much nicer properties, are AF. Finally in Section \ref{SKTD} it is proved the main result of the Part I, namely
\begin{Theorem}\label{FINALT}
Every weakly asymptotically flat end is asymptotically flat with Schwarzschidian fall off.
\end{Theorem}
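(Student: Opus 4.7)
By the reduction carried out in Sections \ref{WAFECHP}--\ref{PREWAFE}, which replaces a general WAF end by an $\varepsilon$-flat end, it suffices to establish the conclusion for $\varepsilon$-flat ends. I would work throughout in the conformal picture $(E,g,\omega,u)$ governed by system (\ref{MEE}), where $Ric\geq 0$ as a sum of two rank-one tensors, $\ln u$ is superharmonic, and $\omega$ is closed. Condition (S3), combined with the superharmonicity of $\ln u$ on a complete manifold of non-negative Ricci, gives that $u$ is bounded above and below and in fact converges to a finite positive limit $u_{\infty}$ at infinity; Anderson's a priori estimates \cite{MR1806984} already provide quadratic curvature decay.

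The first substantive step is to install good coordinates at infinity. From $\varepsilon$-flatness, bounded and decaying curvature, and the cubic volume growth hypothesis, Cheeger--Gromov harmonic-radius estimates yield harmonic charts on dyadic annuli $A_{R}=\{R\leq \dist(\cdot,\partial E)\leq 2R\}$ in which the components $g_{ij}$ have $C^{1,\alpha}$ norms controlled uniformly in $R$. Because $\omega$ is closed and $A_{R}$ has trivial first cohomology, one can locally write $\omega=\dif \psi$, and the system (\ref{MEE}) becomes a quasilinear elliptic system for $(g_{ij},u,\psi)$ to which linear elliptic theory on annuli applies.

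The heart of the argument is a rescaling/blow-down scheme. Rescale $g$ by $R^{-2}$ on $A_{R}$; by quadratic curvature decay and cubic volume growth the rescaled family is precompact in the pointed Gromov--Hausdorff sense, and every subsequential limit is a Ricci-flat metric cone on a region of $\mathbb{R}^{3}$, hence flat Euclidean. Tangent-cone uniqueness together with a three-annulus/monotonicity argument would then upgrade this qualitative convergence to a definite geometric decay $|g-\delta|=O(1/|x|^{\sigma})$ with some $\sigma>0$. Feeding this initial rate back into the elliptic system in harmonic coordinates and iterating weighted Schauder estimates, whose admissible exponents are governed by the indicial roots of the flat Laplacian, upgrades the decay to $|g-\delta|=O(1/|x|)$ and $|\partial u|,|\omega|=O(1/|x|^{2})$, with higher-derivative Schwarzschildian bounds following by standard elliptic regularity. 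The conclusion for the physical data then follows via $\tilde g = g/u^{2}$ and the positive limit of $u$.

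The main obstacle, by a wide margin, is the very first quantitative improvement: converting the non-quantitative smallness provided by $\varepsilon$-flatness into a definite power-law rate $\sigma>0$ without a priori invoking a weighted space. This is precisely the difficulty flagged in the Kennefick--\'O Murchadha remark quoted in the introduction, and ruling out logarithmically slow behaviour appears unavoidable here. I expect closing this gap will require exploiting the precise algebraic form of the Ricci tensor in (\ref{MEE}), namely that it is built from exactly the quantities $\nabla\ln u$ and $\omega$ whose decay one is trying to prove, together with the maximum principle for $\ln u$ and a monotonicity formula for $\int_{A_{R}}|Ric|$, so that each passage to a larger annulus produces a genuine (multiplicative) gain rather than a merely additive one.
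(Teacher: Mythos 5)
Your broad outline matches the paper in structure: reduce WAF to $\varepsilon$-flat ends, secure a first quantitative power-law decay, then bootstrap to Schwarzschildian fall off. You also correctly locate the critical difficulty exactly where the paper does, namely in converting the non-quantitative $\varepsilon$-smallness into a definite power-law rate $\sigma>0$. But that step is left entirely open in your write-up, and the mechanism you gesture at — tangent-cone uniqueness plus a three-annulus/monotonicity argument, or a monotonicity formula for $\int_{A_R}|Ric|$ — is neither what the paper uses nor evidently viable here; in the absence of a \L ojasiewicz-type inequality for this specific system, tangent-cone uniqueness alone does not yield a rate. This is precisely the Kennefick--\'O Murchadha obstruction you quote, so acknowledging it without closing it leaves the proof with a genuine gap at its decisive point.

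The paper closes that gap by an explicit barrier argument that exploits the algebraic structure of (\ref{MEE}) far more concretely than anything in your sketch. First, Proposition \ref{PP} uses the $\varepsilon$-flatness (through the mean-curvature estimate for the equidistant spheres) to show $\bigl|\Delta d-2/d\bigr|\leq 2\varepsilon/d$ far out, so that $1/d^{\alpha}$ with $\alpha=1-2\varepsilon$ is superharmonic outside a compact set. Second, and crucially, the inequality $\Delta R\geq R^{2}$ from the Bochner formula for the harmonic map $\mathcal{E}$ makes $R$ subharmonic; together with the fact that $R\to 0$ at infinity, the maximum principle forces $\overline{R}(r)=\sup_{S(r)}R$ to be nonincreasing. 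Comparing the subharmonic $R/\overline{R}(\bar r)$ against the superharmonic barrier $1/d_{\bar r}^{\alpha}$ then gives the clean initial rate $\overline{R}(r)\leq(\bar r/r)^{\alpha}\overline{R}(\bar r)$ (Proposition \ref{PRDEC}). The subsequent upgrade to the near-sharp $4-\eta$ decay is also done differently from your weighted-Schauder/indicial-roots proposal: the Ernst equation is recast as a linear first-order system $\mathrm{div}\,\chi=\langle\zeta,\chi\rangle$, $\mathrm{d}\chi=0$ for $\chi=\nabla\mathcal{E}$; rescaled along dyadic annuli and combined with a Harnack bound on $u$, the $\chi_{\bar r}$ converge to a harmonic one-form $\psi$ on a Euclidean annulus, and Proposition \ref{PREPREL} (a classical Liouville/multipole-expansion fact) forces $|\psi|=O(1/|x|^{2})$, hence $R=O(1/d^{4-\eta})$ since $R\sim|\chi|^{2}/u^{4}$. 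These two mechanisms — the $\Delta R\geq R^{2}$ barrier and the Ernst-gradient blow-up — are the substantive content of the theorem and are absent from your proposal.
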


In the second article (Part II) instead we work with strongly stationary ends and prove
\begin{Theorem}
Every strongly stationary end having cubic volume growth is weakly asymptotically flat. 
\end{Theorem}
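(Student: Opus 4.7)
The strategy I would pursue is a blowdown (tangent cone at infinity) analysis. Given a strongly stationary end with cubic volume growth, the first step is to collect the geometric consequences of the system (\ref{MEE}): the Ricci tensor of $g$ is non-negative (being a sum of two rank-one non-negative pieces), and by Anderson's theorem \cite{MR1806984} one obtains the quadratic curvature decay $|Rm|(p)\leq K/\dist_{g}(p,\partial E)^{2}$. Together with cubic volume growth and the uniform lower bound on $u$ provided by (S3), these are exactly the hypotheses needed to apply Cheeger--Colding type compactness for non-collapsed Ricci limits, upgraded to smooth convergence by the ellipticity of (\ref{MEE}).

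For any divergent sequence $p_{k}\in E$ with $r_{k}:=\dist_{g}(p_{k},\partial E)\to\infty$, I would study the pointed rescalings $(E,r_{k}^{-2}g,p_{k})$. Bishop--Gromov monotonicity combined with cubic volume growth furnishes uniform non-collapsing on unit balls around $p_{k}$ in the rescaled metric, while quadratic curvature decay gives uniform two-sided curvature bounds on fixed-size balls sitting a definite rescaled distance away from $\partial E$. Elliptic regularity applied to (\ref{MEE}) (treating $\ln u$, $\omega$ and the Ricci equation as a coupled quasi-linear elliptic system, with $u$ bounded away from zero) then yields $C^{k}$ pre-compactness of the rescaled ends for every $k$.

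Any sub-sequential limit $(E_{\infty},g_{\infty},p_{\infty})$ is a smooth complete manifold with $\mathrm{Ric}_{g_{\infty}}\geq 0$ whose volume ratio on every ball saturates the Euclidean value (this is where cubic growth, together with Bishop--Gromov monotonicity, is used). The volume-cone-implies-metric-cone theorem then forces $(E_{\infty},g_{\infty})$ to be a metric cone, and the saturation of the Bishop bound in dimension three identifies it with $\mathbb{R}^{3}$. Moreover the limit still solves (\ref{MEE}), and the Ricci-flat condition $\mathrm{Ric}_{g_{\infty}}=0$ forces $\nabla\ln u_{\infty}=0$ and $\omega_{\infty}=0$, so the limit data is fully rigid: flat metric, constant potential, vanishing twist. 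This is the rigidity statement one wants at infinity. Translating it back, large annuli $A_{R}=\{R\leq\dist_{g}(\cdot,\partial E)\leq 2R\}$ of $E$, rescaled by $R^{-2}$, converge smoothly to a Euclidean annulus, which is the coordinate-free ``going flat'' statement designed to match the WAF definition of Section \ref{WAFECHP}.

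The main obstacle I expect is the last step: passing from smooth \emph{sub-sequential} convergence along arbitrary scales to a single coherent asymptotic regime for the end itself, as required by the definition of WAF. A priori different scales or basepoints could produce different tangent cones, and one must rule this out. This needs a uniqueness-of-tangent-cone argument: one exploits the fact that in the Ricci-flat limit the rigidity $\nabla\ln u_{\infty}=0$, $\omega_{\infty}=0$ removes all the ``stationary'' degrees of freedom at infinity, so the only candidate limit is $\mathbb{R}^{3}$ with constant $u$; then a contradiction/continuity argument, using the topological constraint from (S1) that the end sits inside $\mathbb{R}^{3}$ minus a ball (so there is a single end, simply connected at infinity), upgrades sub-sequential convergence to convergence along all scales. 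Controlling the behaviour of $u$ and $\omega$ uniformly in the blowdown --- in particular showing that $u$ converges to a definite constant rather than drifting --- is where the bulk of the technical work will sit, and where the strong stationarity hypothesis (S3) plays its decisive role.
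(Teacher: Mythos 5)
The paper you are working from does not actually contain a proof of this statement: it is stated at the end of Section~\ref{THEPRESTA} explicitly as the theorem to be proved in the \emph{second} article (Part~II). So there is no internal proof to compare against; I can only assess the internal coherence of your proposal and its consistency with the outline the author gives. On that level the broad program is right and matches all the hints in the paper: non-negative Ricci from the first equation of (\ref{MEE}), Anderson's a~priori quadratic curvature decay, Bishop--Gromov together with cubic volume growth to rule out collapse, elliptic regularity from the system (\ref{MEE}) to upgrade to smooth precompactness, and a blowdown to flat annuli, which is precisely what the WAF definition packages.

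There is, however, a genuine logical gap in the middle. You write that cubic volume growth plus Bishop--Gromov forces the blowdown limit to have volume ratio saturating the Euclidean value, and then read off that the limit is $\mathbb{R}^{3}$, from which Ricci-flatness and the rigidity $\nabla\ln u_{\infty}=0$, $\omega_{\infty}=0$ follow. This is circular. The hypothesis is only that $\lim Vol_{g}\big({\mathcal T}_{g}(\partial E,r)\big)/r^{3}=\mu>0$; nothing a priori pins $\mu$ to the Euclidean constant, so Bishop--Gromov gives you a metric cone of opening angle determined by $\mu$, not $\mathbb{R}^{3}$. The identification of the cone with flat space must come the other way around, from the limit of the stationary system itself: on a cone $dr^{2}+r^{2}g_{\Sigma}$ the radial Ricci component vanishes, and combined with the first equation in (\ref{MEE}) this forces $\nabla_{\partial_r}\ln u_{\infty}=0$ and $\omega_{\infty}(\partial_r)=0$; the superharmonicity $\Delta\ln u_{\infty}=-2|\omega_{\infty}|^{2}/u_{\infty}^{4}\leq 0$ on the compact cross-section then makes $u_{\infty}$ globally constant and $\omega_{\infty}\equiv0$, whence $Ric_{\infty}=0$, the link is the round $S^{2}$, and only \emph{then} does the cone become $\mathbb{R}^{3}$ (and $\mu$ the Euclidean constant a posteriori). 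You should reorder the argument accordingly. Beyond that, you correctly flag that the WAF definition (W2) requires convergence of the \emph{entire} rescaled sequence, not merely a subsequence, so the uniqueness-of-tangent-cone step is not optional; as stated your proposal identifies the obstacle but does not close it, and in particular your appeal to (S1) and simple connectivity at infinity is not by itself enough to exclude oscillation between distinct scales. You should also be explicit about what normalization is imposed on $u$ and $\omega$ in the blowdown (the paper's own remark under ``elliptic estimates'' indicates the scaling $\bar u=u/u(p)$, $\bar\omega=\omega/u(p)^{2}$), since without a fixed normalization the claim that $u_{\infty}$ is a definite constant is not even well posed.
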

Hence, after the two articles we would have proved, as explained before, that strongly stationary ends are AF with Schwarzschidian fall off.

\vs
\section{Background material I}\label{NOTATION}

We collect here the material required for the technical discussions. We introduce too the most relevant terminology and notation. From now on our main variables will be $(g,\omega,u)$.

\vs
{\sc Distance.} 

\vs
- The distance between two points $p$ and $q$ in a connected manifold $(M,g)$ is $\dist_{g}(p,q)=\inf\big\{\length_{g}(\mathscr{C}_{p,q}),\ \mathscr{C}_{p,q}\ \text{a}\ C^{1}\ \text{curve in}\ M\ \text{joining}\ p\ \text{to}\ q\big\}$. $(M,g)$ is said complete if $(M,\dist_{g})$ is complete as a metric space. 
The distance from a point $p$ to a set $\Omega\subset M$ will be denoted by $\dist_{g}(p,\Omega)=\big\{\dist_{g}(p,q),q\in \Omega\big\}$. More general the distance between two sets $\Omega_{1}$ and $\Omega_{2}$ is denoted by $\dist_{g}(\Omega_{1},\Omega_{2})=\inf\big\{\dist_{g}(p,q),p\in\Omega_{1}, q\in \Omega_{2}\big\}$ [\footnote{Properly speaking this is not a metric in the subsets of $M$. In particular the distance is zero if for instance they share a point but are different sets.}].

- The metric induced on stationary ends $(E,g)$ will be noted by $\dist(p,q)$ and always without the subindex $g$. The distance function to the boundary $\partial E$ of stationary ends will be denoted with total exclusivity by $d(p)$ or simply $d$, that is, $d(p)=\dist(p,\partial E)=\inf\big\{\dist(p,q),q\in \partial E\big\}$.  

\vs
{\sc Scaling.} 

\vs
- For any $r>0$ we will denote by $g_{r}$ to the scaled metric
\ben
g_{r}:=\frac{1}{r^{2}}\, g
\een
Tensors and metric quantities constructed out of $g_{r}$ will be sub-indexed with an $r$. For instance, for the scalar curvature we have $R_{g_{r}}=R_{r}=R/r^{2}$ and for the Ricci curvature $Ric_{g_{r}}=Ric_{r}=Ric$ (although $Ric_{r}=Ric$ we will keep including the subindex $r$). Also, $d_{r}(p)=d(p)/r$. {\it This way of notating will be used extensively all through the article and is crucial keeping track of it.}

\vs
{\sc Area, second fundamental form and mean curvature.} 

\vs
- The Riemannian metric induced on compact embedded surfaces $S\subset E$ will be denoted by $h$ and the $h$-area of $S$ by $A(S)$. Following the notation introduced before, the metric induced in $S$ from $g_{r}$ is denoted by $h_{r}$ and the $h_{r}$-area of $S$, i.e. $A(S)/r^{2}$, is denoted by $A_{r}(S)$.   

- The second fundamental form of $S$ (fixed some normal) will be denoted by $\Theta$ and the mean curvature ${\rm tr}_{h}\Theta$ by $\theta$. Again, for the second fundamental form and the mean curvature of $S$ found from the scaled metric $g_{r}$ we will use $\Theta_{r}(=r\Theta)$ and $\theta_{r}(=\theta/r)$ respectively.  

\vs
{\sc Annuli and metric annuli.} 

\vs
-  For any $0<a<b$ we will denote by ${\mathcal A}(a,b)$ (resp. ${\mathcal A}[a,b]$) the set
\ben
{\mathcal A}(a,b)=\big\{p\in E/a<d(p)<b\big\},\quad \text{(resp.}\ {\mathcal A}[a,b]=\big\{p\in E/a\leq d(p)\leq b\big\}\text{)} 
\een
and call it {\it the open (resp. closed) metric annulus of radii $a$ and $b$}. The notation ${\mathcal A}(a,b)$ (resp. ${\mathcal A}[a,b]$) will always refer to open (resp. closed) metric annuli defined with respect to the unscaled metric $g$ but the subindex $r$ is included when the (open or closed) metric annuli are defined with respect to the scaled metric $g_{r}=g/r^{2}$, namely
\ben
{\mathcal A}_{r}(a,b)=\big\{p\in E/ a<d_{r}(p)<b\big\}\quad \text{and}\quad {\mathcal A}_{r}[a,b]=\big\{p\in E/ a\leq d_{r}(p)\leq b\big\}
\een
This is consistent with the notation introduced before. Note that for all $r>0$ we have ${\mathcal A}(ar,br)={\mathcal A}_{r}(a,b)$ and ${\mathcal A}[ar,br]={\mathcal A}_{r}[a,b]$. 

- Standard open annuli in $\mathbb{R}^{3}$ will be denoted by $\AR(a,b)$, namely,
\ben
\AR(a,b)=\big\{x\in\mathbb{R}^{3},a<|x|<b\big\} = B_{\mathbb{R}^{3}}(o,b)\setminus \overline{B_{\mathbb{R}^{3}}(o,a)}
\een
where for any $c>0$, $B_{\mathbb{R}^{3}}(o,c)$ is the open ball of center the origin $o=(0,0,0)$ and radius $c$ in $\mathbb{R}^{3}$.
As before closed annulus in $\mathbb{R}^{3}$ are denoted by $\AR[a,b]=\big\{x\in \mathbb{R}^{3},a\leq |x|\leq b\big\}$. 

-A manifold $\Omega$ is an open (resp. closed) annulus if $\Omega$ is diffeomorphic to $\AR(1,2)$ (resp. $\AR[1,2]$). 
A metric annulus doesn't have to be necessarily an open annulus in this sense. In general, the shape of the metric annuli can be wild. 

All these notations will be used extensively.

\vs
{\sc The Ernst equation.} 

\vs
- The second and third equations of (\ref{MEE}) can be grouped in the so called Ernst equation. Let $\phi$ be a potential for $\omega$, i.e. $d\phi=\omega$. Define the complex function ${\mathcal E}=u^{2}+2\phi i$. Then the Ernst equation is
\be\label{ERNST0}
\Delta {\mathcal E}=2\frac{\displaystyle \langle\nabla {\mathcal E},\nabla {\mathcal E}\rangle}{\displaystyle {\mathcal E}+{\mathcal E}^{*}}
\ee
where ${\mathcal E}^{*}$ is the complex conjugate. Making $\chi:=\nabla {\mathcal E}$ and $\zeta:= (\nabla {\mathcal E})/u^{2}$, the equation (\ref{ERNST0}) can be written in the form of the linear system in $\chi$,
\be\label{LSCH}
\left\{
\begin{array}{l}
div\, \chi = \langle\zeta,\chi\rangle,\\
d\, \chi=0
\end{array}
\right.
\ee
where we are thinking $\zeta$ simply as a coefficient. This viewpoint of the Ernst equation will be important and will be recalled in the proof of the Proposition (\ref{PREL}).

\vs
{\sc Curvature.} 

\vs
- A first fundamental property derived from the first equation in (\ref{MEE}) is
\ben
|\, Ric\, |^{\, 2}\leq R^{\, 2}\leq 2|\, Ric\, |^{\, 2}
\een
which says that the scalar curvature $R$ fully controls the Ricci curvature $Ric$. Here $|Ric|^{2}=Ric_{ab}Ric^{ab}$. 
A second fundamental property is
\be\label{SUPERHAR}
\Delta R\geq R^{2}
\ee
which says that the scalar curvature is subharmonic. The proof of (\ref{SUPERHAR}) is given in \cite{MR1806984} (pg. 987-988) by manipulating the Bochner-type of formula for the energy density $e=R$ of the harmonic map ${\mathcal E}=(2\phi,u^{2})$ from $E$ into the half space model of the hyperbolic two-space $\mathbb{H}^{2}=\big(\{(x,y),y>0\},({\rm d}x^{2}+{\rm d}y^{2})/y^{2}\big)$.

- Another essential property of the curvature of stationary solutions is M. T. Anderson's a priori curvature decay \cite{MR1806984}. Applied to ends it says that there is a universal constant ${\mathcal K}>0$ such that for any strong stationary end $E$ we have [\footnote{There is a caveat here. The curvature estimate provided in {\bf Theorem 0.2} of \cite{MR1806984} is (as written) for the space-time metric and not for the metric $g$. However the proof of that Theorem is achieved by proving first the estimate $|Ric_{g}(p)|\leq {\mathcal K}/\dist^{2}_{g}(p,\partial M)$ (see c.f. {\bf Step I} in \cite{MR1806984}) that is all what we need here.}],
\ben
|Ric(p)|\leq {\mathcal K}/\displaystyle d^{2}(p)
\een
In particular for any $p\in {\mathcal A}_{r}(a,b)$, the Ricci curvature of the scaled metric $g_{r}$ is bounded as $|Ric_{r}(p)|_{r}\leq {\mathcal K}/a^{2}$. 

\vs
{\sc Norms and convergence of Riemannian manifolds.} 

\vs
- Given a tensor field $U$ (of any valence) on a region $\Omega$ of a manifold $(M,g)$, the $C^{i}_{g}$-norm of $U$ over $\Omega$ is defined as
\ben
\| U\|_{C^{i}_{g}(\Omega)}:=\sup_{p\in \Omega} \sum_{j=0}^{j=i} \big|\big(\nabla^{j} U\big)(p)\big|_{g}
\een
Of course $\|U\|_{C^{i}_{g}}\leq \|U\|_{C^{i+1}_{g}}$. The subindex $g$ will be suppressed when $\Omega$ is a region of the Euclidean three-space, namely we will write $C^{i}$. 

- All what we will need about convergence of smooth Riemannian manifolds will be restricted to the following definition (which is not the most general \cite{MR2243772}). Let $(\Omega_{m},g_{m})$ be a sequence of smooth, compact, connected three-manifolds with smooth boundary and let $(\Omega_{\infty},g_{\infty})$ be also smooth, compact, connected three-manifold with smooth boundary. Then, $(\Omega_{m},g_{m})$ converges to $(\Omega_{\infty},g_{\infty})$ in $C^{i}$, $i\geq 2$, if there are diffeomorphisms 
$\varphi_{m}:\Omega_{\infty}\rightarrow \Omega_{m}$ such that 
%\ben
$\big\| \varphi_{m}^{*}\, g_{m} - g_{\infty}\big\|_{C^{i}_{g_{\infty}}(\Omega_{\infty})}\rightarrow 0$
%\een
where $\varphi^{*}_{m} g_{m}$ is the pull-back of $g_{m}$ by $\varphi_{m}$. 
The definition is the same if we do not require compactness on the $\Omega_{m}$ and $\Omega_{\infty}$ but assume uniformly bounded diameters.
A sequence of smooth tensors $U_{m}$ converge to a smooth tensor $U_{\infty}$ in $C^{i}$, $i\geq 0$, if 
%\ben
$\big\| \varphi_{m}^{*}\, U_{m} - U_{\infty}\big\|_{C^{i}_{g_{\infty}}(\Omega_{\infty})}\rightarrow 0$
%\een
%

\vs
{\sc Elliptic estimates.} 

\vs
- Stationary solutions $(M;g,\omega,u)$ satisfy important regularity properties derived directly from the elliptic system (\ref{MEE}). 
\begin{Proposition} Let $(M;g,\omega,u)$ be a stationary solution and let $p\in M$. Suppose that all over $M$ we have
$R\leq R_{0}$ and that at $p$ we have $inj_{g}(p)\geq I_{0}>0$ and $u(p)+1/u(p)\leq \Lambda_{0}$, 
where $inj_{g}(p)$ is the injectivity radius at $p$. Then, for any $i\geq 0$ there are  constants $K^{i}_{1}(R_{0},I_{0})>0$ and $K^{i}_{2}(R_{0},I_{0},\Lambda_{0})>0$ such that  
\ben
\|Ric\|_{C^{i}_{g}(B)}
%(p,I_{0}/2))}
\leq K_{1}^{i}\qquad \text{and}\qquad \|\omega\|_{C^{i}_{g}(B)}+ \| u\|_{C^{i+1}_{g}(B)} \leq K_{2}^{i} 
\een
where $B=B_{g}(p,I_{0}/2)$. 
\end{Proposition}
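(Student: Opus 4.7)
The plan is to combine Anderson's harmonic-radius theorem with a standard Ernst-assisted elliptic bootstrap on the system (\ref{MEE}). First I would extract pointwise control of $u$ and $\omega$ directly from the trace of the first equation of (\ref{MEE}): $R=2|\nabla \ln u|^{2}+2|\omega|^{2}/u^{4}$. This gives $|\nabla \ln u|\leq \sqrt{R_{0}/2}$ and $|\omega|\leq \sqrt{R_{0}/2}\,u^{2}$ globally on $M$. Integrating the Lipschitz bound on $\ln u$ along a minimising geodesic from $p$ and invoking $u(p)+1/u(p)\leq \Lambda_{0}$ yields a two-sided bound $c_{1}\leq u\leq C_{1}$ on $B=B_{g}(p,I_{0}/2)$, and hence $|\omega|\leq C_{2}$ there, with constants depending only on $(R_{0},I_{0},\Lambda_{0})$.

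Next, since $|Ric|^{2}\leq R^{2}\leq R_{0}^{2}$ globally and $inj_{g}(p)\geq I_{0}$, Anderson's harmonic-radius theorem furnishes, about every point of $B$, a harmonic coordinate chart of definite size $r_{0}(R_{0},I_{0})>0$ in which $\|g_{ij}-\delta_{ij}\|_{C^{1,\alpha}}+\|g^{ij}\|_{C^{0}}\leq C(R_{0},I_{0})$. Covering $B$ by finitely many such charts reduces the task to interior estimates on each chart. In harmonic coordinates the first equation of (\ref{MEE}) takes the quasi-linear elliptic form
\ben
-\tfrac{1}{2}g^{kl}\partial_{k}\partial_{l}g_{ij}+Q_{ij}(g,\partial g)=Ric_{ij},
\een
whose right-hand side is polynomial in $\partial u/u$ and $\omega/u^{2}$.

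For the matter fields I would use the Ernst reformulation. On each (simply-connected) coordinate ball pick a potential $\phi$ with $d\phi=\omega$; then $\mathcal{E}=u^{2}+2i\phi$ satisfies (\ref{ERNST0}), which in the same harmonic coordinates becomes a uniformly elliptic quasi-linear diagonal system in $(u,\phi)$ with $g^{kl}$-principal coefficients in $C^{1,\alpha}$ and zeroth-order term bounded thanks to $u\geq c_{1}$. A standard interior $W^{2,p}$/Schauder bootstrap then proceeds by alternation: regularity gained for $(u,\phi)$ through Ernst feeds the Ricci system, which in turn upgrades the coefficients of Ernst. Inductively on $i$ one obtains
\ben
\|g_{ij}\|_{C^{i,\alpha}}+\|u\|_{C^{i+1,\alpha}}+\|\phi\|_{C^{i+1,\alpha}}\leq K(i,R_{0},I_{0},\Lambda_{0})
\een
on each chart. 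Converting coordinate derivatives to $g$-covariant derivatives (permissible once $g$ is controlled to all orders) and assembling the chart estimates gives the stated bounds on $\|Ric\|_{C^{i}_{g}(B)}$, $\|\omega\|_{C^{i}_{g}(B)}=\|d\phi\|_{C^{i}_{g}(B)}$ and $\|u\|_{C^{i+1}_{g}(B)}$.

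The main subtlety is the coupling: the Ricci equation contains quadratic first-order terms in $(u,\omega)$, while the equations for $(u,\omega)$ see $g$ through the Laplacian and divergence. The Ernst reformulation is what keeps the bootstrap routine, since (\ref{ERNST0}) is elliptic in $\mathcal{E}$ alone with $g$-dependent coefficients, and simultaneously harmonic coordinates turn the Ricci equation into a quasi-linear elliptic system in $g$ alone. Beyond this decoupling, no ingredient outside standard interior elliptic regularity is required.
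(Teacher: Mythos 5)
The paper gives no proof of this Proposition; it is asserted to follow from standard elliptic theory applied to (\ref{MEE}), and the only explicit remark is the scaling argument for the $\Lambda_{0}$-independence of $K_{1}^{i}$. Your approach (pointwise control from the trace of the Ricci equation, Anderson's harmonic-radius theorem, and an Ernst-assisted Schauder bootstrap in harmonic coordinates) is exactly the standard route the paper has in mind, and its skeleton is sound.

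There is, however, one genuine gap: as written, your bootstrap controls $u$, $\phi$, and $g_{ij}$ (hence $Ric$) by a constant that depends on $\Lambda_{0}$, because you anchor $u$ at $p$ using $u(p)+1/u(p)\leq\Lambda_{0}$ and then feed $u$ and $\omega$ themselves into the Ricci equation. But the statement claims $\|Ric\|_{C^{i}_{g}(B)}\leq K_{1}^{i}(R_{0},I_{0})$, independent of $\Lambda_{0}$, and this is precisely the point the paper singles out in the remark following the Proposition. Your argument does not deliver this. To close the gap you should either invoke the scaling $(g,\omega,u)\mapsto(g,\omega/u(p)^{2},u/u(p))$, which leaves $g$ and hence $Ric$ unchanged while normalizing $\bar{u}(p)=1$ (so one may run your bootstrap with $\Lambda_{0}=2$), or, equivalently, bootstrap the scale-invariant quantities $\ln u$ and $\omega/u^{2}$ rather than $u$ and $\omega$: the trace of the first equation of (\ref{MEE}) gives $|\nabla\ln u|^{2}+|\omega/u^{2}|^{2}\leq R_{0}/2$ with no reference to $\Lambda_{0}$, the second equation is $\Delta\ln u=-2|\omega/u^{2}|^{2}$, and a short computation shows $\omega/u^{2}$ satisfies a first-order div--curl system with coefficients built only from $\nabla\ln u$. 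Either fix makes the bound on $Ric$ manifestly $\Lambda_{0}$-free, after which $\Lambda_{0}$ enters only when you convert bounds on $\ln u$ and $\omega/u^{2}$ back into bounds on $u$ and $\omega$, giving $K_{2}^{i}(R_{0},I_{0},\Lambda_{0})$ as stated.

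A smaller point worth making explicit: the injectivity-radius hypothesis is only at $p$, so to obtain a harmonic chart of definite size around \emph{every} point of $B_{g}(p,I_{0}/2)$ you should pass through a volume lower bound (injectivity radius at $p$ plus the Ricci bound gives a volume lower bound for $B_{g}(p,I_{0})$, Bishop--Gromov propagates it to small balls about nearby points, and Anderson's theorem is formulated in terms of volume). You gesture at this but it deserves a sentence.
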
 

It is important to remark that the constants $K_{1}^{i}$ which bound the $C^{i}_{g}$-norm of $Ric$ do not depend on $\Lambda_{0}$. 
This is because any stationary solution can be scaled to the stationary solution $(g,\bar{\omega},\bar{u})=(g,\omega/u(p)^{2},u/u(p))$ which has $\bar{u}(p)=1$.  

%For this same reason, there is $K_{3}^{i}(R_{0},I_{0})>0$ such that the one form $\zeta=(\nabla {\mathcal E})/u^{2}=2du/u+i\omega/u^{2}$ (see {\sc The Ernst equation} above) which is invariant under the scaling before, has $C^{i}_{g}$-norm over $B_{g}(p,I_{0}/2)$ bounded by $K_{3}^{i}$. 

In the Part I no much will be required about elliptic estimates because most of the necessary estimates are already contained in the definition of WAF end. 
The following however is a simple application that will be used in the proof of Proposition \ref{PREL}.
Let $(\Omega_{m};g_{m},\omega_{m},u_{m})$ be a sequence of stationary solutions and suppose that $(\Omega_{m},g_{m})$ converges in $C^{2}$ to the flat annulus $(\AR(a,b),g_{\mathbb{R}^{3}})$. Then, $g_{m}$ converges to $g$ also in $C^{i}$ for any $i\geq 2$ and without the necessity of taking a subsequence. Moreover there are scalings $\bar{\omega}_{m}:= \lambda^{2}_{m}\omega_{m}$ and $\bar{u}_{m}:= \lambda_{m}u_{m}$ such that $\bar{\omega}_{m}$ and $\bar{u}_{m}$ converge in $C^{i-2}$ and $C^{i-1}$ to the one-form zero and a the constant function one respectively. In particular the scale invariant one-forms $\zeta_{m}=2(u_{m}\nabla u_{m}+i\omega_{m})/u_{m}^{2}$ converge in $C^{i-1}$ to the one-form zero.

\section{Weakly Asymptotically Flat ends}\label{WAFECHP}

\vs
\begin{Definition}\label{WAFD} A stationary end $E$ is weakly asymptotically flat (WAF) if it is strongly stationary and 
for every $i\geq 2$, $l\geq 1$ and divergent sequence $r_{m}\rightarrow \infty$, there is a sequence of open annuli $\Omega_{m}\subset E$ 
such that, 
\begin{enumerate}[labelindent=\parindent,leftmargin=*,label={\rm (W\arabic*)}]
\item ${\mathcal A}_{r_{m}}(1/2,2^{l})\subset \Omega_{m}$ for every $m$,

\item $(\Omega_{m},g_{r_{m}})$ converges in $C^{i}$ to  the flat annulus 
$\big({\mathcal A}_{\mathbb{R}^{3}}(1/2,2^{l}),g_{\mathbb{R}^{3}}\big)$,

\item The scaled distance functions $d_{r_{m}}$ (restricted to $\Omega_{m}$) converge in $C^{0}$ to the distance to the origin in $\mathbb{R}^{3}$ (restricted to ${\mathcal A}_{\mathbb{R}^{3}}(1/2,2^{l})).$

\item Every $\overline{\Omega}_{m}$ is a closed annulus and separates $\partial E$ from infinity, namely, $\partial E$ belongs to a bounded component of $E\setminus \overline{\Omega}_{m}$ for all $m$.

\end{enumerate}
\end{Definition}

\vs
The Figure \ref{Fig1} illustrates a WAF end together with some annuli $\Omega_{m}$. 

It can be shown that the condition (W4) is in fact redundant. Namely, if we define WAF ends exactly as in Definition \ref{WAFD} but removing (W4), then one such end would comply also with the Definition \ref{WAFD} including (W4).
The proof of that is not relevant to us and is skipped.

To be clear from the start, let us recall that the convergences in (W2) and (W3) express the existence of diffeomorphisms 
$
\varphi_{m}:\AR(1/2,2^{l})
\rightarrow \Omega_{m}
$
such that the metrics $\varphi_{m}^{*} g_{r_{m}}$ over $\AR(1/2,2^{l})$
converge in $C^{i}$ to the flat Euclidean metric and that, at the same time, the functions $d_{r_{m}} \circ \varphi_{m}$ over $\AR(1/2,2^{l})$ converge in $C^{0}$ to the distance function to the origin. 

We stress that in (W2) the convergence is for the entire sequence $(\Omega_{m},g_{r_{m}})$ and not just for a subsequence of it. Note also that by (W3) the sets $\varphi_{m}^{-1}({\mathcal A}_{r_{m}}(1/2,2^{l}))$ tend to cover the whole $\AR(1/2,2^{l})$, in the sense that for every $1/2>\varepsilon>0$ there is $m_{\varepsilon}$ such that if $m\geq m_{\varepsilon}$ then $\AR(1/2+\varepsilon,2^{l}-\varepsilon)\subset \varphi_{m}^{-1}({\mathcal A}_{r_{m}}(1/2,2^{l}))$   
The reason why we need the regions $\Omega_{m}$ in Definition \ref{WAFD} is essentially technical and not particularly important. As we show in Part II, strong stationary ends with cubic volume growth are WAF in the sense of Definition \ref{WAFD} and, to stick to the definition of convergence, the existence of the diffeomorphisms $\varphi_{m}$ is guaranteed only into the regions $\Omega_{m}$ (which cover tightly the metric annuli ${\mathcal A}_{r_{m}}(1/2,2^{l})$) but unfortunately not into the metric annuli ${\mathcal A}_{r_{m}}(1/2,2^{l})$.

\begin{figure}[h]
\centering
\includegraphics[width=9cm,height=7cm]{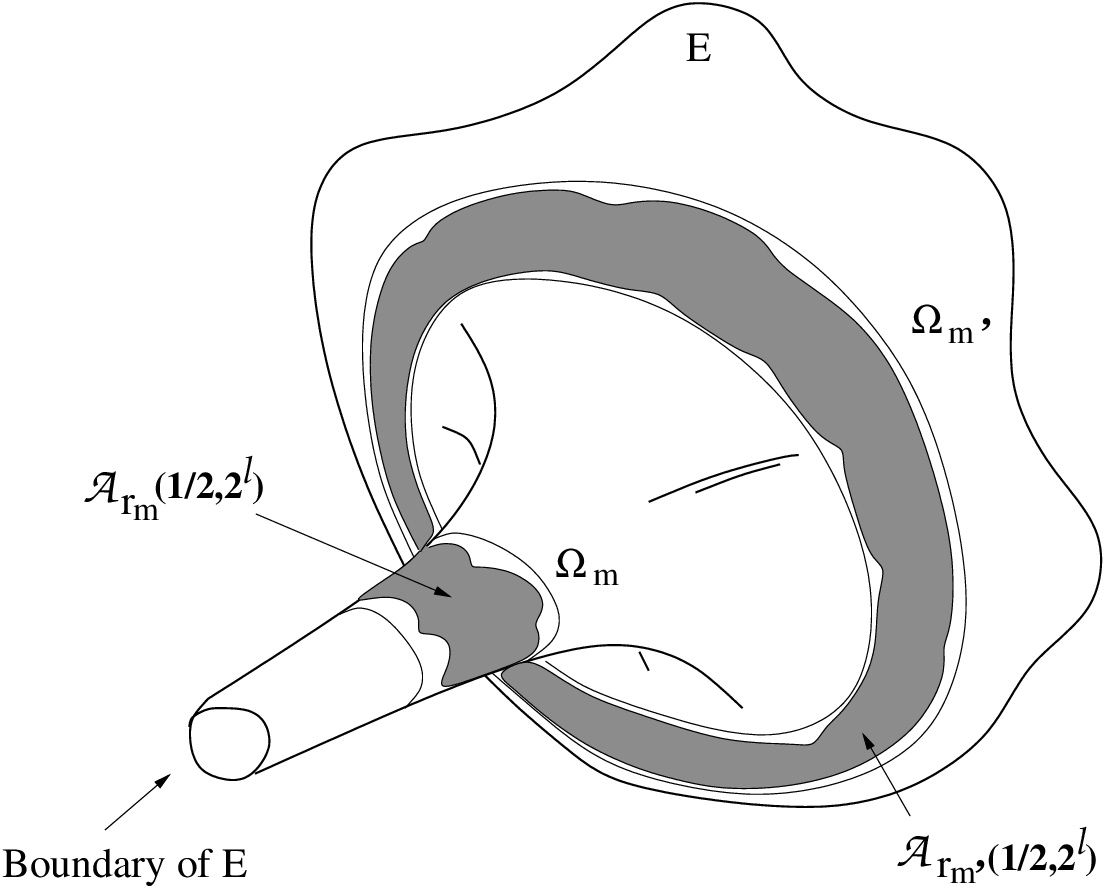}
\caption{Representation of a WAF end along with the annuli $\Omega_{m}$ and the metric annuli ${\mathcal A}_{r_{m}}(1/2,2^{l})$}
\label{Fig1}
\end{figure} 

WAF ends have a number of technical advantages allowing to prove the standard Schwarzschidian decay in a more comfortable setup. The next section discusses some basic properties of WAF ends.   

\subsection{Preliminaries of WAF ends}\label{PREWAFE}

The final goal of of this section is to prove Proposition \ref{COROL}. This proposition says that one can always restrict the domain of a given WAF end $E$ by removing from it an annulus (with one boundary component $\partial E$) 
to get an end simpler to handle for its nice geometric properties. This type of WAF ends, which are defined in Definition \ref{DEFEF}, we call $\varepsilon$-flat. 

The first proposition says that the curvature of WAF ends decays faster than quadratically. 

\begin{Proposition} Let $E$ be a WAF end. Then, for any divergent sequence of points $p_{m}$ (i.e. $d(p_{m})\rightarrow \infty$) we have
\ben
\lim_{m\rightarrow \infty} |Ric(p_{m})|d^{2}(p_{m})=0
\een
\end{Proposition}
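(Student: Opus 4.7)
The plan is to scale at the rate $r_m := d(p_m)$ so that the distance-to-boundary function at $p_m$ becomes one in the scaled metric, and then to read off the conclusion directly from the $C^i$-flatness of the rescaled annular geometry.

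Concretely, set $r_m = d(p_m) \to \infty$. Then $d_{r_m}(p_m) = d(p_m)/r_m = 1$, so $p_m \in \mathcal{A}_{r_m}(1/2, 2)$. Apply the WAF definition with, say, $l = 1$ and $i = 2$ to this sequence $r_m$: there exist open annuli $\Omega_m \supset \mathcal{A}_{r_m}(1/2, 2)$ and diffeomorphisms $\varphi_m \colon \AR(1/2, 2) \to \Omega_m$ with
\ben
\bigl\|\varphi_m^{*} g_{r_m} - g_{\mathbb{R}^{3}}\bigr\|_{C^{2}(\AR(1/2,2))} \longrightarrow 0, \qquad \bigl\|d_{r_m} \circ \varphi_m - |\cdot|\bigr\|_{C^{0}(\AR(1/2,2))} \longrightarrow 0.
\een
Since $p_m \in \Omega_m$ and $\varphi_m$ is surjective onto $\Omega_m$, the point $q_m := \varphi_m^{-1}(p_m)$ is well-defined, and by (W3) combined with $d_{r_m}(p_m) = 1$ we get $|q_m| \to 1$. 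In particular $\{q_m\}$ stays in a compact subset of $\AR(1/2, 2)$, and after passing to a subsequence $q_m \to q_\infty$ with $|q_\infty| = 1$.

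The next step is to transfer the smooth flat convergence to a Ricci statement. Because the Ricci tensor depends algebraically on the metric and its first two derivatives, $C^2$-convergence $\varphi_m^{*} g_{r_m} \to g_{\mathbb{R}^{3}}$ forces $\varphi_m^{*}\,Ric_{r_m} \to Ric_{g_{\mathbb{R}^3}} = 0$ in $C^{0}$ on $\AR(1/2, 2)$; the same convergence makes the tensor norm $|\cdot|_{\varphi_m^{*} g_{r_m}}$ converge uniformly to $|\cdot|_{g_{\mathbb{R}^3}}$. Evaluating at $q_m \to q_\infty$ and using diffeomorphism-invariance of the pointwise norm,
\ben
|Ric_{r_m}(p_m)|_{r_m} \;=\; \bigl|\varphi_m^{*} Ric_{r_m}(q_m)\bigr|_{\varphi_m^{*} g_{r_m}} \longrightarrow 0.
\een

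Finally, one converts the scaled norm back to the original one. Since $g_{r_m} = g/r_m^{2}$, the inverse metric picks up a factor $r_m^{2}$ on each index, so for the $(0,2)$-tensor $Ric_{r_m} = Ric$ one has $|Ric_{r_m}(p_m)|_{r_m} = r_m^{2}\,|Ric(p_m)|_{g} = d^{2}(p_m)\,|Ric(p_m)|_{g}$. Combining with the previous display gives $d^{2}(p_m)|Ric(p_m)| \to 0$ along the chosen subsequence. As the argument can be applied to any subsequence of the original $\{p_m\}$, the full limit vanishes, which is the claim.

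The main (though fairly mild) obstacle is bookkeeping: one must be sure that $p_m$ genuinely sits in the image of $\varphi_m$ and that, after pulling back, it remains in a fixed compact subset of $\AR(1/2,2)$ on which the $C^{2}$-convergence can be used to conclude uniform convergence of $Ric$. This is exactly what (W3) is designed to provide, together with the choice $r_m = d(p_m)$ that centers $p_m$ in the middle of the annulus.
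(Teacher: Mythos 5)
Your proposal follows exactly the same route as the paper's proof: scale by $r_m=d(p_m)$ so that $p_m\in{\mathcal A}_{r_m}(1/2,2)$, invoke (W2) with $l=1$, $i=2$ to get $C^2$-convergence of $(\Omega_m,g_{r_m})$ to the flat annulus, and conclude $|Ric_{r_m}(p_m)|_{r_m}\to 0$, i.e. $d^2(p_m)|Ric(p_m)|\to 0$. You merely spell out the bookkeeping (location of $q_m=\varphi_m^{-1}(p_m)$ via (W3), the scale factor $r_m^2$ for a $(0,2)$-tensor norm, the subsequence remark) that the paper leaves implicit; the argument is correct.
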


\vs
\n Thus, if we write $|Ric(p)|=o(p)/d^{2}(p)$, then $o(p)\rightarrow 0$ when $d(p)\rightarrow \infty$. 

\begin{proof}[\bf Proof.] Let $p_{m}$ be any divergent sequence of points and let $r_{m}=d(p_{m})$. From the definition of WAF end (with $l=1$ and $i=2$), there is a sequence of regions $\Omega_{m}$, with $p_{m}\in {\mathcal A}_{r_{m}}(1/2,2)\subset \Omega_{m}$ for every $m$, such that $(\Omega_{m},g_{r_{m}})$ converges in $C^{2}$ to a flat annulus. Therefore $|Ric_{r_{m}}(p_{m})|_{r_{m}}\rightarrow 0$, or, equivalently, $|Ric(p_{m})| d^{2}(p_{m})\rightarrow 0$. 
\end{proof}

\vs
\begin{Proposition}\label{AAA} Let $E$ be a WAF end. Then, for every $\delta>0$ there is $r_{\delta}>0$ and a two-sphere $S_{\delta}$ separating $\partial E$ from infinity such that,
\begin{enumerate}[labelindent=\parindent,leftmargin=*,label={\rm (A\arabic*)}]
\item $1-\delta \leq \inf\big\{d_{r_{\delta}}(p)/p\in S_{\delta}\big\}\leq \sup \big\{d_{r_{\delta}}(p)/p\in S_{\delta}\big\}\leq 1+\delta$,

\item For all $p\in S_{\delta}$ we have $|\widehat{\Theta}_{r_{\delta}}(p)|_{r_{\delta}}\leq \delta$ and $|\theta_{r_{\delta}}(p)-2|\leq \delta$,

\item For all $p$ in the unbounded component $E_{S_{\delta}}$ of $E\setminus S_{\delta}$ we have 
\ben
|Ric_{r_{\delta}}(p)|_{r_{\delta}}\leq \frac{\delta}{(1+\dist_{r_{\delta}}(p,S_{\delta}))^{2}}
\een 
\end{enumerate}
\end{Proposition}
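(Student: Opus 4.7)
The plan is to produce $S_\delta$ as the $\varphi_m$-image of a Euclidean unit sphere, where $\varphi_m$ is the diffeomorphism furnished by Definition \ref{WAFD}, and then to read off (A1)--(A2) from flat convergence and derive (A3) from the previous proposition combined with the separation property of $S_\delta$.

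First I would apply Definition \ref{WAFD} with $l=1$, $i=2$, and a divergent sequence $r_m\to\infty$, obtaining open annuli $\Omega_m \supset \mathcal{A}_{r_m}(1/2,2)$ and diffeomorphisms $\varphi_m:\AR(1/2,2)\to\Omega_m$ such that $\varphi_m^* g_{r_m}\to g_{\mathbb{R}^3}$ in $C^2$ and $d_{r_m}\circ\varphi_m\to |x|$ in $C^0$. Define $S_m:=\varphi_m(\{|x|=1\})$. Since $\bar\Omega_m$ separates $\partial E$ from infinity by (W4) and $S_m$ is the image of a middle separating sphere in the model annulus, $S_m$ itself separates $\partial E$ from infinity. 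For each $\delta>0$ I would pick $m=m(\delta)$ large and set $r_\delta:=r_m$, $S_\delta:=S_m$.

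The first two conclusions are then routine. For (A1), $C^0$ convergence on the compact set $\{|x|=1\}$ gives $\sup_{S_m}|d_{r_m}-1|\to 0$. For (A2), the $C^2$ convergence of the metrics forces $C^1$ convergence of Christoffel symbols and of unit normals, hence $C^0$ convergence of the second fundamental form and mean curvature of $\varphi_m(\{|x|=1\})$ in $(\Omega_m,g_{r_m})$ to those of the round unit sphere in Euclidean space, for which $\widehat\Theta=0$ and $\theta=2$. Taking $m$ large enough settles the two inequalities in (A2) with slack $\delta$.

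The real work is (A3). By the previous proposition the scale-invariant quantity $o(p):=|Ric(p)|\, d^{2}(p)$ tends to $0$ as $d(p)\to\infty$, so enlarge $r_\delta$ further to ensure $o(p)\leq \delta/4$ whenever $d(p)\geq (1-\delta)r_\delta$. The key geometric input is that $S_\delta$ separates $\partial E$ from infinity: a first-crossing argument on any curve from $p\in E_{S_\delta}$ to $\partial E$ gives $d(p)\geq \dist(p,S_\delta)+\inf_{S_\delta} d\geq \dist(p,S_\delta)+(1-\delta)r_\delta$ by (A1). Dividing by $r_\delta$ produces $d_{r_\delta}(p)\geq \dist_{r_\delta}(p,S_\delta)+(1-\delta)$, hence for $\delta\leq 1/2$ we get $d_{r_\delta}(p)\geq (1+\dist_{r_\delta}(p,S_\delta))/2$, and
\ben
|Ric_{r_\delta}(p)|_{r_\delta}=\frac{o(p)}{d_{r_\delta}^{2}(p)}\leq \frac{4\, o(p)}{(1+\dist_{r_\delta}(p,S_\delta))^{2}}\leq \frac{\delta}{(1+\dist_{r_\delta}(p,S_\delta))^{2}},
\een
as required. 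The most delicate step is the separation argument needed to justify the reverse triangle inequality globally on $E_{S_\delta}$; this is where (W4), combined with the fact that $S_\delta\subset\Omega_m$ still separates, is essential, and it is also what forces the small cost factor $(1-\delta)$ to appear in the distance comparison.
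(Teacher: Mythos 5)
Your proposal is essentially the paper's own proof: extract $S_\delta$ as $\varphi_m(\partial B(o,1))$ from the WAF convergence with $l=1$, $i=2$; read (A1),(A2) off the $C^0$ and $C^2$ convergences; use (W4) to get separation; use the first-crossing argument to obtain $d(p)\geq \dist(p,S_\delta)+\dist(S_\delta,\partial E)$ and hence $d_{r_\delta}(p)\geq \dist_{r_\delta}(p,S_\delta)+1-\delta$; and combine this with the faster-than-quadratic curvature decay to get (A3). The only (trivial) loose end is that your algebraic step requires $\delta\leq 1/2$, and the statement is asserted for all $\delta>0$; the paper dispatches large $\delta$ by simply reusing the sphere and radius built for a small $\delta$ (e.g. $S_\delta:=S_{1/8}$, $r_\delta:=r_{1/8}$ when $\delta\geq 1/4$), and you should append the analogous one-line remark.
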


\vs
\n We recall from the notation in Section \ref{NOTATION} that in (A2) above, $\widehat{\Theta}_{r_{\delta}}$ is the traceless part of the second fundamental form $\Theta_{r_{\delta}}$ and $\theta_{r_{\delta}}$ is the mean curvature of the surface $S_{\delta}$ as a surface embedded in the Riemannian space $(E,g_{r_{\delta}})$ (hence the subindex $r_{\delta}$). To define $\Theta_{r_{\delta}}$ we assume a normal in $S_{\delta}$ inwards to $E_{S_{\delta}}$. 

\begin{proof}[\bf Proof.] Assume at the moment that $0<\delta<1/4$. This simplifies the algebra a little. We remove it at the end.
We start observing that, as the curvature of WAF stationary ends decays faster than quadratically, then: 

\vs
(i) {\it For every $0<\delta<1/4$ there is $\hat{r}_{\delta}>0$ such that for every $p$ with $d(p)\geq \hat{r}_{\delta}$ we have 
%$|Ric_{r}(p)|_{r}\leq \delta/4$ or, equivalently, 
$|Ric(p)|\leq \delta/4d^{2}(p)$ (the $4$ in the denominator is also for algebraic convenience).} 

\vs
\n Second, we observe that using Definition \ref{WAFD} (with $i=2$ and $l=1$) we can guarantee that: 

\vs
(ii) {\it For every $0<\delta<1/4$ there is $r_{\delta}\geq 2\hat{r}_{\delta}$ ($\hat{r}_{\delta}$ as in (i)), an open annulus $\Omega \supset {\mathcal A}_{r_{\delta}}(1/2,2)$, and a diffeomorphism $\varphi: \AR(1/2,2)\rightarrow \Omega$ such that the metric $\varphi^{*} g_{r_{\delta}}$ is sufficiently close in $C^{2}$ to the euclidean metric and the function $d_{r_{\delta}}\circ \varphi$ is sufficiently close in $C^{0}$ to the distance function to the origin in ${\mathbb{R}^{3}}$, that if we define $S_{\delta}:=\varphi(\partial B_{\mathbb{R}^{3}}(o,1))$ then (A1) and (A2) hold.} 

\vs
\n Thus, with this definition of $S_{\delta}$ we have already (A1) and (A2). That $S_{\delta}$ separates $\partial E$ from infinity is direct because, by (W4), $\overline{\Omega}$ is a closed annulus in $E$ separating $\partial E$ from infinity 
[\footnote{It is an exercise in topology to prove
that, as $E\sim \mathbb{R}^{3}\setminus B_{\mathbb{R}^{3}}(o,1)$ and as $\overline{\Omega}$ is a closed annulus embedded in $E$ separating $\partial E$ from infinity, then $E\setminus \Omega$ consists of two pieces, one which is a closed annulus containing $\partial E$ and the other which is diffeomorphic to $\mathbb{R}^{3}\setminus B_{\mathbb{R}^{3}}(o,1)$.}]. 
It remains to prove (A3). We do that now. For any $p\in E_{S_{\delta}}$ (the unbounded component of $E\setminus E_{S_{\delta}}$) we have
\be\label{OPOBOU}
d(p)\geq \dist(p,S_{\delta})+\dist(S_{\delta},\partial E)
\ee
To see this consider a geodesic segment from $p$ to $\partial E$ whose length realizes $d(p)$. Such segment must intersect $S_{\delta}$  in (at least) one point $q$. Hence $d(p)=\dist(p,q)+\dist(q,\partial E)\geq \dist(p,S_{\delta})+\dist(S_{\delta},\partial E)$ as wished. Dividing the expression (\ref{OPOBOU}) by $r_{\delta}$ we obtain
%\ben
$
d_{r_{\delta}}(p)\geq \dist_{r_{\delta}}(p,S_{\delta})+\dist_{r_{\delta}}(S_{\delta},\partial E)\
$
%\een
and from this and (A1) we get 
\be\label{LLG}
d_{r_{\delta}}(p)\geq \dist_{r_{\delta}}(p,S_{\delta})+1-\delta. 
\ee
Observe then that as $0<\delta<1/4$ we have
$d_{r_{\delta}}(p)\geq 1-\delta>1/2$ and thus $d(p)>r_{\delta}/2$. But as in (ii) we assumed $r_{\delta}\geq 2\hat{r}_{\delta}$ then $d(p)\geq\hat{r}_{\delta}$ and we can use (i). Use now (\ref{LLG}) in $|Ric_{r_{\delta}}(p)|_{r_{\delta}}\leq \delta/4d^{2}_{\delta}(p)$, which we obtain from (i) by scaling $|Ric(p)|\leq \delta/4d^{2}(p)$, to deduce
\ben
|Ric_{r_{\delta}}(p)|_{r_{\delta}}\leq \frac{\delta/4}{(d_{r_{\delta}}(p)+1-\delta)^{2}}\leq \frac{\delta}{(\dist_{r_{\delta}}(p,S_{\delta})+1)^{2}}
\een
where the second inequality holds as long as $0<\delta<1/4$ as we are assuming. This shows (A3) as wished.

When $\delta\geq 1/4$ then define $r_{\delta}:=r_{1/8}$ and $S_{\delta}:=S_{1/8}$. With this definition (A1)-(A3) are immediate because $\delta>1/8$ (for instance $|\widehat{\Theta}_{r_{\delta}}(p)|_{r_{\delta}}=|\widehat{\Theta}_{r_{1/8}}(p)|_{r_{1/8}}\leq 1/8\leq \delta$).\end{proof}

\vs
Let $(M,g)$ be a complete Riemannian manifold with $M$ diffeomorphic to $\mathbb{R}^{3}$ minus a ball. At least for a $t_{*}>0$ small, the equidistant sets ${\mathcal S}(t):=\{p\in M, \dist(p,\partial M)=t\}$, where $t\leq t_{*}$, are embedded spheres. For any $p\in {\mathcal S}(t)$ with $0\leq t\leq t_{*}$ denote by $\theta(p)$ and by $\widehat{\Theta}(p)$ to the mean curvature and traceless part of the second fundamental form of ${\mathcal S}(t)$ at $p$ (in the direction of increasing $t$) respectively.   
We will use this notation (c.f. {\sc second fundamental form} in Section \ref{NOTATION}) in the statement of the next proposition which will be used later in conjunction with Proposition \ref{AAA} to deduce Proposition \ref{COROL}.
In the item (B2) inside the statement below we let $\const>0$ be any numeric constant such that $|Rm|\leq 4\const |Ric|$ (recall that in dimension three the Ricci curvature determines the Riemann curvature by an algebraic formula). The constant $\const$ plays some algebraic role later but is not particularly important. 

\vs
\begin{Proposition} \label{LLI} For any $\varepsilon>0$ there is $\delta>0$ such that if a complete smooth Riemannian three-manifold    
$(M,g)$, with $M$ diffeomorphic to $\mathbb{R}^{3}$ minus a ball, satisfies 
\begin{enumerate}[labelindent=\parindent,leftmargin=*,label={\rm (B\arabic*)}]
\item $|\theta(p)-2|\leq \delta$ and $|\widehat{\Theta}(p)|\leq \delta$ for all $p\in \partial M$, and,
\item $|Ric(p)|\leq \delta/\big(\const(1+\dist(p,\partial M))^{2}\big)$ for all $p\in M$,
\end{enumerate}
then, 
\be\label{LUB}
|\theta(p) -\frac{2}{1+t}|\leq \frac{\varepsilon}{1+t}\qquad \text{and}\qquad |\widehat{\Theta}(p)|\leq \frac{\varepsilon}{1+t}
\ee
as long as $p\in S(t)=\{p/\dist(p,\partial M)=t\}$ and $t\leq t_{*}$, where $t_{*}>0$ is such that for any $t\in [0,t_{*}]$ the equidistant sets $S(t)$ are embedded surfaces.
\end{Proposition}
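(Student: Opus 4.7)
The plan is to parametrize the shell $\bigcup_{0\le t\le t_*}S(t)$ by the outward unit-speed geodesics $\gamma(t)$ issuing perpendicular to $\partial M$ at $\gamma(0)$; by the embeddedness hypothesis these cover the shell diffeomorphically, reducing everything to ODE's along each $\gamma$. Along any such $\gamma$ the shape operator $S$ of $S(t)$ (with outward unit normal $\nu$) satisfies the Riccati equation $\dot S+S^{\,2}+R_\nu=0$, with $R_\nu(X)=R(\nu,X)\nu$. Splitting $S=\widehat S+\tfrac{\theta}{2}\,\mathrm{Id}$ and using the dimension-two identity $\widehat S^{\,2}=\tfrac12|\widehat S|^{\,2}\,\mathrm{Id}$, this separates into
\begin{align*}
\dot\theta &= -\tfrac12\theta^{\,2}-|\widehat S|^{\,2}-\mathrm{Ric}(\nu,\nu),\\
\dot{\widehat S} &= -\theta\,\widehat S-\widehat R_\nu.
\end{align*}
Writing $w:=|\widehat S|$ and using $|\widehat R_\nu|\le|Rm|\le 4\const|\mathrm{Ric}|$, a standard energy estimate in the second line gives the scalar inequality $\dot w+\theta w\le 4\const|\mathrm{Ric}|$ (interpreted one-sidedly where $w=0$, or via $\sqrt{w^{\,2}+\eta^{\,2}}$-regularisation).

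I next linearise around the flat radial model $\theta=f(t):=2/(1+t)$, $\widehat S\equiv 0$, which satisfies $f'=-\tfrac12 f^{\,2}$. Setting $\alpha:=\theta-f$ one computes directly
$$
\alpha'+f\alpha=-\tfrac12\alpha^{\,2}-w^{\,2}-\mathrm{Ric}(\nu,\nu),\qquad w'+fw\le|\alpha|\,w+4\const|\mathrm{Ric}|.
$$
The crucial observation is that $(1+t)^{\,2}=\exp\int_0^t 2f\,ds$ is the exact integrating factor for the linear parts of both equations, and that (B2) reads $|\mathrm{Ric}|\le\delta/[\const(1+t)^{\,2}]$, so multiplication by $(1+t)^{\,2}$ converts the curvature source into a bounded constant. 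The initial conditions from (B1) are $|\alpha(0)|,\,w(0)\le\delta$.

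The estimate (\ref{LUB}) is then closed by a continuity/bootstrap argument on $[0,t_*]$. On the maximal subinterval $[0,T^*]$ on which the bootstrap hypothesis
$$(1+t)\,|\alpha(t)|\le\varepsilon\quad\text{and}\quad(1+t)\,w(t)\le\varepsilon$$
holds (which is nonempty provided $\delta\le\varepsilon$, by (B1)), I multiply the two inequalities above by $(1+t)^{\,2}$, substitute the bootstrap bounds and (B2) into the right-hand sides and integrate from $t=0$ using (B1) for the initial data. This produces
$$
\bigl|\bigl((1+t)^{\,2}\alpha\bigr)'\bigr|\le 2\varepsilon^{\,2}+\delta/\const,\qquad \bigl((1+t)^{\,2}w\bigr)'\le \varepsilon^{\,2}+4\delta,
$$
and hence
$$
(1+t)\,|\alpha(t)|\le\delta(1+1/\const)+2\varepsilon^{\,2},\qquad (1+t)\,w(t)\le 5\delta+\varepsilon^{\,2}.
$$
First shrinking $\varepsilon$ so that $2\varepsilon^{\,2}<\varepsilon/2$ (it suffices to replace the given $\varepsilon$ by $\min(\varepsilon,1/20)$) and then choosing $\delta$ small enough in terms of $\varepsilon$ and $\const$ makes both right-hand sides strictly less than $\varepsilon$, strictly improving the bootstrap. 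The bootstrap set is therefore open, closed and nonempty in $[0,t_*]$, so it equals $[0,t_*]$, yielding (\ref{LUB}).

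The main obstacle is the simultaneous coupling of the two estimates through the quadratic cross terms $|\alpha|w$, $\alpha^{\,2}$ and $w^{\,2}$, which prevents closing them independently; this is handled by the standard device of taking $\varepsilon$ small enough that the quadratic pieces are absorbed into the linear target. A secondary point, essentially automatic in the setup, is the cancellation between the $\const$ appearing in (B2) and the $\const$ appearing in the curvature bound $|\widehat R_\nu|\le 4\const|\mathrm{Ric}|$, which is what prevents any spurious $\const$-dependence from leaking into the leading-order constants.
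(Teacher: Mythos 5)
Your proof is correct, and it takes a cleaner route than the paper's. Both arguments start from the same Riccati equation along the normal geodesic congruence, but the paths diverge after that. The paper works with the $(0,2)$-form $\Theta$, derives the evolution equation $|\widehat{\Theta}|^{2\,\prime}=-2\langle\widehat{\Theta},\widehat{\Theta}\circ\widehat{\Theta}\rangle-2\theta|\widehat{\Theta}|^{2}-2\langle\widehat{Rm}(n,n),\widehat{\Theta}\rangle$, and then bounds this by a cubic inequality in $|\widehat\Theta|$; you instead work with the $(1,1)$ shape operator, where the two-dimensional Cayley--Hamilton identity $\widehat S^{2}=\tfrac12|\widehat S|^{2}\mathrm{Id}$ makes the trace-free part of $\widehat S^{2}$ vanish and gives the purely linear inequality $\dot w+\theta w\le 4\const|Ric|$ directly. (Incidentally, the paper's $\langle\widehat\Theta,\widehat\Theta\circ\widehat\Theta\rangle$ term also vanishes for the same algebraic reason, but the paper keeps and bounds it.) More significantly, the paper closes the estimate by solving the comparison ODEs explicitly with the self-similar ansatz $a/\bar t$, tracking how $a_-(\hat\delta)$ and $b_+(\hat\delta)$ depend on $\hat\delta$, and then running a two-case argument at the first failure time; you instead linearize around $f(t)=2/(1+t)$, observe that $(1+t)^{2}$ is the exact integrating factor for $y'+fy$, and close by a standard open--closed bootstrap after absorbing the quadratic cross-terms $\alpha^{2}, w^{2}, |\alpha|w$ at the cost of first shrinking $\varepsilon$. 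The two methods are mathematically equivalent (the self-similar ansatz and the integrating factor are two faces of the same scaling), but your version avoids both the cubic term and the explicit root-solving, so the bookkeeping is lighter and the dependence of $\delta$ on $\varepsilon$ and $\const$ is transparent. One small point worth making explicit in a polished write-up: the norm $|\widehat S|^{2}=\widehat S^{i}_{\ j}\widehat S^{j}_{\ i}$ is a pure index contraction and so its $t$-derivative does not pick up terms from $\dot h$, which is what makes the energy estimate for $w$ clean; the $\sqrt{w^{2}+\eta^{2}}$ regularization you mention handles the non-smoothness at $w=0$ correctly.
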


\vs
That $M$ is diffeomorphic to $\mathbb{R}^{3}$ minus a ball will play not role in the proof. Despite of this we will keep this assumption for several expository reasons.

\n \begin{proof}[\bf Proof] Let ${\mathcal F}$ be the congruence of geodesics in $M$ emanating perpendicularly from $\partial M$. We will denote such geodesics as $\gamma_{q}(t)$, $q\in \partial M$. Observe that if $\gamma_{q}(t)\in S(t')$ then $t=t'$. For this reason we would be done if we prove that over every geodesic $\gamma_{q}(t)$ we have $|\theta(\gamma_{q}(t))-2/(1+t)|\leq \varepsilon/(1+t)$ and $|\widehat{\Theta}(\gamma_{q}(t))|\leq \varepsilon/(1+t)$. This is exactly how we will proceed and to do so we will study a couple of differential inequalities that $\theta$ and $|\widehat{\Theta}|$ satisfy along the geodesics $\gamma_{q}(t)$ (equations (\ref{INEI}) and (\ref{INEII})) to obtain upper and lower bounds for $\theta(\gamma_{q}(t))$ and $|\widehat{\Theta}(\gamma_{q}(t))|$ essentially equivalent to (\ref{LUB}).  
We move first to deduce these differential inequalities ((\ref{INEI}) and (\ref{INEII})). 

In the computations below $h(t)$ denotes the Riemannian-metric induced over $S(t)$. Recall that the Lie derivative of $\Theta$ along the velocity field\footnote{That is, $n(\gamma_{q}(t))=\gamma'_{q}(t)$.} $n$ of the congruence ${\mathcal F}$ is given by 
\be\label{SFFEE}
\Theta'=\Theta\circ\Theta -Rm(n,n)
\ee

\n where $(\Theta\circ \Theta)_{ab}:=\Theta_{a}^{\ c}\Theta_{cb}$ and where $Rm(n,n)$ is the symmetric two-form $Rm(n,n)(v,w):=Rm(v,n,w,n)$. The contraction of this equation gives, as is well known, the focusing equation
\be\label{FOCUS}
\theta'=-\frac{\displaystyle \theta^{2}}{2}-|\widehat{\Theta}|^{2}-Ric(n,n)
\ee
This equation and the ones below are to be evaluated along the geodesics $\gamma_{q}$ but we will omit writing explicitly this dependence for notational convenience. Also, for later advantage we will use $\bar{t}:=1+t$ instead of $t$ as the parameter of the geodesics $\gamma_{q}$. 
We claim now that from (\ref{SFFEE}) and (\ref{FOCUS}) we obtain the equation 
\be\label{DELTF}
|\widehat{\Theta}|^{2}{'}=-2\langle\widehat{\Theta},\widehat{\Theta}\circ\widehat{\Theta}\rangle-2\theta|\widehat{\Theta}|^{2}+2\langle\widehat{\Theta},\widehat{Rm}(n,n)\rangle
\ee
where $\widehat{Rm}(n,n)$ is the traceless part of $Rm(n,n)$ and where $\langle\ ,\ \rangle$ is the inner product among symmetric two-tensors defined by $\langle U,V\rangle=U_{ab}V_{cd}h^{ac}h^{bc}$. To see this we compute
\begin{align*}
\widehat{\Theta}'&=\Theta'-\frac{\displaystyle \theta'}{2}h-\theta\Theta\\
&=\Theta\circ\Theta -Rm(n,n)+\big(\frac{\theta^{2}}{2}+|\widehat{\Theta}|^{2}+Ric(n,n)\big)\frac{h}{2}-\theta\Theta,\\
&=\widehat{\Theta}\circ\widehat{\Theta}+\frac{|\widehat{\Theta}|^{2}}{2}h-{\widehat{Rm}}(n,n)
\end{align*}
\n where to obtain the first equality we use that $h'=2\Theta$. Therefore
\begin{align*}
|\widehat{\Theta}|^{2}{'}&=2\langle \widehat{\Theta},\widehat{\Theta}'\rangle-4\langle\Theta,\widehat{\Theta}\circ \widehat{\Theta}\rangle\\
&=2\langle\widehat{\Theta},\widehat{\Theta}\circ \widehat{\Theta}+\frac{|\widehat{\Theta}|^{2}}{2}h-\widehat{Rm}(n,n)\rangle-4\langle\Theta,\widehat{\Theta}\circ\widehat{\Theta}\rangle\\
&=-2\langle\widehat{\Theta},\widehat{\Theta}\circ \widehat{\Theta}\rangle-2\langle\widehat{Rm}(n,n),\widehat{\Theta}\rangle-2\theta |\widehat{\Theta}|^{2}
\end{align*}
as wished [\footnote{To deduce the first equality in this calculation use that $\langle \Theta,\Theta\rangle = \Theta_{ab}\Theta_{cd}h^{ac}h^{bd}$ and that $(h^{ab})'=-(h_{cd}')h^{ca}h^{bd}=-2\Theta^{ab}$.}]. 
\n The equations (\ref{DELTF}) and (\ref{FOCUS}) give the inequalities
\begin{align}
& |\widehat{\Theta}|^{2}{'}\leq 2|\widehat{\Theta}|^{3}-2\theta|\widehat{\Theta}|^{2}+2|\widehat{\Theta}||\widehat{Rm}(n,n)|,\label{INEI}\\
& -\frac{\theta^{2}}{2}-|\widehat{\Theta}|^{2}-|Ric(n,n)|\leq \theta'\leq -\frac{\theta^{2}}{2}\label{INEII}.
\end{align}

\n which hold along every geodesic $\gamma_{q}(\bar{t})$. We will analyze them in what follows under the hypothesis (B1) and (B2). We will then adjust $\delta$ to satisfy (\ref{LUB}) for the given $\varepsilon$. As a matter of fact the first adjustment of $\delta$ is $\delta\leq \min \{\varepsilon, 1/32\}$ that will be assumed from now on. The reason for this will become clear later.

One can easily get a first conclusion just analyzing the second inequality in (\ref{INEII}). Indeed from it, $\delta\leq \epsilon$ and (B1), we deduce that $\theta(\bar{t})$ satisfies
\ben
\left\{
\begin{array}{l}
\theta'\leq -\frac{\displaystyle \theta^{2}}{\displaystyle 2},\vs\\
\theta(1)\leq 2+\varepsilon
\end{array}
\right.
\een
One can then directly compare $\theta(\bar{t})$ to $x(\bar{t})=(2+\varepsilon)/\bar{t}$ because $x(\bar{t})$ satisfies 
\ben
\left\{
\begin{array}{l}
x'\geq -\frac{\displaystyle x^{2}}{\displaystyle 2},\vs\\
x(1)=2+\varepsilon
\end{array}
\right.
\een
to conclude, from a standard ODE analysis, that $x(\bar{t})$ is an upper barrier to $\theta(\bar{t})$, that is $\theta(\bar{t})\leq (2+\varepsilon)/\bar{t}$. Hence to show (\ref{LUB}) it remains to prove that one can adjust $\delta$ further to have also 
\be\label{DESIRE}
\frac{2-\varepsilon}{\bar{t}}\leq \theta(\bar{t})\qquad \text{and}\qquad |\widehat{\Theta}(\bar{t})|\leq \frac{\varepsilon}{\bar{t}}. 
\ee
The proof of these two bounds is simultaneous and is done at the end of the discussion below.

\vs
$\circ$ {\it Assume that for any $\bar{t}\in [1,\bar{t}_{a}]$ we have $\theta(\bar{t})\geq 3/2\bar{t}$}. In the analysis below we will be restricted to this interval of $\bar{t}$. Use now this assumption and (B2) in (\ref{INEI}) to get 
\be\label{INEIE}
|\widehat{\Theta}|^{2}{'}\leq 2|\widehat{\Theta}|^{3}-\frac{3}{\bar{t}}|\widehat{\Theta}|^{2}+\frac{2\delta}{\bar{t}^{2}}|\widehat{\Theta}|
\ee
where we to deduce the last term on the r.h.s use in addition that 
\ben
|\widehat{Rm}(n,n)|=|Rm(n,n)-\big({\rm tr}_{h}Rm(n,n)\big) \frac{h}{3}|\leq 4|Rm(n,n)|\leq 4\const |Ric(n,n)|\leq \delta/\bar{t}^{2}
\een
 by (B2) and by how the constant $\const$ was define (see the definition before the statement of the proposition). Consider then the first order ODE
\be\label{TXI}
(x^{2}){'}=2x^{3}-\frac{3}{\bar{t}}x^{2}+\frac{2\hat{\delta}}{\bar{t}^{2}}x
\ee
which is obtained by making $|\widehat{\Theta}|=x=x(\bar{t})$ in (\ref{INEIE}), then changing the inequality by an equality and finally changing $\delta$ by $\hat{\delta}$ which we assume to satisfy $\delta\leq \hat{\delta}\leq \min\{\varepsilon,1/32\}$ [\footnote{To obtain an ODE of the form $x'=F(x,\bar{t})$ use $(x^{2})'=2xx'$ and then divide by $2x$ in (\ref{TXI}).}]. 
The reason why we consider this ODE is the following: if $x(\bar{t})$ is a positive solution to (\ref{TXI}) such that $x(1)\geq |\widehat{\Theta}(1)|$ then $x(\bar{t})\geq |\widehat{\Theta}(\bar{t})|$ as long as they are defined. 
We look now for a solution to (\ref{TXI}) of the form $x=a/\bar{t}$, where $a$ is a constant. Substituting $x=a/\bar{t}$ in (\ref{TXI}) we obtain
\be\label{ODEI}
-2a^{2}\frac{1}{\bar{t}^{3}}=(2a^{3}-3a^{2}+2\hat{\delta} a)\frac{1}{\bar{t}^{3}}
\ee
Canceling the factor $a/\bar{t}^{3}$ and solving for $a$ we obtain that $x=a/\bar{t}$ ($a\neq 0$) is a solution to (\ref{TXI}) iff $a$ satisfies  
\ben
a=\frac{1}{4}(1\pm\sqrt{1-16\hat{\delta}})
\een
For us it will be important only the solution corresponding to the smaller $a$, namely $a=a_{-}:=(1-\sqrt{1-16\hat{\delta}})/4$.
Note for later reference that $a_{-}>0$ and that $a_{-}\rightarrow 0$ as $\hat{\delta}\rightarrow 0$. 

With the solution $x=a_{-}/\bar{t}$ at hand we can obtain the following {\it First Conclusion}

\vs
$\bullet$ {\it If $|\widehat{\Theta}(1)|\leq a_{-}$ and $\theta(1)\geq 3/2$, then $|\widehat{\Theta}(\bar{t})|\leq a_{-}/\bar{t}$ as long as $\theta(\bar{t})\geq 3/2\bar{t}$.}

\vs 
$\circ$ {\it Assume now that for any $\bar{t}\in [1,\bar{t}_{b}]$ we have $|\widehat{\Theta}(\bar{t})|\leq a_{-}/\bar{t}$ (where $a_{-}(\hat{\delta})$ is as before)}. In the analysis below we will be restricted to this interval of $\bar{t}$. Use this assumption and (B2) in the inequality (\ref{INEII}) to get
\be\label{INEIIE}
\theta'\geq -\frac{\theta^{2}}{2}-\frac{(a_{-}^{2}+\delta)}{\bar{t}^{2}}.
\ee
Consider then the first order ODE
\be\label{ODEII}
x'=-\frac{x^{2}}{2}-\frac{(a_{-}^{2}+\hat{\delta})}{\bar{t}^{2}}
\ee

\n which is obtained by making $\theta=x$ in (\ref{INEIIE}), then changing the inequality by an equality and finally changing $\delta$ by $\hat{\delta}$. Again, the reason why we consider this ODE is the following: if $x(\bar{t})$ is a solution to (\ref{ODEII}) such that $x(1)\leq \theta(1)$ then
$x(\bar{t})\leq \theta(\bar{t})$ as long as they are defined. We look now for a solution to (\ref{ODEII}) of the form $x=b/\bar{t}$ where $b$ is a constant. Substituting $x=b/\bar{t}$ in (\ref{ODEII}) we obtain
\ben
-b\frac{1}{\bar{t}^{2}}=(-\frac{b^{2}}{2}-(a_{-}^{2}+\hat{\delta}))\frac{1}{\bar{t}^{2}}
\een
Canceling the factor $1/\bar{t}^{2}$ and solving for $b$ we deduce that $b/\bar{t}$ is a solution to (\ref{ODEII}) 
iff $b$ satisfies
\ben
b=1\pm\sqrt{1-2(a_{-}^{2}+\hat{\delta})}.
\een
The solution $b_{+}$ will be the only important. Note for reference below that $b_{+}<2$ and $b_{+}\rightarrow 2$ as $\hat{\delta}\rightarrow 0$. 

With the solution $x=b/\bar{t}$ at hand we can obtain the following {\it Second Conclusion}

\vs
$\bullet$ {\it If $\theta(1)\geq b_{+}$ and $|\widehat{\Theta}(1)|\leq a_{-}$, then $\theta(\bar{t})\geq b_{+}/\bar{t}$ as long as $|\widehat{\Theta}(\bar{t})|\leq a_{-}/\bar{t}$.}

\vs
We proceed to combine both {\it Conclusions} to adjust finally $\delta$ to satisfy (\ref{DESIRE}). 
Chose $\hat{\delta}$ smaller than $\min\{\varepsilon,1/32\}$ if necessary to have
\begin{align}
a_{-}(\hat{\delta})\leq \varepsilon\qquad\text{and}\qquad b_{+}(\hat{\delta})>\max\big\{2-\varepsilon,\frac{3}{2}\big\}
\end{align} 
Then we make the choice 
\ben
\delta=\min\big\{\hat{\delta},a_{-}(\hat{\delta}),2-b_{+}(\hat{\delta})\big\}
\een
Observe that with this choice of $\delta$, the hypothesis (B1) implies  
\ben
\theta(1)\geq 2-\delta\geq b_{+}>2-\varepsilon\qquad\text{and}\qquad |\widehat{\Theta}(1)|\leq \delta\leq a_{-}(\hat{\delta})\leq \varepsilon
\een
We claim that for all $\bar{t}\in [1,1+t_{*}]$ we have 
\be\label{TBP}
\theta(\bar{t})\geq \frac{b_{+}}{\bar{t}}\qquad\text{and}\qquad |\widehat{\Theta}(\bar{t})|\leq \frac{a_{-}}{\bar{t}}
\ee
that would imply (\ref{DESIRE}) because $b_{+}>2-\varepsilon$ and $a_{-}\leq \varepsilon$. Hence we would be done after proving the claim. We do that below. 

Suppose the claim is false and let $\bar{t}_{F}<t_{*}+1$ be the last time for which both inequalities in (\ref{TBP}) hold. 

{\it Case 1}. Suppose there are times $\bar{t}$ greater than $\bar{t}_{F}$ but arbitrarily close to it for which $|\widehat{\Theta}(\bar{t})|>a_{-}/\bar{t}$. Let $\bar{t}'_{F}>\bar{t}_{F}$ close enough to $\bar{t}_{F}$ that on $[\bar{t}_{F},\bar{t}'_{F}]$ we still have $\theta(\bar{t})>3/2\bar{t}$. The according to the {\it First Conclusion} there must be a time $\bar{t}_{1}$ before $\bar{t}_{F}'$ for which $\theta(\bar{t}_{1})<3/2\bar{t}_{1}$ which is not possible. 

{\it Case 2}. Suppose instead that there is $\bar{t}_{F}'$ such that on $[\bar{t}_{F},\bar{t}'_{F}]$ we still have $|\widehat{\Theta}(\bar{t})|\leq a_{-}/\bar{t}$ but that there are times $\bar{t}$ greater than $\bar{t}_{F}$ but arbitrarily close to it for which $\theta(\bar{t})<b_{+}/\bar{t}$. Then according to the {\it Second Conclusion} there must be a time $\bar{t}_{1}$ less than $\bar{t}'_{F}$ for which $|\widehat{\Theta}(\bar{t}_{1})|>a_{-}/\bar{t}$ which is not possible.
\end{proof}

\vs
\begin{Proposition}\label{COROL} Let $E$ be a WAF end. Then for every $0<\varepsilon<1/2$ there is $r_{\varepsilon}>0$ and an embedded two sphere $S_{\varepsilon}$ separating $\partial E$ from infinity, such that
\begin{enumerate}[labelindent=\parindent,leftmargin=*,label={\rm (U\arabic*)}]
\item On the unbounded component $E_{S_{\varepsilon}}$ of $E\setminus S_{\varepsilon}$ the distance function $\dist_{r_{\varepsilon}}(p,S_{\varepsilon})$ is smooth and every level set $S(t)=\{p,\dist_{r_{\varepsilon}}(p,S_{\varepsilon})=t\}$ is an embedded sphere, and,
\item For every $p\in S(t)$ we have 
\begin{align*}
& |\widehat{\Theta}_{r_{\varepsilon}}(p)|_{r_{\varepsilon}}\leq \frac{\varepsilon}{1+t},\qquad |\theta_{r_{\varepsilon}}(p)-\frac{2}{1+t}|\leq \frac{\varepsilon}{1+t},
\quad \text{and}\quad 
|Ric_{r_{\varepsilon}}(p)|_{r_{\varepsilon}}\leq \frac{\varepsilon}{(1+t)^{2}}
\end{align*}
\end{enumerate}
\end{Proposition}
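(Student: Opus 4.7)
The plan is to calibrate $\delta$ small enough that Proposition \ref{AAA} produces a sphere $S_{\varepsilon}$ whose boundary data feed into Proposition \ref{LLI}, and then to propagate the resulting estimates through every equidistant surface.

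Given $\varepsilon\in(0,1/2)$, first invoke Proposition \ref{LLI} with target $\varepsilon$ to extract a threshold $\delta_{1}=\delta_{1}(\varepsilon)>0$, and set $\delta:=\min\{\delta_{1}/\const,\,\varepsilon\}$. Applying Proposition \ref{AAA} with this $\delta$ yields $r_{\delta}>0$ and a separating two-sphere $S_{\delta}$ satisfying (A1)--(A3); define $r_{\varepsilon}:=r_{\delta}$ and $S_{\varepsilon}:=S_{\delta}$. By the topology footnote inside the proof of Proposition \ref{AAA}, the unbounded component $E_{S_{\varepsilon}}$ of $E\setminus S_{\varepsilon}$, equipped with $g_{r_{\varepsilon}}$, is a complete Riemannian manifold diffeomorphic to $\mathbb{R}^{3}$ minus a ball with boundary $S_{\varepsilon}$. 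Property (A2) with $\delta\leq\delta_{1}$ supplies hypothesis (B1), and (A3) with $\const\delta\leq\delta_{1}$ supplies hypothesis (B2) for $(E_{S_{\varepsilon}},g_{r_{\varepsilon}})$, so Proposition \ref{LLI} applies and yields the first two estimates in (U2) on any smoothly embedded equidistant $S(t)$. The Ricci bound in (U2) is then immediate from (A3) together with $\delta\leq\varepsilon$, using that $\dist_{r_{\varepsilon}}(p,S_{\varepsilon})=t$.

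The main obstacle is (U1): the smoothness of $\dist_{r_{\varepsilon}}(\cdot,S_{\varepsilon})$ on all of $E_{S_{\varepsilon}}$ and the embeddedness of $S(t)$ as a sphere for every $t\geq 0$. I would argue this by a continuation/topology bootstrap. Let $t_{*}\in(0,\infty]$ be the supremum of values for which $S(s)$ is a smoothly embedded sphere for every $s\in[0,t_{*})$; tubular neighborhood theory guarantees $t_{*}>0$. On $[0,t_{*})$ the estimates of Proposition \ref{LLI} are in force and give $\theta_{r_{\varepsilon}}\geq(2-\varepsilon)/(1+t)>0$ together with uniform bounds on the eigenvalues of $\Theta_{r_{\varepsilon}}$; by the Jacobi field equation these bounds preclude focal points of the outward normal exponential map $\Phi\colon S_{\varepsilon}\times[0,t_{*})\to E_{S_{\varepsilon}}$, so $\Phi$ remains a local diffeomorphism and extends smoothly past $t_{*}$. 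To exclude self-intersection at $t_{*}$, observe that each restriction $\Phi|_{S_{\varepsilon}\times[0,T]}$ with $T<t_{*}$ is proper and its image is open and closed in $E_{S_{\varepsilon}}$; combined with the simple connectedness of $E_{S_{\varepsilon}}\simeq\mathbb{R}^{3}\setminus B$, this forces $\Phi$ to be a global diffeomorphism, whence $t_{*}=\infty$. This establishes (U1) and closes the argument.
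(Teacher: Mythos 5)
Your setup for the non-topological part of the proof matches the paper exactly: choose $\delta\leq\min\{\delta^\varepsilon/\const,\varepsilon\}$, invoke Proposition~\ref{AAA} to get $S_\delta$ and $r_\delta$, and feed (A1)--(A3) into Proposition~\ref{LLI} to obtain (U2) on the interval where the equidistant surfaces are embedded. That part is fine.

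The gap is in your argument for (U1). You claim that, once the focal/Jacobi bound shows the normal exponential map $\Phi\colon S_\varepsilon\times[0,\infty)\to E_{S_\varepsilon}$ stays a local diffeomorphism, ``properness'' together with simple connectedness forces $\Phi$ to be a global diffeomorphism. Two problems. First, the sentence ``$\Phi|_{S_\varepsilon\times[0,T]}$ is proper and its image is open and closed in $E_{S_\varepsilon}$'' is not correct: the image $\{p\colon\dist_{r_\varepsilon}(p,S_\varepsilon)\leq T\}$ is closed but has nonempty boundary $S(T)$, hence is not open. Second, and more substantively, properness of each compact restriction is automatic and gives nothing; what you would need is properness of the full map $\Phi$ on $S_\varepsilon\times[0,\infty)$, which is precisely as hard as what you are trying to prove. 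A local diffeomorphism between two simply connected manifolds need not be injective (e.g.\ $z\mapsto e^z-1$ on $\mathbb{C}$), so without a covering-map property --- i.e.\ without properness, which is not established --- the conclusion does not follow. What can fail is not a focal point but a cut point: two distinct minimizing normal geodesics meeting at the same point with opposite velocities, producing a self-tangent immersed $S(t^*)$ while $\Phi$ is still a local diffeomorphism. Nothing in the curvature or second-fundamental-form bounds alone rules this out at the level of generality you are invoking.

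The paper closes exactly this gap by a concrete quantitative contradiction using the WAF structure rather than topology. It assumes $t^*_{\delta^i}<\infty$ along a sequence $\delta^i\downarrow 0$, $r_{\delta^i}\uparrow\infty$, identifies a self-tangency point $p^*_i$ at $t=t^*_{\delta^i}$ where two minimizing normal geodesics $\gamma_1^i,\gamma_2^i$ arrive with opposite velocities, and derives the estimate $\big|d_{r^*_i}(\gamma^i(\tau))-d_{r^*_i}(\gamma^i(-\tau))\big|\leq 2\delta^i r_{\delta^i}/r^*_i$ for the concatenated geodesic. Passing to the limit under the WAF convergence ($g_{r^*_i}\to g_{\mathbb{R}^3}$ in $C^2$ and $d_{r^*_i}\to|\cdot|$ in $C^0$), the concatenation converges to a Euclidean line segment through a point at distance one from the origin along which $|\gamma(\tau)|-|\gamma(-\tau)|$ would have to vanish identically --- impossible. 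That limiting argument is where the hard content of (U1) lives, and it has no analogue in your proposal. Without either proving properness of $\Phi$ from the estimates or reproducing a quantitative contradiction like this one, the claim $t^*=\infty$ is unsupported.
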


The condition $\varepsilon<1/2$ is not relevant for the proof but helps for algebraic reasons. It is also included for compatibility with Definition \ref{DEFEF}, which is motivated by Proposition \ref{COROL} and where the condition is required.

\begin{proof}[\bf Proof.] For the given $\varepsilon$ denote by $\delta^{\varepsilon}$ the delta provided by Proposition \ref{LLI}. Then, by Proposition \ref{AAA} one can find for every $\delta\leq \min\{\delta^{\varepsilon}/\const,\varepsilon\}$ a sphere $S_{\delta}$ and a $r_{\delta}$ such that (A1)-(A3) hold. Here $\const$ is the numeric constant defined before the statement of Proposition \ref{LLI}. Hence (B1) and (B2) hold too if we define the manifold $(M,g)$ as $(E_{S_{\delta}},g_{r_{\delta}})$ and we can use Proposition \ref{LLI}. If we define $r_{\varepsilon}$ (the one claimed by the proposition) as $r_{\varepsilon}:=r_{\delta}$, we conclude using (\ref{LUB}) that (U2) will be valid as long as $t\in [0,t^{*}_{\delta})$ where $t^{*}_{\delta}$ is the supremum of the $t_{*}>0$ such that for every $t\in [0,t_{*}]$ the set $S(t)=\{p\in E_{S_{\delta}},\dist_{r_{\delta}}(p,S_{\delta})=t\}$ is an embedded sphere. 

Observe that this construction is valid for every $\delta\leq \{\delta^{\varepsilon},\varepsilon\}$ and also that given $\delta$
%, by a simple inspection of the proof of Proposition \ref{AAA}, 
we can chose $r_{\delta}$ (and therefore $r_{\varepsilon}$) larger than any given number (by a simple inspection of the proof of Proposition \ref{AAA}).  
To prove (U1) it is enough to show that we can chose $\delta$ small enough and then $r_{\delta}$ big enough to have $t_{\delta}^{*}=\infty$. To this purpose it suffices to show that for any sequences $\delta^{i}\downarrow 0$ (with $\delta^{i=0}\leq \min\{\delta^{\varepsilon},\varepsilon\}$)  
and $r_{\delta^{i}}\uparrow \infty$ there is some $i$ for which $t^{*}_{\delta^{i}}=\infty$. Consider then one such pair of sequences and suppose, arguing by contradiction, that $t^{*}_{\delta^{i}}$ is finite for every $i$. We will show that this is impossible.
From the construction of the spheres $S_{\delta^{i}}$ (c.f. (ii) inside the proof of Proposition \ref{AAA}) 
there is, for every $i$, an annulus $\Omega_{i}\supset {\mathcal A}_{r_{\delta^{i}}}(1/2,2)$ together with a diffeomorphism $\varphi_{i}:\AR(1/2,2)\rightarrow \Omega_{i}$, both provided by the definition of WAF end,
and with $\varphi_{i}(\partial B_{\mathbb{R}^{3}}(o,1))=S_{\delta^{i}}$. As $r_{\delta^{i}}\rightarrow \infty$, then (also from the definition of WAF end) the metrics $\varphi^{*}_{i} g_{r_{\delta^{i}}}$ converge in $C^{2}$ to the Euclidean metric and the functions $d_{r_{\delta^{i}}}\circ \varphi_{i}$ converge in $C^{0}$ to the distance function to the origin.    
But in $\mathbb{R}^{3}$ the equidistant sets $\{x\in \AR(1,2), \dist_{\mathbb{R}^{3}}(x,\partial B_{\mathbb{R}^{3}}(o,1))=t\}$ are obviously equal to the embedded spheres $\partial B_{\mathbb{R}^{3}}(o,1+t)$ for all $t\in [0,1)$. Therefore, by continuity, there is $i_{0}$ such that if $i\geq i_{0}$ then $t^{*}_{\delta^{i}}\geq 1/2$. We assume then from now on and without loss of generality that $t_{\delta^{i}}^{*}\geq 1/2$ for every $i\geq 0$. 

By (U2) and (for every $i$) the mean curvature of the spheres $S(t)$ remains finite for every $t\in [0,t^{*}_{\delta^{i}}]$.
Thus (for every $i$) the surfaces $S(t)$ are embedded when $t<t^{*}_{\delta^{i}}$ but just at $t=t^{*}_{\delta^{i}}$ the surface $S(t^{*}_{\delta^{i}})$ is only immersed. We conclude that (for every $i$) there is at least a point $p^{*}_{i}$ in $S(t^{*}_{\delta^{i}})$ of self tangency of $S(t^{*}_{\delta^{i}})$ (see Figure \ref{Fig2}). As the surfaces $S(t)$ are equidistant to $S_{\delta^{i}}$ it is deduced that
at $p^{*}_{i}$ and at $t=t^{*}_{\delta^{i}}$ there arrive two geodesic segments $\gamma^{i}_{1}(t)$ and $\gamma^{i}_{2}(t)$ (i.e.  $\gamma^{i}_{1}(t^{*}_{\delta^{i}})=\gamma^{i}_{2}(t^{*}_{\delta^{i}})$) that
start at $S_{\delta^{i}}$ when $t=0$. Moreover the geodesics $\gamma_{1}^{i}(t)$ and $\gamma^{i}_{2}(t)$ cross $S(t')$ only at $t=t'$ and    do so perpendicularly. For these reasons when they reach $p^{*}_{i}$ they do with opposite velocities, that is $\gamma_{1}'(t^{*}_{\delta^{i}})=-\gamma_{2}'(t^{*}_{\delta^{i}})$, and moreover we have
%\gamma^{i}_{1}(t^{*}_{\delta^{i}})=\gamma^{i}_{2}(t^{*}_{\delta^{i}})=p^{*}_{i},\qquad 
$\dist_{r_{\delta}}(\gamma_{1}(t),S_{\delta})=\dist_{r_{\delta}}(\gamma_{2}(t),S_{\delta})=t$ for all $t\in [0,t^{*}_{\delta^{i}}]$. 
These geodesic segments are depicted in Figure \ref{Fig2}. Note that one can form a larger geodesic segment, denoted here by $\gamma^{i}$, simply by concatenating $\gamma_{1}^{i}$ and $\gamma_{2}^{i}$ at $p^{*}_{i}$.
The distance $r^{*}_{i}:=d(p^{*}_{i})$ will be important below. Also it will be useful to parametrize $\gamma^{i}$ with the $g_{r^{*}_{i}}$-(signed) arc length $\tau^{i}$ starting from $p^{*}_{i}$ (in one of the two directions). As $t$ (which is a $g_{r_{\delta^{i}}}$-arc length) ranges in $[0,t^{*}_{\delta^{i}}]$ then $\tau^{i}$ ranges in $[-(r^{*}_{i}/r_{\delta^{i}}) t^{*}_{\delta^{i}}, (r^{*}_{i}/r_{\delta^{i}})t^{*}_{\delta^{i}}]$. We show later that $r^{*}_{i}/r_{\delta^{i}}\geq 1$. From this and $t^{*}_{\delta^{i}}\geq 1/2$ we deduce that $\tau^{i}$ ranges at least in $[-1/2,1/2]$.

\begin{figure}[h]
\centering
\includegraphics[width=7cm,height=9cm]{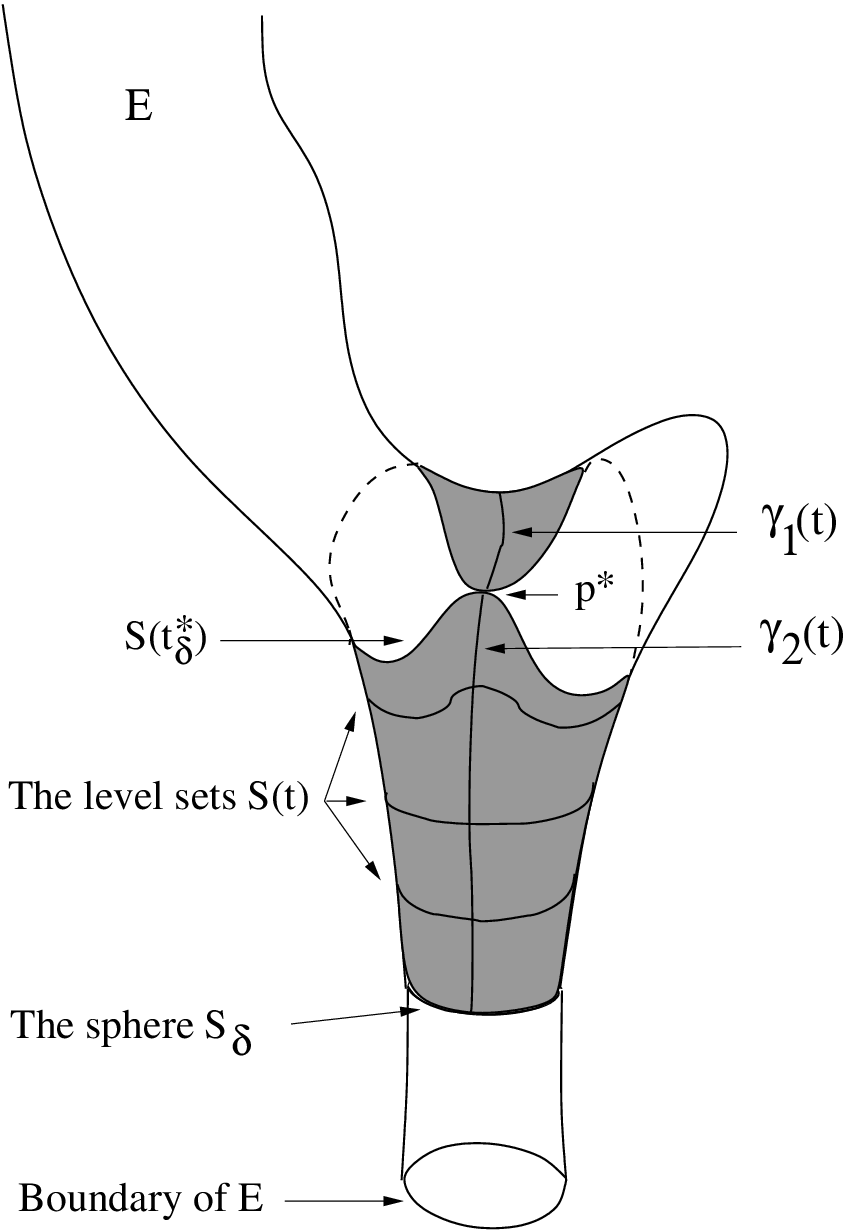}
\caption{Representation of the geometric elements in the proof by contradiction of (U1) in Proposition \ref{COROL}. For simplicity we do not include the index ``i".}
\label{Fig2}
\end{figure} 

To reach a contradiction we will use the following inequality \vs
\be\label{DISTDIFII}
\big|\, d_{r^{*}_{i}}(\gamma^{i}(\tau^{i}))-d_{r^{*}_{i}}(\gamma^{i}(-\tau^{i}))\, \big|\leq 2\delta^{i} \frac{r_{\delta^{i}}}{r^{*}_{i}}
\ee 
We prove this also later but for the moment and to avoid much disruption we proceed to use it. By the definition of WAF end we can consider a sequence of annuli $\Omega_{i}\supset {\mathcal A}_{r^{*}_{i}}(1/2,2)$ together with the sequence of diffeomorphisms $\varphi_{i}:\AR(1/2,2)\rightarrow \Omega_{i}$ such that $\varphi_{i}^{*}g_{r^{*}_{i}}$ converges in $C^{2}$ to the Euclidean metric and such that $d_{r^{*}_{i}}\circ \varphi_{i}$ converges in $C^{0}$ to the distance function to the origin. Taking a subsequence if necessary, the pull back of the geodesics segments $\gamma^{i}(\tau^{i})$ converge to a geodesic segment $\gamma(\tau)$ in $\mathbb{R}^{3}$ (hence a straight segment) with $|\gamma(\tau=0)|=1$ because $1=d_{r^{*}_{i}}(p^{*}_{i}=\gamma^{i}(0))\rightarrow |\gamma(0)|$
(here $|\gamma(\tau)|$ is the norm of $\gamma(\tau)$ as a point in $\mathbb{R}^{3}$, hence the Euclidean distance from $\gamma(\tau)$ to the origin). Moreover 
\ben
\big|\, |\gamma(\tau)|-|\gamma(-\tau)|\,\big|=0,\quad \text{for all } \tau\in [-\frac{1}{2},\frac{1}{2}]
\een
by taking the limit of (\ref{DISTDIFII}) (observe when taking the limit that $r_{\delta^{i}}/r^{*}_{i}\leq 1$). This equality is clearly impossible if $|\gamma(0)|=1$ and we reach a contradiction. 

\vs
To finish the proof it remains to prove (\ref{DISTDIFII}) and also $r^{*}_{i}/r_{\delta^{i}}\geq 1$. We show first (\ref{DISTDIFII}). To simplify the notation make below $\delta=\delta^{i}$, $r_{\delta}=r_{\delta^{i}}$, $r^{*}_{i}=r^{*}$, $t^{*}_{\delta^{i}}=t^{*}_{\delta}$, $\gamma_{1}=\gamma^{i}_{1}$ and $\gamma_{2}=\gamma^{i}_{2}$. In this notation (\ref{DISTDIFII}) is equivalent to
\be\label{DISTDIFIII}
\big|\, d_{r^{*}}(\gamma_{1}(t)) - d_{r^{*}}(\gamma_{2}(t))\, \big|\leq 2\delta \frac{\displaystyle r_{\delta}}{\displaystyle r^{*}}
\ee
We claim that this follows from proving, for any $p\in E_{S_{\delta}}$, the inequality
\be\label{DISTF}
\dist_{r_{\delta}}(p,S_{\delta})+1-\delta\leq d_{r_{\delta}}(p)\leq \dist_{r_{\delta}}(p,S_{\delta})+1+\delta
\ee
Indeed, if in it we make $p=\gamma_{j}(t)$, $j=1,2$ and then use that $\dist_{r_{\delta}}(\gamma_{j}(t),S_{\delta})=t$ for $j=1,2$, we get
\ben
t+1-\delta\leq d_{r_{\delta}}(\gamma_{j}(t))\leq t+1+\delta,\qquad \text{for } j=1,2
\een
and therefore
\ben
(t+1-\delta)\frac{r_{\delta}}{r^{*}}\leq d_{r^{*}}(\gamma_{j}(t))\leq (t+1+\delta)\frac{r_{\delta}}{r^{*}},\qquad \text{for } j=1,2
\een
from which (\ref{DISTDIFIII}) directly follows. We now deduce (\ref{DISTF}). Let $\Upsilon$ be a geodesic segment from $p$ to $\partial E$ whose length realizes the distance $d_{r_{\delta}}(p)$. $\Upsilon$ intersects $S_{\delta}$ in a point that we denote by $q$. Then we have
\ben
d_{r_{\delta}}(p)=\dist_{r_{\delta}}(p,q)+d_{r_{\delta}}(q) \geq \dist_{r_{\delta}}(p,S_{\delta})+1-\delta
\een
because $\dist_{r_{\delta}}(p,q)\geq \dist_{r_{\delta}}(p,S_{\delta})$ and because $d_{r_{\delta}}(q)\geq 1-\delta$ by (A1). This shows the first inequality in (\ref{DISTF}). To show the second consider a point $q'\in S_{\delta}$ such that $\dist_{r_{\delta}}(p,q')=\dist_{r_{\delta}}(p,S_{\delta})$. Then 
\ben
d_{r_{\delta}}(p)\leq \dist_{r_{\delta}}(p,q')+d_{r_{\delta}}(q')\leq \dist_{r_{\delta}}(p,S_{\delta})+1+\delta
\een
because $d_{r_{\delta}}(q')\leq 1+\delta$ by (A1) again. This shows the second inequality in (\ref{DISTF}) as wished.

Finally we prove that $r^{*}/r_{\delta}\geq 1$. To see this use (\ref{DISTF}) with $p=p^{*}$ to get $t^{*}_{\delta}+1-\delta \leq r^{*}/r_{\delta}$ and the recall that $t^{*}_{\delta}\geq 1/2$ and $\delta\leq 1/2$.
\end{proof}  

\n The Proposition \ref{COROL} shows that one can always restrict the domain of a given end and then scale the metric out to obtain an end with better asymptotic properties. More concretely one can alway cut out $E$ at $S_{\delta}$ and then define an new end consisting of the resulting unbounded region $E_{S_{\delta}}$ and the scaled metric $g_{r_{\delta}}$. Of course if this new end is AF with Schwarzschidian decay so is the original end $E$.
This shows that in order to prove asymptotic flatness for WAF ends 
it is enough to prove it for $\varepsilon$-flat ends defined as follows.

\begin{Definition}\label{DEFEF} Let $\varepsilon$ be a number in $(0,1/2)$. Then, a stationary end $E$ is $\varepsilon$-flat if it is $WAF$ and moreover,
\begin{enumerate}[labelindent=\parindent,leftmargin=*,label={\rm (H\arabic*)}]
\item The distance function $d(p)$ is smooth and every level set $S(r)=\{p,d(p)=r\}$ is an embedded sphere, and
\item For every $p\in S(r)$ we have 
\begin{align*}
& |\widehat{\Theta}(p)|\leq \frac{\varepsilon}{1+r},\qquad |\theta(p)-\frac{2}{1+r}|\leq \frac{\varepsilon}{1+r}\qquad \text{and}\qquad |Ric(p)|\leq \frac{\varepsilon}{(1+r)^{2}}
\end{align*}
\end{enumerate}
\end{Definition}

\n Note that with respect to Proposition \ref{COROL} we have changed notation from $t$ to $r$. Also note that on $\varepsilon$-flat ends, metric annuli are indeed annuli (i.e. diffeomorphic to $\AR(1,2)$, c.f. Section \ref{NOTATION}).

\section{Standard fall off for $\varepsilon$-flat ends}\label{SKTD}

As $\varepsilon$-flat stationary ends are just stationary ends with some additional properties we will continue using the same notations that we have used until now. 

\vs
\begin{Proposition}\label{PP} Let $E$ be an $\varepsilon$-flat end. Then, for all $p$ with $d(p)\geq 2/\varepsilon - 1$ we have 
\be\label{DDD}
\bigg|\big(\Delta d\big)(p)-\frac{2}{d(p)}\bigg|\leq \frac{2\varepsilon}{d(p)},
\ee
\end{Proposition}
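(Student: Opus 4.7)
The plan is to reduce the inequality to a direct application of hypothesis (H2) on the mean curvature of the level sets of $d$, combined with one triangle inequality and one algebraic estimate that is exactly sharp at the threshold $d(p)=2/\varepsilon-1$.

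First I would record the standard identification of the Laplacian of a distance function with the mean curvature of its level sets. By (H1) the function $d$ is smooth on all of $E$ and every level set $S(r)=\{d=r\}$ is an embedded sphere; hence $\nabla d$ is a globally defined unit vector field on $E$, and the flow of $\nabla d$ realizes $S(r)$ as the equidistant set of $\partial E$. Taking $n=\nabla d$ as the unit normal to $S(r)$ pointing towards infinity, the second fundamental form of $S(r)$ is the restriction of $\mathrm{Hess}\, d$ to the tangent space of $S(r)$. Since $|\nabla d|\equiv 1$ implies $\mathrm{Hess}\, d(n,n)=\tfrac12 n(|\nabla d|^{2})=0$, we get
\ben
(\Delta d)(p) \;=\; \mathrm{tr}\,\mathrm{Hess}\,d \;=\; \mathrm{tr}_{h}\,\Theta \;=\; \theta(p)
\een
for every $p\in E$, where $\theta(p)$ is precisely the mean curvature appearing in (H2).

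Next I would invoke (H2) directly: for $p\in S(r)$, $|\theta(p)-2/(1+r)|\leq \varepsilon/(1+r)$, which reads
\ben
\Big|(\Delta d)(p)-\tfrac{2}{1+d(p)}\Big| \;\leq\; \tfrac{\varepsilon}{1+d(p)}.
\een
Adding and subtracting $2/(1+d(p))$ and using the triangle inequality,
\ben
\Big|(\Delta d)(p)-\tfrac{2}{d(p)}\Big| \;\leq\; \tfrac{\varepsilon}{1+d(p)} \,+\, \Big|\tfrac{2}{1+d(p)}-\tfrac{2}{d(p)}\Big| \;=\; \tfrac{\varepsilon}{1+d(p)}+\tfrac{2}{d(p)(1+d(p))}.
\een

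Finally I would check the elementary estimate that turns this bound into $2\varepsilon/d(p)$. Using $1+d(p)>d(p)$ and the hypothesis $d(p)\geq 2/\varepsilon-1$, equivalently $2/(1+d(p))\leq \varepsilon$, we get
\ben
\tfrac{\varepsilon}{1+d(p)}+\tfrac{2}{d(p)(1+d(p))} \;\leq\; \tfrac{\varepsilon}{d(p)}+\tfrac{1}{d(p)}\cdot\tfrac{2}{1+d(p)} \;\leq\; \tfrac{\varepsilon}{d(p)}+\tfrac{\varepsilon}{d(p)} \;=\; \tfrac{2\varepsilon}{d(p)},
\een
which is the claimed inequality. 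There is no real obstacle here beyond correctly identifying $\Delta d$ with $\theta$; the threshold $2/\varepsilon-1$ is precisely the one at which the algebraic slack in the triangle inequality step closes up.
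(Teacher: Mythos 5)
Your proof is correct and follows essentially the same route as the paper: identify $\Delta d$ with the mean curvature $\theta$ of the level sets, apply the triangle inequality to compare $2/d$ with $2/(1+d)$, invoke (H2), and close the estimate using $d(p)\geq 2/\varepsilon-1$ (i.e.\ $2/(1+d(p))\leq\varepsilon$). The only difference is a small rearrangement of the final algebraic step, which is mathematically equivalent to the paper's.
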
 

\begin{proof}[\bf Proof] Recall that $\Delta d=\theta$. Then, at a point $p$ with $d(p)\geq 2\varepsilon - 1$ we compute 
\begin{align*} 
\bigg|\big(\Delta d\big)(p)-\frac{2}{d(p)}\bigg|& \leq \bigg|\big(\Delta d\big)(p)-\frac{2}{d(p)+1}\bigg|+\bigg|\frac{2}{1+d(p)}-\frac{2}{d(p)}\bigg|\\
& \leq \frac{\varepsilon}{1+d(p)}+\frac{2}{(1+d(p))d(p)} \leq \big(\frac{d(p)}{1+d(p)}+\frac{2}{\varepsilon (1+d(p))}\big)\frac{\varepsilon}{d(p)}\\
& \leq \frac{2\varepsilon}{d(p)}
\end{align*}
where to obtain the second inequality we use (H2) and to obtain the last we use that $d(p)\geq 2/\varepsilon-1$.
\end{proof}

An important conclusion coming out of this Proposition is that if we let $\alpha:=1-2\varepsilon$ then the function $1/d^{\alpha}$ is superharmonic on the region $\{p/d(p)\geq 2/\varepsilon - 1\}$, namely, 
\ben
\bigg(\Delta \frac{1}{d^{\alpha}}\bigg)(p)\leq 0
\een 
for any $p$ such that $d(p)\geq 2/\varepsilon - 1$. To see this we compute
\ben
\Delta\frac{1}{d^{\alpha}}=\frac{\alpha(\alpha+1)}{d^{\alpha+2}}-\frac{\alpha}{d^{\alpha+1}}\Delta d
\een
and then use
\ben
\Delta d\geq \frac{2-2\varepsilon}{d}
\een
which is deduced from (\ref{DDD}), to obtain 
\ben
\Delta \frac{1}{d^{\alpha}}\leq \frac{\alpha(2\varepsilon+(\alpha-1))}{d^{2+\alpha}}=0
\een
as wished. We will use this below to deduce an important property of the scalar curvature $R$. For any $r\geq 0$ let 
\ben
\overline{R}(r):=\sup\big\{R(p),p\in S(r)\big\}
\een
 be the supremum of $R$ over $S(r)$. 
As the scalar curvature $R$ decays quadratically at infinity then so does $\overline{R}(r)$. For this reason if $\overline{R}(r)$ is not monotonically decreasing, namely if there are $r_{1}<r_{2}$ such that $\overline{R}(r_{1})< \overline{R}(r_{2})$, then $R$ must have a local maximum somewhere. But as $\Delta R\geq R^{2}$ such local maximum cannot exist. We conclude that $\overline{R}(r)$ must be monotonically decreasing in $r$. In particular if $\overline{R}(r_{1})=0$ for some $r_{1}$ then it is also zero for any $r\geq r_{1}$ in which case the stationary solution is simply a piece of the Minkowski space-time. We will assume therefore from now on that $\overline{R}(r)>0$ for all $r\geq 0$.

\begin{Proposition}\label{PRDEC} Let $E$ be an $\varepsilon$-flat end and let $\alpha=1-2\varepsilon$. Then, for any $r$ and $\bar{r}$ with $  2/\varepsilon - 1\leq  \bar{r}\leq r$ we have
\be\label{EFF}
\overline{R}(r)\leq \bigg(\frac{\bar{r}}{r}\bigg)^{\alpha}\overline{R}(\bar{r})
\ee
\end{Proposition}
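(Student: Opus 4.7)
The natural approach is a maximum principle comparison argument on the scalar curvature using the superharmonic barrier $1/d^{\alpha}$ built just before the statement. Define, for fixed $\bar r$ with $\bar r \geq 2/\varepsilon - 1$, the constant $C := \overline{R}(\bar r)\, \bar r^{\alpha}$ and the auxiliary function
\[
u(p) := R(p) - \frac{C}{d(p)^{\alpha}}
\qquad\text{on}\qquad \Omega := \{p \in E : d(p) \geq \bar r\}.
\]
The first step is to observe that $u$ is subharmonic on $\Omega$: indeed $\Delta R \geq R^{2} \geq 0$ by (\ref{SUPERHAR}), while the preceding discussion shows $\Delta(1/d^{\alpha}) \leq 0$ on $\{d \geq 2/\varepsilon - 1\} \supseteq \Omega$, so $\Delta u \geq 0$.

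Next I would check the boundary and asymptotic behavior of $u$. On the inner boundary $\partial \Omega = S(\bar r)$, one has $u = R - C/\bar r^{\alpha} = R - \overline{R}(\bar r) \leq 0$ by the very definition of $\overline{R}(\bar r)$. For the outer behavior, the $\varepsilon$-flatness hypothesis (H2) gives $R(p) \leq \sqrt{2}\,|Ric(p)| \leq \sqrt{2}\,\varepsilon/(1+d(p))^{2}$, so $R$ decays at the rate $d^{-2}$. Since $\alpha = 1 - 2\varepsilon \in (0,1)$, the barrier $C/d^{\alpha}$ decays strictly slower than $R$, so there exists $T_{0}$ such that for every $T \geq T_{0}$ the inequality $\overline R(T) \leq C/T^{\alpha}$ holds, i.e.\ $u \leq 0$ on the outer sphere $S(T)$.

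The third step is to apply the maximum principle to $u$ on the compact metric annulus $\mathcal{A}[\bar r, T] = \{\bar r \leq d \leq T\}$ for each $T \geq T_{0}$: since $u$ is subharmonic and $\leq 0$ on the two boundary spheres $S(\bar r)$ and $S(T)$, it is $\leq 0$ throughout. Letting $T \to \infty$ gives $u \leq 0$ on all of $\Omega$. Translating this at any point $p \in S(r)$ with $\bar r \leq r$ yields
\[
R(p) \leq \frac{C}{r^{\alpha}} = \Bigl(\frac{\bar r}{r}\Bigr)^{\alpha} \overline{R}(\bar r),
\]
and taking the supremum over $p \in S(r)$ gives the desired estimate (\ref{EFF}).

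The only mildly delicate point is ensuring the validity of the maximum principle in the non-compact region, which is why I pass through the compact truncations $\mathcal{A}[\bar r, T]$ and exploit the $d^{-2}$ decay of $R$ provided by $\varepsilon$-flatness to dominate the barrier $d^{-\alpha}$ at infinity. Everything else is algebraic and uses only the two ingredients already put in place: $\Delta R \geq R^{2}$ and the superharmonicity of $1/d^{\alpha}$.
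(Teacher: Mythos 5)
Your proof is correct and takes essentially the same route as the paper: subharmonicity of $R$, superharmonicity of $1/d^{\alpha}$ on $\{d\geq 2/\varepsilon-1\}$, and a maximum-principle comparison on the region $\{d\geq\bar r\}$. The only stylistic difference is that you justify the truncation at $S(T)$ by comparing decay rates ($R=O(d^{-2})$ versus the barrier $d^{-\alpha}$), whereas the paper simply notes that both $R/\overline R(\bar r)$ and $1/d_{\bar r}^{\alpha}$ tend to zero at infinity, which already suffices (truncate, use that $\inf_{S(T)}$ of the difference tends to $0$, then pass to the limit); your version is fine and even a bit more explicit.
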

\begin{proof}[\bf Proof] As explained before the function $1/d^{\alpha}$ is superharmonic on the region $\{p/d(p)\geq \bar{r}\}$ and therefore so is $1/d_{\bar{r}}^{\alpha}$. Moreover the function $1/d_{\bar{r}}^{\alpha}$ is identically one on $S(\bar{r})=\partial\{p,d(p)=\bar{r}\}$ and decreases to zero at infinity. 
On the other hand as $R$ is subharmonic so is $R/\bar{R}(\bar{r})$. Moreover the function $R/\bar{R}(\bar{r})$ is less or equal than one all over the set $S(\bar{r})$ and tends to zero at infinity.
We can then compare the functions $1/d_{\bar{r}}^{\alpha}$ and $R/\bar{R}(\bar{r})$ on the region $\{p/d(p)\geq \bar{r}\}$ using the maximum principle to conclude that $1/d_{\bar{r}}^{\alpha}$ is everywhere greater or equal than $R/\overline{R}(\bar{r})$ on $\{p/d(p)\geq \bar{r}\}$. Hence
\ben
R(p)\leq \bigg(\frac{\bar{r}}{r}\bigg)^{\alpha} \overline{R}(\bar{r})
\een
from which (\ref{EFF}) follows.
\end{proof}

\begin{Proposition}\label{PREL} Let $E$ be an $\varepsilon$-flat end. Then, there is a constant $c>0$ for which the following statement holds: for any $k\geq 1$ there is $r_{k}$ such that for any $\bar{r}$ and $\hat{r}$, with $r_{k}\leq \bar{r}$ and $\bar{r}\leq \hat{r}\leq 2^{k}\bar{r}$, we have
\be\label{RI}
\overline{R}(\hat{r})\leq c\bigg(\frac{\bar{r}}{\hat{r}}\bigg)^{4}\overline{R}(\bar{r})
\ee
\end{Proposition}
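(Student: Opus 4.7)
The plan is a blow-up/compactness argument built around the Ernst system \eqref{LSCH} for the complex one-form $\chi=\nabla\mathcal E=2u\nabla u+2i\omega$, together with the identity $R=|\chi|^2/(2u^4)=|\zeta|^2/2$ relating the scalar curvature to the Ernst data. Fix a universal constant $c>0$ to be specified later and argue by contradiction: if the claim fails for this $c$, the negation of the statement furnishes a fixed $k\geq 1$ together with sequences $\bar r_m\to\infty$ and $\hat r_m\in[\bar r_m,2^k\bar r_m]$ satisfying $\overline R(\hat r_m)>c(\bar r_m/\hat r_m)^4\overline R(\bar r_m)$. I then rescale by $\bar r_m$: set $g_m:=g_{\bar r_m}$, $s_m:=\hat r_m/\bar r_m\in[1,2^k]$ (crucially bounded), and $\sigma_m:=\overline R_m(1)=\bar r_m^{\,2}\overline R(\bar r_m)$. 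The failure becomes $\overline R_m(s_m)>(c/s_m^4)\sigma_m$, while Proposition \ref{PRDEC} in scaled form gives the uniform baseline $\overline R_m(s)\leq\sigma_m/s^\alpha$ on all of $[1,\infty)$.

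Following the closing paragraph of Section \ref{NOTATION}, pick a base point $p_m\in S_m(1)$ and set $\bar u_m:=u/u(p_m)$, $\bar\omega_m:=\omega/u(p_m)^2$, $\bar\chi_m:=2\bar u_m\nabla\bar u_m+2i\bar\omega_m=\chi/u(p_m)^2$, so that $\bar u_m\to 1$, $\bar\omega_m\to 0$, and $\bar\zeta_m:=\bar\chi_m/\bar u_m^2\to 0$ in $C^{i-1}$ on each WAF annulus $\mathcal A_{\bar r_m}(1/2,2^l)$. Normalize $\hat\chi_m:=\bar\chi_m/\sqrt{2\sigma_m}$; the identity $R_m=|\bar\chi_m|_{g_m}^2/(2\bar u_m^4)$ rewrites as $|\hat\chi_m|_{g_m}^2=R_m\bar u_m^4/\sigma_m$, so that $\sup_{S_m(1)}|\hat\chi_m|_{g_m}^2\to 1$, the failure reads $\sup_{S_m(s_m)}|\hat\chi_m|_{g_m}^2>(c/s_m^4)(1-o(1))$, and the baseline gives the global bound $|\hat\chi_m|_{g_m}^2\leq(1+o(1))/s^\alpha$ on all of $\mathcal A_{\bar r_m}[1,\infty)$. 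In these variables the Ernst equation becomes the near-harmonic linear system $d\hat\chi_m=0$ and $\mathrm{div}_{g_m}\hat\chi_m=\langle\bar\zeta_m,\hat\chi_m\rangle_{g_m}$ with $C^{i-1}$-small coefficient $\bar\zeta_m$.

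Combining the WAF convergence $g_m\to g_{\mathbb R^3}$ on $\mathcal A_{\bar r_m}(1/2,2^l)$ for each $l$ with the uniform $C^0$-bound on $\hat\chi_m$ and standard elliptic estimates for the above linear system, a diagonal subsequence of $\hat\chi_m$ converges in $C^{i-1}_{\mathrm{loc}}$ on the Euclidean exterior $\mathbb R^3\setminus\overline{B(o,1/2)}$ to a complex-valued harmonic one-form $\hat\chi_\infty$ with $|\hat\chi_\infty|^2\leq 1/s^\alpha\to 0$ at infinity. Since $H^1(\mathbb R^3\setminus\overline B)=0$, write $\hat\chi_\infty=df_\infty$ with $f_\infty$ complex harmonic; the spherical-harmonic expansion of $f_\infty$ on the exterior, together with control of the monopole coefficient by the conserved flux $\int_{S(1)}\partial_\nu f_\infty$ (bounded by $\sup_{S(1)}|\hat\chi_\infty|\leq 1$) and the exclusion of all growing modes (forced by $|\nabla f_\infty|\to 0$), produces a universal Schwarzschidian bound $|\hat\chi_\infty|^2\leq C_0/s^4$ on the whole exterior. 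Passing to the limit $m\to\infty$ in the failure inequality with $s_m\to s_\infty\in[1,2^k]$ then yields $c/s_\infty^4\leq\sup_{S(s_\infty)}|\hat\chi_\infty|^2\leq C_0/s_\infty^4$, i.e.\ $c\leq C_0$. Choosing $c>C_0$ is the desired contradiction, and since $C_0$ comes purely from harmonic analysis on the fixed exterior domain, it is automatically independent of $k$.

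The main technical obstacle is the diagonal passage to the limit on the \emph{full} Euclidean exterior. The WAF hypothesis only provides rescaled $C^i$-convergence on each bounded dyadic annulus, so one must carefully interleave this with the uniform Proposition \ref{PRDEC} decay (which is what forces $\hat\chi_\infty\to 0$ at infinity rather than merely controlling it on a single annulus) and with elliptic regularity for the nearly-harmonic linear Ernst system, in order to extract a limit defined on all of $\mathbb R^3\setminus\overline{B(o,1/2)}$. Only once $\hat\chi_\infty$ lives on the entire exterior does the classical harmonic-form classification apply and yield a universal constant $C_0$; it is exactly this universality that keeps the final $c$ independent of the parameter $k$.
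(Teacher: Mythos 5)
Your proposal is correct and follows essentially the same route as the paper's proof: normalize $\chi$ by the scalar curvature on $S(\bar r)$ (using the Harnack-type control on $u$ to make this equivalent to normalizing by $\sup_{S(\bar r)}|\chi|_{\bar r}$), treat the Ernst system as a linear first-order elliptic system in $\chi$ with small coefficient $\zeta$, extract a diagonal blow-up limit $\hat\chi_\infty$ on the full Euclidean exterior using the WAF dyadic-annulus convergence together with the uniform $\alpha$-decay of Proposition \ref{PRDEC}, and finally apply a Liouville-type classification of decaying harmonic one-forms to get a universal $1/s^{4}$ bound which forces the constant $c$ to be $k$-independent. The only cosmetic differences are that you phrase the whole argument as a single contradiction rather than first establishing an intermediate pointwise estimate, and that you carry out the Liouville step by spherical-harmonic expansion of the potential rather than by the Green-function/cutoff representation of Proposition \ref{PREPREL}; both are standard and interchangeable.
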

\n This proposition is the basis to show that the scalar curvature $R$ has a $4-\eta$ decay for any $\eta>0$. Note that the constant $c$ is independent of $k$. We prove such decay in the following Lemma. The proof of Proposition (\ref{PREL}) is given after proving the Lemma and the auxiliary Proposition \ref{PREPREL}.

\begin{Lemma}\label{4ME} {\rm [$(4-\eta)$-decay]} Let $E$ be an $\varepsilon$-flat end. Then, given $\eta>0$ there exists $c_{\eta}>0$ such that 
\be\label{FD}
\overline{R}(r)\leq \frac{c_{\eta}}{r^{4-\eta}}
\ee
for any $r>0$. In particular at any point $p$ we have
\ben
R(p)\leq \frac{c_{\eta}}{d(p)^{4-\eta}}
\een
\end{Lemma}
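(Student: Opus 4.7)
\medskip\noindent\textbf{Proof proposal.} The plan is to iterate the doubling-type estimate \eqref{RI} supplied by Proposition \ref{PREL}, exploiting the fact that the constant $c$ there is independent of $k$. This independence is exactly what lets us trade a bit of the optimal $r^{-4}$ rate for an arbitrarily small power $r^\eta$, by taking $k$ large enough that the multiplicative loss $c$ per doubling is absorbed into the exponent.

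\medskip\noindent Concretely, given $\eta>0$, first fix an integer $k\geq 1$ large enough that
\ben
c\,2^{-4k}\;\leq\;2^{-k(4-\eta)},\quad\text{i.e.}\quad k\geq \log_{2}(c)/\eta,
\een
where $c$ is the universal constant from Proposition \ref{PREL}. For this $k$, let $r_{k}$ be the radius supplied by Proposition \ref{PREL} and set $\rho_{n}:=2^{nk}r_{k}$ for $n\geq 0$. Applying \eqref{RI} with $\bar{r}=\rho_{n}$ and $\hat{r}=\rho_{n+1}=2^{k}\bar{r}$ gives the one-step iteration
\ben
\overline{R}(\rho_{n+1})\;\leq\;c\,2^{-4k}\,\overline{R}(\rho_{n})\;\leq\;2^{-k(4-\eta)}\,\overline{R}(\rho_{n}),
\een
so by induction $\overline{R}(\rho_{n})\leq 2^{-nk(4-\eta)}\,\overline{R}(r_{k})$ for every $n\geq 0$.

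\medskip\noindent Next, for an arbitrary $r\geq r_{k}$, choose the unique $n\geq 0$ with $\rho_{n}\leq r\leq \rho_{n+1}$. Applying Proposition \ref{PREL} once more with $\bar{r}=\rho_{n}$ and $\hat{r}=r$, and using $(\rho_{n}/r)^{4}\leq 1$,
\ben
\overline{R}(r)\;\leq\;c\,\overline{R}(\rho_{n})\;\leq\;c\,\overline{R}(r_{k})\,2^{-nk(4-\eta)}.
\een
Since $r\leq \rho_{n+1}=2^{k}\rho_{n}$ gives $2^{-nk(4-\eta)}\leq (2^{k}r_{k}/r)^{4-\eta}$, we conclude
\ben
\overline{R}(r)\;\leq\;\frac{c\,\overline{R}(r_{k})\,(2^{k}r_{k})^{4-\eta}}{r^{4-\eta}}\qquad\text{for all }r\geq r_{k}.
\een
For $0<r\leq r_{k}$ the function $\overline{R}$ is bounded: each sphere $S(r)$ is compact by (H1), $R$ is smooth up to the boundary, and the closed annulus $\mathcal{A}[0,r_{k}]$ is compact by completeness of $(E,g)$ and the existence of the smooth distance function, so $\sup_{r\leq r_{k}}\overline{R}(r)<\infty$. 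Enlarging the constant to absorb this bound yields a single $c_{\eta}>0$ for which \eqref{FD} holds for all $r>0$. The pointwise statement $R(p)\leq c_{\eta}/d(p)^{4-\eta}$ is then immediate from $R(p)\leq\overline{R}(d(p))$.

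\medskip\noindent The proof is essentially a bookkeeping argument; no step is a serious obstacle once Proposition \ref{PREL} is in hand. The only subtle point is recognising that the $k$-independence of the constant $c$ in \eqref{RI} is essential: if $c$ were allowed to grow with $k$, the $\log_{2}(c)/\eta$ threshold above would blow up and the iteration would not close. Thus the real work is packed into the proof of Proposition \ref{PREL}; the Lemma itself is the easy corollary that converts the uniform doubling decay into a true polynomial rate with exponent arbitrarily close to $4$.
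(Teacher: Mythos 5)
Your proof is correct and follows essentially the same route as the paper: both iterate the $k$-independent doubling estimate \eqref{RI} along the geometric scale $2^{k}$, choose $k$ large enough that the per-step loss $c$ is absorbed into the exponent defect $\eta$, and then handle the compact range $r\leq r_{k}$ by enlarging the constant. The only cosmetic difference is bookkeeping: the paper writes $r=2^{kl}r_{*}$ with $r_{*}\in[r_{k},2^{k}r_{k}]$, whereas you anchor the dyadic sequence at $r_{k}$ and apply Proposition \ref{PREL} once more to interpolate between $\rho_{n}$ and $r$; the two are equivalent.
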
 

\begin{proof}[\bf Proof of Lemma \ref{4ME}]
Let $k\geq 1$, let $r_{k}$ be as in Proposition \ref{PREL} and let $r$ be any number greater or equal than $2^{k}r_{k}$. Write $r$ in the form
\ben
r=2^{kl}r_{*}
\een
where $l\geq 1$ is an integer and $r_{*}\in [r_{k},2^{k}r_{k}]$. For every $m$ from $1$ to $l$ obtain
\ben
\overline{R}(2^{mk}r_{*})\leq \frac{c}{2^{4k}}\overline{R}(2^{(m-1)k}r_{*})
\een
by using Proposition \ref{PREL} with $(\bar{r},\hat{r})=(2^{(m-1)k}r_{*},2^{mk}r_{*})$ and directly from them deduce 
\be\label{A}
\overline{R}(r)=\overline{R}(2^{kl}r_{*})\leq \bigg(\frac{c}{2^{4k}}\bigg)^{l}\overline{R}(r_{*})=c^{l}\bigg(\frac{r_{*}}{r}\bigg)^{4}\overline{R}(r_{*}).
\ee
Note that $l=(\ln r/r_{*})/(k\ln 2)$ and that therefore
\be\label{CI}
c^{l}=\bigg(\frac{r}{r_{*}}\bigg)^{\displaystyle (l \ln c)/(\ln r/r_{*})}=\bigg(\frac{r}{r_{*}}\bigg)^{\displaystyle (\ln c)/(k \ln 2)}
\ee
Plugging (\ref{CI}) in (\ref{A}) we obtain 
\be\label{SUPO}
\overline{R}(r)\leq \bigg(\frac{r_{*}}{r}\bigg)^{\displaystyle 4-(\ln c)/(k\ln 2)}\overline{R}(r_{*})
\ee
Now chose $k$ big enough to have $\eta \geq (\ln c)/(k\ln 2)$. With this choice of $k$ and as $r_{*}/r\leq 1$ we obtain from (\ref{SUPO}) the inequality
\be\label{SUP}
\overline{R}(r)\leq \bigg(\frac{r_{*}}{r}\bigg)^{\displaystyle 4-\eta}\overline{R}(r_{*})
\ee
which is valid as long as $r\geq 2^{k}r_{k}$. Define now $c_{\eta}:=\max\{r_{*}^{4-\eta}\overline{R}(r_{*}),r_{*}\in [r_{k},2^{k}r_{k}]\}$. With this choice of $c_{\eta}$ the inequality (\ref{SUP}) implies (\ref{FD}) for $r\geq 2^{k}r_{k}$. 
Increase finally $c_{\eta}$ if necessary to have (\ref{FD}) valid also for $r\in (0,2^{k}r_{k})$.
\end{proof}

The following proposition will be used only inside the proof of Proposition \ref{PREL} and is given separately for the sake of a smoother exposition.   

\begin{Proposition}\label{PREPREL} Let $\psi$ be a one form in $\mathbb{R}^{3}\setminus B_{\mathbb{R}^{3}}(o,2)$, solution of
\be\label{TRI}
\left\{
\begin{array}{l}
{\rm div}\, \psi = 0,\\
{\rm d}\, \psi =0
\end{array}
\right.
\ee
and satisfying that,
\begin{enumerate}[labelindent=\parindent,leftmargin=*,label={\rm (a\arabic*')}]
\item  $|\psi (x)|\leq 2/|x|^{\alpha}$ for any $x$ in $\mathbb{R}^{3}\setminus B_{\mathbb{R}^{3}}(o,2)$, and for some $\alpha>0$, and,

\item The $C^{1}$-norm of $\psi$ over $\AR[2,4]$
% B_{\mathbb{R}^{3}}(o,4)}\setminus B_{\mathbb{R}^{3}}(o,2)$ 
is bounded by $c^{*}$,
\end{enumerate}
Then, there is a constant $\hat{c}$ depending only on $c^{*}$ such that for all $x\in \mathbb{R}^{3}\setminus B_{\mathbb{R}^{3}}(o,2)$ we have,
\be\label{STAR}
|\psi(x)|\leq \frac{\hat{c}}{4|x|^{2}} 
\ee
\end{Proposition}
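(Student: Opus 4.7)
The plan is to reduce the problem to the classical exterior multipole expansion of a scalar harmonic function. Since $\mathbb{R}^{3}\setminus \overline{B_{\mathbb{R}^{3}}(o,2)}$ is simply connected, ${\rm d}\,\psi=0$ yields a potential $\phi\in C^{\infty}$ with $\psi={\rm d}\phi$; the remaining equation ${\rm div}\,\psi=0$ then becomes $\Delta\phi=0$, so $\phi$ is a Euclidean harmonic function on the exterior of a ball. Once $\phi$ is in hand, the estimate (\ref{STAR}) becomes a statement about how fast $\nabla\phi$ can decay.

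I would next invoke the standard exterior spherical-harmonic expansion,
\ben
\phi(r,\omega)=\sum_{l\geq 0}\sum_{|m|\leq l}\bigl(A_{lm}r^{l}+B_{lm}r^{-l-1}\bigr)Y_{lm}(\omega),
\een
converging uniformly on compact subsets of $\{r>2\}$, and use hypothesis (a1') to kill all growing modes. Indeed, $|\nabla\phi|=|\psi|\leq 2/|x|^{\alpha}\to 0$ as $|x|\to \infty$, while $|\nabla(r^{l}Y_{lm})|$ is of order $r^{l-1}$ and does not decay for any $l\geq 1$; hence $A_{lm}=0$ for $l\geq 1$, and $A_{00}$ is an irrelevant additive constant. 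The surviving expression is
\ben
\psi(x)=-B_{00}\frac{\hat{x}}{|x|^{2}}+\nabla\Bigl(\sum_{l\geq 1,\,m}B_{lm}r^{-l-1}Y_{lm}(\omega)\Bigr),
\een
in which the monopole already displays the target $1/|x|^{2}$ behavior and each higher multipole contributes a gradient of order $r^{-l-2}$.

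The next step is to bound every coefficient in terms of $c^{*}$. For the monopole, the fact that ${\rm div}\,\psi=0$ makes the flux $\Phi=\int_{\partial B_{\mathbb{R}^{3}}(o,R)}\psi\cdot\nu\,dA$ independent of $R>2$; asymptotic evaluation gives $\Phi=-4\pi B_{00}$ (higher multipoles are $L^{2}$-orthogonal to the constant $Y_{00}$), and choosing $R=3\in[2,4]$ together with (a2') yields $|B_{00}|\leq 9c^{*}$. For the higher modes, normalize $\phi$ by an additive constant so that it has zero mean on $\partial B_{\mathbb{R}^{3}}(o,3)$ (this does not change $\psi$); the Poincar\'e inequality on $S^{2}$ applied to the $C^{1}$ bound on $\psi|_{\partial B_{\mathbb{R}^{3}}(o,3)}$ from (a2') gives $\|\phi\|_{L^{2}(\partial B_{\mathbb{R}^{3}}(o,3))}\leq Cc^{*}$, so Parseval produces $\sum_{l\geq 1,m}|B_{lm}|^{2}\,3^{-2l-2}\leq (Cc^{*})^{2}$. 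Consequently the tail $\sum_{l\geq 1,m}B_{lm}r^{-l-1}Y_{lm}$ and its gradient are dominated, for $|x|\geq 4$, by a geometric series in $(3/|x|)^{l+1}\leq (3/4)^{l+1}$ bounded by $C'(c^{*})/|x|^{3}$.

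Assembling the pieces gives $|\psi(x)|\leq 9c^{*}/|x|^{2}+C'(c^{*})/|x|^{3}\leq C''(c^{*})/|x|^{2}$ for $|x|\geq 4$, while on the annulus $|x|\in[2,4]$ hypothesis (a2') yields directly $|\psi(x)|\leq c^{*}\leq 16c^{*}/|x|^{2}$; taking $\hat{c}=\hat{c}(c^{*})$ large enough then produces (\ref{STAR}). The main technical step is the uniform higher-multipole estimate: one must track the geometric lever-arm factor $3^{l+1}$ that appears when recovering each $B_{lm}$ from the data on $\partial B_{\mathbb{R}^{3}}(o,3)$ and verify it is beaten by the decay $r^{-l-1}$, so that the resulting constant depends only on $c^{*}$ and is in particular independent of the exponent $\alpha>0$ in hypothesis (a1').
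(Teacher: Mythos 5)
Your proof is correct, but it follows a genuinely different route from the paper's. The paper also starts by introducing a potential $\phi$ with ${\rm d}\phi=\psi$, but then multiplies it by a radial cutoff $\varsigma$ (zero on $|x|\leq 2$, one on $|x|\geq 4$) to obtain a function $\hat{\phi}=\varsigma\phi$ defined on all of $\mathbb{R}^{3}$ with $\Delta\hat{\phi}=f$ compactly supported in $\overline{B_{\mathbb{R}^{3}}(o,4)}$ and with $\|f\|_{C^{1}}$ controlled by $c^{*}$. It then writes $\hat{\phi}=\hat{\phi}_{H}+\hat{\phi}_{G}$ with $\hat{\phi}_{G}$ the Newtonian potential of $f$ (so $|{\rm d}\hat{\phi}_{G}|\leq c^{*}_{3}/(1+|x|)^{2}$ directly from the Green function), and uses the decay of $\psi$ (a1') plus the decay of ${\rm d}\hat{\phi}_{G}$ to conclude that the entire harmonic gradient ${\rm d}\hat{\phi}_{H}$ decays, hence vanishes by Liouville. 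This avoids any mode-by-mode analysis. Your approach instead expands $\phi$ in exterior spherical harmonics, kills the growing modes $A_{lm}$ ($l\geq 1$) via (a1'), recovers $B_{00}$ by a flux integral over $\partial B_{\mathbb{R}^{3}}(o,3)$, and controls the higher multipoles by Parseval on that sphere combined with the lever-arm factor $3^{l+1}$ being beaten by $(|x|)^{-l-1}$ for $|x|\geq 4$. Both are standard potential theory; the paper's cutoff-and-Liouville trick is shorter and sidesteps the convergence/normalization bookkeeping you need for the multipole sum (in particular the mode-by-mode elimination of the growing coefficients should be done by projecting onto each $Y_{lm}$ rather than asserted term-by-term), while your multipole route is more explicit and exposes that the leading monopole already accounts for the $1/|x|^{2}$ behaviour. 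Both yield a constant $\hat{c}$ depending only on $c^{*}$ and, in particular, independent of $\alpha$.
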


The factor $1/4$ in (\ref{STAR}) and the radii $2$ and $4$ of the balls do not play any important role in the proposition but will be algebraically convenient when we use Proposition \ref{PREPREL} in Proposition \ref{PREL}.

\begin{proof}[\bf Proof.] Consider the real function $\hat{\varsigma}:\mathbb{R}\rightarrow \mathbb{R}$ defined by
\ben
\hat{\varsigma}(y)=\left\{
\begin{array}{lcl}
0 & \text{if} & y\in (-\infty,2]\cup [4,\infty),\vs\\
e^{\displaystyle -1/(y-2)(4-y)} & \text{if} & y\in (2,4),\\
\end{array}
\right.
\een
and then define the function $\varsigma:\mathbb{R}\rightarrow \mathbb{R}$ as
\ben
\varsigma(y)=\frac{\displaystyle \int_{-\infty}^{y} \hat{\varsigma}(\bar{y})\, {\rm d}\bar{y}}{\displaystyle \int_{-\infty}^{\infty} \hat{\varsigma}(\bar{y})\, {\rm d}\bar{y}}
\een
The function $\varsigma(y)$ is just a non-negative $C^{\infty}$ function taking the value zero for $y\leq 2$ and the value one for $y\geq 4$. 
Then consider a potential function $\phi$ for $\psi$ on $\mathbb{R}^{3}\setminus B_{\mathbb{R}^{3}}(o,2)$ which is simply found by integration along paths and which is unique up to a constant. Then, consider the function $\hat{\phi}(x):=\varsigma(|x|) \phi(x)$ as a function in the whole $\mathbb{R}^{3}$. This function satisfies 
\be\label{NONH}
\Delta \hat{\phi}=f
\ee
where $f$ is a function with support in $\overline{B_{\mathbb{R}^{3}}(o,4)}$ and whose $C^{1}$-norm is bounded by a constant $c_{1}^{*}(c^{*})$. One can represent then $\hat{\phi}$ as the sum of a harmonic function $\hat{\phi}_{H}$ in $\mathbb{R}^{3}$ plus the solution $\hat{\phi}_{G}$ to 
(\ref{NONH}) found by convoluting $f$ against the Green function of the Laplacian. The function $\hat{\phi}_{G}$ satisfies 
\be\label{SI}
|\hat{\phi}_{G}(x)|\leq \frac{c_{2}^{*}}{1+|x|}\qquad\text{and}\qquad |d\hat{\phi}_{G}(x)|\leq \frac{c_{3}^{*}}{(1+|x|)^{2}}
\ee
where $c_{2}^{*}$ and $c_{3}^{*}$ depend only on $c^{*}$. Thus, as $\psi(x)={\rm d}\hat{\phi}_{G}(x)+{\rm d}\hat{\phi}_{H}(x)$ when $|x|\geq 4$ and as by (a1') $|\psi(x)|\rightarrow 0$ when $|x|\rightarrow \infty$, we conclude that $|d\hat{\phi}_{H}(x)|\rightarrow 0$ when $|x|\rightarrow \infty$. In particular the harmonic functions $\partial_{x^{i}}\hat{\phi}_{H}$, $i=1,2,3$, decay also to zero at infinity. By Liouville's theorem the functions $\partial_{x^{i}}\hat{\phi}_{H}$ must be identically zero, from which we conclude that $\hat{\phi}_{H}$ is a constant and that $\psi(x)={\rm d}\hat{\phi}_{G}(x)$ when $|x|\geq 4$. Define now $\hat{c}:=\max\{4c_{3}^{*},64c^{*}\}$. Then, by (\ref{SI}), if $|x|\geq 4$ we have $|\psi(x)|\leq c_{3}^{*}/|x|^{2}\leq \hat{c}/4|x|^{2}$, while, by (a2'), if $2\leq |x|\leq 4$ we have $|\psi(x)|\leq c^{*}\leq c/4|x|^{2}$. Thus $|\psi(x)|\leq c/4|x|^{2}$ for any $|x|\geq 2$ as wished. 
\end{proof}

We have all what is necessary to prove the Proposition \ref{PREL}.

\vs
\begin{proof}[\bf Proof of Proposition \ref{PREL}] In all what follows we assume $k\geq 1$ to be given and fixed. The proof of the proposition will rely on the use of the Ernst equation. The strategy of proof will be better explained once we state and prove the facts (I), (II) and (III) below. 
\begin{enumerate}[labelindent=\parindent,leftmargin=*,label={\rm (\, \arabic*)}]
\item[(I)] {\it For every given integers $i\geq 2$ and $j\geq 0$ and divergent sequence $\bar{r}_{m}\rightarrow \infty$ there is a sequence of annuli $\Omega_{m}\supset {\mathcal A}_{\bar{r}_{m}}(1/2,2^{k+j+2})$ such that $(\Omega_{m},g_{\bar{r}_{m}})$ 
converges in $C^{i}$ to the flat annulus $(
\AR(1/2,2^{j+k+2})
,g_{\mathbb{R}^{3}})$. In particular
\be\label{CD}
\sup \big\{R_{\bar{r}_{m}}(p)/p\in {\mathcal A}_{\bar{r}_{m}}(1,2^{k+j+2})\big\}\rightarrow 0
\ee  
Moreover the complex one-forms $\zeta=2(u\nabla u+i\omega)/u^{2}$ {\rm (}restricted to ${\mathcal A}_{\bar{r}_{m}}(1/2,2^{k+j+2})${\rm )} converge in $C^{i}$ to zero.}  

\vs
\n The first part is just the definition of WAF end with $l=k+j+2$. The second part instead was discussed in the {\sc elliptic estimates} in Section \ref{NOTATION}.

\item[(II)] {\it For every given integer $j\geq 0$ there is $\tilde{r}_{j}$ such that for every $\bar{r}\geq \tilde{r}_{j}$ the following Harnak-type of estimate holds} 
\be\label{HARNAK}
1\leq \frac{\sup\big\{u(p)/p\in {\mathcal A}_{\bar{r}}(1,2^{k+j+2})\big\}}{\inf \big\{u(p)/p\in {\mathcal A}_{\bar{r}}(1,2^{k+j+2})\big\}}\leq 2
\ee

\n This is deduced from the bound $|\nabla \ln u|_{\bar{r}}^{2}\leq R_{\bar{r}}/2$ (contract the first equation in (\ref{MEE})) and from (I) by the following argument. Let $p_{1}$ and $p_{2}$ be two arbitrary points in ${\mathcal A}_{\bar{r}}(1,2^{k+j+2})$ and let $\gamma(s)$, $s\in [s_{1},s_{2}]$, be a curve inside an annulus $\Omega_{\bar{r}}\supset {\mathcal A}_{\bar{r}}(1,2^{k+j+2})$ joining $p_{1}$ to $p_{2}$ and parametrized by the $g_{\bar{r}}$-arc-length. Then one has
\begin{align}\label{37}
\bigg| \ln\frac{u(p_{2})}{u(p_{1})}\bigg| & = \bigg| \int_{s_{1}}^{s_{2}} \nabla_{\dot{\gamma}} \ln u\, {\rm d}s \bigg| \leq \int_{s_{1}}^{s_{2}} | \nabla \ln u|_{\bar{r}}\, {\rm d}s\\
\nonumber & \leq \bigg(\sup_{p\in 
\Omega_{\bar{r}}}
\sqrt{\frac{R_{\bar{r}}(p)}{2} }\bigg) {\rm length}_{\bar{r}}(\gamma)
\end{align} 
By (I) there is $\tilde{r}_{j}$ such that for any $\bar{r}\geq \tilde{r}_{j}$, there is an annulus $\Omega_{\bar{r}}\supset {\mathcal A}_{\bar{r}}(1,2^{k+j+2})$ with $\big(\Omega_{\bar{r}},g_{\bar{r}}\big)$ is sufficiently close to $(
\AR(1,2^{k+j+2}),g_{\mathbb{R}^{3}})$ in $C^{4}$ that any two points in ${\mathcal A}_{\bar{r}}(1,2^{k+j+2})$ can be joined through a curve in it of $g_{\bar{r}}$-length less or equal than $2(2\pi + 2^{k+j+2})$ (by a coarse estimation). By (\ref{CD}), if $\tilde{r}_{j}$ is big enough then 
\ben
\sup \big\{R_{\bar{r}}(p)/p\in {\mathcal A}_{\bar{r}}(1,2^{k+j+2})\big\}\leq \frac{\ln 2}{2(2\pi + 2^{k+j+2})}
\een
From this and  (\ref{37}) we obtain that for any $p_{1}$ and $p_{2}$ in ${\mathcal A}_{\bar{r}}(1/2,2^{k+j})$ we have $|\ln u(p_{2})/u(p_{1})|\leq \ln 2$. The equation (\ref{HARNAK}) then follows.

\item[(III)] {\it For any $\bar{r}\geq \tilde{r}_{j}$, with $\tilde{r}_{j}$ as in (II), and for any $p\in {\mathcal A}_{\bar{r}}(1,2^{k+j+2})$ we have
\be\label{III}
\frac{1}{2}|\chi_{\bar{r}}(p)|^{2}_{\bar{r}}\leq \frac{R(p)}{\overline{R}(\bar{r})}\leq 2 |\chi_{\bar{r}}(p)|^{2}_{\bar{r}}
\ee
where $\chi_{\bar{r}}$ is the following scaling of $\chi=2(udu+i\omega)=d{\mathcal E}$
\ben
\chi_{\bar{r}}:=\frac{\chi}{\overline{|\chi|}_{\bar{r}}},
\qquad \text{with}\qquad \overline{|\chi|}_{\bar{r}}:=\sup\big\{|\chi (p)|_{\bar{r}},p\in S(\bar{r})\big\}
\een
Moreover if $\bar{r}\geq \max\{\tilde{r}_{j},2/\varepsilon -1\}$ then 
\be\label{SOK}
|\chi_{\bar{r}}(p)|_{\bar{r}}^{2}\leq 2\bigg(\frac{1}{d_{\bar{r}}(p)}\bigg)^{\alpha}
\ee
for all $p\in {\mathcal A}_{\bar{r}}(1,2^{k+j+2})$ and where $\alpha=1=2\varepsilon$.}

\vs
\n To see the first inequality in (\ref{III}) operate as follows 
\begin{align}
\frac{R(p)}{\overline{R}(\br)}&=\frac{\displaystyle{\frac{|\chi(p)|^{2}}{2u^{4}(p)}}}{\displaystyle{\sup_{\bar{p}\in S(\br)}
\frac{|\chi(\bar{p})|^{2}}{2u^{4}(\bar{p})}} }
=\frac{\displaystyle{|\chi(p)|^{2}}}{\displaystyle{\sup_{\bar{p}\in S(\br)}|\chi(\bar{p})|^{2}\frac{u^{4}(p)}{u^{4}(\bar{p})}}}\geq\frac{1}{2}\frac{|\chi(p)|^{2}}{\displaystyle{\sup_{\bar{p}\in S(\br)}|\chi(\bar{p})|^{2}}}\\
\nonumber &\label{QQ}=\frac{1}{2}\frac{|\chi(p)|^{2}_{\br}}{\overline{|\chi|^{2}_{\br}}(\br)}=\frac{1}{2}|\chi_{\br}(p)|^{2}_{\br}
\end{align}
where in the third step we use that $u(p)/u(\bar{p})\leq 2$ by (\ref{HARNAK}). To obtain the second inequality in (\ref{III}) instead use in the third step that $u(p)/u(\bar{p})\geq 1/2$ by (\ref{HARNAK}) too. The inequality (\ref{SOK}) is the consequence of combining (\ref{EFF}) and (\ref{III}).

\end{enumerate}

\vs
With (I), (II) and (III) at hand we are in a better position to explain the strategy of proof of the proposition. The idea is to use the Ernst equation in the form (\ref{LSCH}) to show that there is a constant $\hat{c}$ independent of $k$ and a $r_{k}\geq \tilde{r}_{0}$ (here $\tilde{r}_{0}$ is $\tilde{r}_{j}$ with $j=0$) such that for every $\bar{r}\geq r_{k}$ and $p\in {\mathcal A}_{\bar{r}}[2,2^{k+1}]$ we have
\be\label{ERNST1}
|\chi_{\bar{r}}(p)|_{\bar{r}}^{2}\leq \frac{\hat{c}}{2}\bigg(\frac{\bar{r}}{d(p)}\bigg)^{4}
\ee
Together with (\ref{III}), this would imply that for every $\bar{r}\geq r_{k}$ and $p\in {\mathcal A}_{\bar{r}}[2,2^{k+1}]$ the inequality
\ben
\frac{R(p)}{\overline{R}(\bar{r})}\leq \hat{c}\bigg(\frac{\bar{r}}{d(p)}\bigg)^{4}
\een
must hold. Letting $c:=\max\{\hat{c},16\}$, then, by (\ref{EFF}), we would also have  
\ben
\frac{R(p)}{\overline{R}(\bar{r})}\leq 1\leq c\bigg(\frac{\bar{r}}{d(p)}\bigg)^{4}
\een
for every $p\in {\mathcal A}_{\bar{r}}[1,2]$. Thus, $R(p)/\overline{R}(\bar{r})\leq c(\bar{r}/d(p))^{4}$
would hold for every $p\in {\mathcal A}_{\bar{r}}[1,2^{k}]$. Hence, if $\bar{r}\leq \hat{r}\leq 2^{k}\bar{r}$ we would have $\overline{R}(\hat{r})\leq c(\bar{r}/\hat{r})^{4} \overline{R}(\bar{r})$ as wished. 

We move now to prove (\ref{ERNST1}). In the two equations of (\ref{LSCH}) divide $\chi$ by $\overline{|\chi|_{\bar{r}}}$ (which amounts to make $\chi\rightarrow \chi_{\bar{r}}$) and then in the first equation of (\ref{LSCH}) multiply both sides by $\bar{r}^{2}$ (which amounts to make $g\rightarrow g_{\bar{r}}$). In this way one obtains the equivalent system
\be\label{SERNST}
\left\{\begin{array}{l}
{\rm div}_{\bar{r}}\, \chi_{\bar{r}}=\langle\zeta,\chi_{\bar{r}}\rangle_{\bar{r}},\\
{\rm d}\, \chi_{\bar{r}}=0
\end{array}
\right.
\ee
where, recall, $\zeta=2(u\nabla u+\omega i)/u^{2}$. To deduce (\ref{ERNST1}) we will think (\ref{SERNST}) as a linear elliptic system in the variable $\chi_{\bar{r}}$ and we will consider $\zeta$ as a coefficient. 

From (I) we deduce that for any $\bar{r}\geq \breve{r}_{j}$, with $\breve{r}_{j}>0$ sufficiently big, 
there is around every $p\in {\mathcal A}_{\bar{r}}[2,2^{k+j+1}]$ an harmonic coordinate system $\{x=(x^{1},x^{2},x^{3}), |x|\leq I_{1}\}$, $p=(0,0,0)$, where the system (\ref{SERNST}) is written in the form
\ben
a^{ij}\partial_{i}\partial_{j} \phi_{\bar{r}} + b^{i}\partial_{i} \phi_{\bar{r}}=0,\qquad \partial_{i}\phi_{\bar{r}}=\chi_{\bar{r}}(\partial_{i})
\een
with the coefficients $a^{ij}$ uniformly elliptic and uniformly bounded (i.e. by $\bar{r}$-independent bounds) in $C^{4}$ and the coefficients $b^{i}$ uniformly bounded in $
C^{2}$ (of course more is known but these bounds are enough). On the other hand from (\ref{SOK}) and recalling that we find $\phi_{\bar{r}}$ from $d\phi_{\bar{r}}=\chi_{\bar{r}}$, we deduce that $\phi_{\bar{r}}$ is also uniformly bounded in $C^{1}$ if $\phi_{\bar{r}}$ is set to be zero at $(0,0,0)$.   
We can then rely in standard interior elliptic estimates over everyone of such coordinate systems to conclude that there is a constant $c^{*}>0$ such that for any $\bar{r}\geq \breve{r}_{j}$ with $\breve{r}_{j}$ sufficiently big, the $C^{2}_{g_{\bar{r}}}$-norm of $\chi_{\bar{r}}(=d\phi_{\bar{r}})$ over ${\mathcal A}_{\bar{r}}[2,2^{k+j+1}]$ is bounded by $c^{*}$. 

From these uniform $C^{2}_{g_{\bar{r}}}$-bounds for $\chi_{\bar{r}}$ and (I) the following fourth fact is just the result of a standard limit.
\begin{enumerate}[labelindent=\parindent,leftmargin=*,label={\rm (\, \arabic*)}]
\item[(IV)] {\it Given $j\geq 0$ and a divergent sequence $\{\bar{r}_{m}\}$, there is a subsequence (indexed again by $m$) such that, 
the forms $\chi_{\bar{r}_{m}}$ over ${\mathcal A}_{\bar{r}_{m}}[2,2^{k+j+1}]$ converge in $C^{1}$ to a form $\psi$ on 
$\AR(2,2^{j+k+1}]$
%$\overline{B_{\mathbb{R}^{3}}(o,2^{j+k+1})}\setminus B_{\mathbb{R}^{3}}(o,2)$ 
solution of 
\be\label{TRI}
\left\{
\begin{array}{l}
{\rm div}\, \psi = 0,\\
{\rm d}\, \psi =0
\end{array}
\right.
\ee
and satisfying that, 
\begin{enumerate}[labelindent=\parindent,leftmargin=*,label={\rm (\arabic*)}]
%\item[\rm (a')] There is a point $x$ in $\partial B_{\mathbb{R}^{3}}(o,1)$ such that $|\psi(x)|=1$,
\item[\rm (a)] $|\psi (x)|\leq 2/|x|^{\alpha}$ for every $x$ in 
$\AR[2,2^{k+j+1}]$
%$\overline{B_{\mathbb{R}^{3}}(o,2^{k+j+1})}\setminus B_{\mathbb{R}^{3}}(o,2)$
, where $\alpha=1-2\varepsilon$, and that,
\item[\rm (b)] The $C^{1}$-norm of $\psi$ over 
$\AR[2,4]$
%$\overline{B_{\mathbb{R}^{3}}(o,4)}\setminus B_{\mathbb{R}^{3}}(o,2)$
is bounded by $c^{*}$, where $c^{*}$ is the constant defined before.
\end{enumerate}  
}
\end{enumerate}

\vs

The limit form $\psi$ in (IV) satisfies then the hypothesis of Proposition \ref{PREPREL}. Therefore the Proposition \ref{PREPREL} provides us with a constant $\hat{c}(c^{*})$ that is the one that we will use now to show (\ref{ERNST1}). 
Recall that the purpose is to show that there is a constant $\hat{c}$ independent of $k$ and $r_{k}>0$ such that for every $\bar{r}\geq r_{k}$ and $p\in {\mathcal A}_{\bar{r}}[2,2^{k+1}]$ the equation (\ref{ERNST1}) holds. Having chosen $\hat{c}$ as we did, the existence of $r_{k}$ is shown by contradiction. We do that in what follows. Suppose then that there is a divergent sequence $\bar{r}_{m}\rightarrow \infty$ and a sequence of points $p_{m}\in {\mathcal A}_{\bar{r}_{m}}[2,2^{k+1}]$ such that
\ben
|\chi_{\bar{r}_{m}}(p_{m})|^{2}_{\bar{r}_{m}}> \frac{\hat{c}}{2}\bigg(\frac{\bar{r}_{m}}{d(p_{m})}\bigg)^{4}=\frac{\hat{c}}{2\, d^{\, 4}_{\bar{r}_{m}}(p_{m})}
\een
By (IV) with $j=1$, there is a subsequence (indexed again by $m$) such that the forms $\chi_{\bar{r}_{m}}$ over ${\mathcal A}_{\bar{r}_{m}}[2,2^{k+2}]$ converge in $C^{1}$ to a form $\psi$ on 
$\AR[2,2^{k+2}]$
%$\overline{B_{\mathbb{R}^{3}}(o,2^{k+2})}\setminus B_{\mathbb{R}^{3}}(o,2)$
, solution of (\ref{TRI}), satisfying (a), (b) and for which, in addition, there is a point $x_{1}\in  
\AR[2,2^{k+1}]$
%\overline{B_{\mathbb{R}^{3}}(o,2^{k+1})}\setminus B_{\mathbb{R}^{3}}(o,2)$ 
with
\ben
|\psi(x_{1})|^{2}\geq \frac{\hat{c}}{2|x_{1}|^{4}}
\een
To this subsequence one can again apply (IV) with $j=2$, to conclude that there is again a subsequence of it (indexed again by $m$) such that the forms $\chi_{\bar{r}_{m}}$ over ${\mathcal A}_{\bar{r}_{m}}[2,2^{k+3}]$ converge in $C^{2}$ to a form $\psi$ on 
$\AR[2,2^{k+3}]$
%$\overline{B_{\mathbb{R}^{3}}(o,2^{k+3})}\setminus B_{\mathbb{R}^{3}}(o,2)$, 
solution of (\ref{TRI}), satisfying (a), (b) and for which, in addition, there is a point $x_{2}\in  
\AR[2,2^{k+1}]$
%\overline{B_{\mathbb{R}^{3}}(o,2^{k+1})}\setminus B_{\mathbb{R}^{3}}(o,2)$ 
with
\ben
|\psi(x_{2})|^{2}\geq \frac{\hat{c}}{2|x_{2}|^{4}}
\een
One can continue applying (IV) with $j=3,4,\ldots$ and then taking a diagonal sequence to conclude that there is a form $\psi$ on $\mathbb{R}^{3}\setminus B_{\mathbb{R}^{3}}(o,2)$ solution of (\ref{TRI}) , satisfying (a), (b) and for which, in addition, there is a point $x_{\infty}\in  \AR[2,2^{k+1}]$
%\overline{B_{\mathbb{R}^{3}}(o,2^{k+1})}\setminus B_{\mathbb{R}^{3}}(o,2)$ 
with
\ben
|\psi(x_{\infty})|^{2}\geq \frac{\hat{c}}{2|x_{\infty}|^{4}}
\een
which is not possible because of how the constant $\hat{c}$ was defined.
\end{proof}

At this point the proof of asymptotic flatness and Schwarzschidian fall of is direct for, as observed by Kennefick and \'O Murchadha \cite{MR1314057}, once the curvature enjoys a $1/d^{4-\varepsilon}$ decay then it must forcefully enjoy a $1/d^{4}$ decay and the metric must have Schwarzschidian fall off. For completeness we give below the main elements of the construction.  

\begin{Proposition} Let $(E; g,\omega,u)$ be a WAF end. Then, there is a coordinate system $\{x=(x^{1},x^{2},x^{3})\}$ covering $E$ up to a compact set such that
\begin{gather}\label{PROPOPI}
\big|\, \delta_{ij}-g_{ij}\, \big|\leq K/|x|^{2},\qquad \big|\, \partial_{k} g_{ij}\, \big|\leq K/|x|^{3},\qquad \big|\partial_{m}\partial_{k}\, g_{ij}\, \big|\leq K/|x|^{4},\quad \text{and},
\end{gather}
\vspace{-.9cm}
\begin{gather}
\label{PROPOPII}
\big|\, \partial_{i} u\, \big|+\big|\, \omega_{i}\, \big| \leq K/|x|^{2}
\end{gather}
plus further progressive power-law decay for the norms of the multiple $\partial$-derivatives of $\tilde{g}_{ij}$, $u$ and $\omega_{i}$, where here $K$ is a positive constant and $|x|$ is the norm of $x=(x^{1},x^{2},x^{3})$ as a vector in $\mathbb{R}^{3}$.
\end{Proposition}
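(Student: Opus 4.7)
My plan proceeds in three stages. First, sharpen the $(4-\eta)$-decay of Lemma~\ref{4ME} to the exact rate $|Ric|=O(1/d^{4})$. This is achieved by pushing the blow-up/Liouville argument of Proposition~\ref{PREPREL} one further time: fact~(III) in the proof of Proposition~\ref{PREL}, combined with Lemma~\ref{4ME}, gives $|\chi_{\bar{r}}|_{\bar{r}}=O(1/d_{\bar{r}}^{\,2-\eta/2})$ on every scaled annulus. Running the extraction~(IV) diagonally over $j=1,2,\dots$ produces a limit one-form $\psi$ defined on the whole of $\mathbb{R}^{3}\setminus B_{\mathbb{R}^{3}}(o,2)$, solving (\ref{TRI}) and vanishing at infinity; Proposition~\ref{PREPREL} then forces $|\psi(x)|\leq \hat c/|x|^{2}$ independently of $\eta$, and unwinding the scaling yields the pointwise bounds $|\chi|,|\nabla u|,|\omega|=O(1/d^{2})$, hence $|Ric|=O(1/d^{4})$ by the first equation of (\ref{MEE}).

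Second, build global harmonic coordinates on the end. Integrating $|\nabla u|=O(1/d^{2})$ along divergent curves and comparing rays through the WAF convergence shows $u\to u_{\infty}>0$ (positivity by (S3)); normalize $u_{\infty}=1$. For a sequence $r_{m}\to\infty$ the WAF diffeomorphisms $\varphi_{m}:\AR(1/2,2^{l})\to\Omega_{m}$ transport the Euclidean coordinate functions onto $\Omega_{m}$; I would then solve $\Delta_{g}x^{i}=0$ outside a compact set with asymptotic values prescribed by the limit of these pullbacks. Condition~(W4) together with $|Ric|=O(1/|x|^{4})$ and standard interior Schauder estimates applied scale by scale make $(x^{1},x^{2},x^{3})$ a diffeomorphism from the end onto $\mathbb{R}^{3}$ minus a ball, with $|x|$ comparable to $d(p)$ and an initial metric control $|\partial^{j} g_{ab}|=O(1/|x|^{1+j})$.

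Third, extract the Schwarzschildian expansion. In harmonic coordinates $g_{ab}=\delta_{ab}+h_{ab}$ satisfies $-\frac{1}{2}g^{cd}\partial_{c}\partial_{d} h_{ab}=Ric_{ab}-Q(g,\partial g)$ with source of order $1/|x|^{4}$; weighted-Laplacian theory then splits $h_{ab}$ into an $O(1/|x|)$ harmonic piece (encoding the ADM mass of $g$) and an $O(1/|x|^{2})$ particular solution. The crucial observation — essentially the content of Kennefick and \'O~Murchadha's argument — is that this harmonic piece vanishes: from the second equation of (\ref{MEE}) one reads off $\Delta\ln u = -2|\omega|^{2}/u^{4}=O(1/|x|^{4})$, so $u=1-m/|x|+O(1/|x|^{2})$, and the standard Smarr/Komar identity forces the same coefficient $m$ to govern the physical expansion $\tilde g_{ab}=(1+2m/|x|)\delta_{ab}+O(1/|x|^{2})$; multiplying, the $1/|x|$ contributions to $g=u^{2}\tilde g$ cancel pairwise, so $h_{ab}=O(1/|x|^{2})$. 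This gives (\ref{PROPOPI}) with derivative decay by differentiating the Poisson equation; (\ref{PROPOPII}) follows since $|\nabla u|,|\omega|=O(1/|x|^{2})$ in the now nearly Euclidean $g$-norm, and higher-order progressive decay comes from bootstrapping (\ref{MEE}) with interior Schauder on annuli of scale $|x|$.

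The main obstacle I anticipate is precisely the mass cancellation in the third stage — identifying the coefficient of $1/|x|$ in $u$ with the ADM mass of $\tilde g$ via the Smarr/Komar identity. Without it one would only obtain a $1/|x|$ fall-off for $g$, not the Schwarzschildian $1/|x|^{2}$ required by the Proposition. The remaining ingredients (the Liouville sharpening in stage one, the construction of harmonic coordinates through (W4), and the elliptic bootstrap for higher derivatives) are technically routine once the first stage is in place.
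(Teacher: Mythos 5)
The central gap is in your Stage~1. The Liouville/blow-up argument of Proposition~\ref{PREPREL}, as used inside the proof of Proposition~\ref{PREL}, only controls the \emph{scale-normalized} quantity $\chi_{\bar r}=\chi/\overline{|\chi|}_{\bar r}$, i.e.\ the ratio $|\chi(p)|/\sup_{S(\bar r)}|\chi|$. The bound $|\psi(x)|\leq \hat c/|x|^{2}$ therefore translates back only into a multiplicative comparison $|\chi(p)|\lesssim \sup_{S(\bar r)}|\chi|\,(\bar r/d(p))^{2}$ across scales, which is precisely the content of (\ref{ERNST1}) and Proposition~\ref{PREL}. To convert it into the absolute bound $|\chi(p)|\leq C/d(p)^{2}$ you would already need $\sup_{S(\bar r)}|\chi|\leq C'/\bar r^{2}$, which is what you are trying to prove; the step is circular. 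Moreover your intermediate claim that (III) plus Lemma~\ref{4ME} yields $|\chi_{\bar r}|_{\bar r}=O(1/d_{\bar r}^{2-\eta/2})$ would require a lower bound on $\overline{R}(\bar r)$ of the matching order $1/\bar r^{4-\eta}$, which is nowhere available (the paper only assumes $\overline{R}>0$). Iterating Proposition~\ref{PREL} can only give the $(4-\eta)$-decay of Lemma~\ref{4ME}, with the constant $c_{\eta}$ blowing up as $\eta\to 0$ because $r_{k}$ grows with $k$.

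The mechanism the paper actually uses to break out of the $\eta$-loss is a \emph{linear elliptic bootstrap in coordinates}, not a further blow-up. From the $(4-\eta)$-decay one first gets, via the claim (essentially the Bando--Kasue--Nakajima result cited as \cite{MR1001844}), harmonic coordinates in which $|\delta_{ij}-g_{ij}|=O(1/|x|^{2-\eta})$ and $|\partial_{i}\ln u|+|\partial_{i}\phi|=O(1/|x|^{2-\eta/2})$. One then observes that $\Delta\ln u$ and $\Delta\phi$ are \emph{quadratic} in these gradients, hence $O(1/|x|^{4-\eta})$; since $4-\eta>3$, the Newtonian potential of such a source is $O(1/|x|)$, so $\ln u,\phi=O(1/|x|)$ after subtracting constants. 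This gives $|Ric|=O(1/|x|^{4})$, and re-applying the claim with $\eta=0$ yields (\ref{PROPOPI}) and (\ref{PROPOPII}) outright. Your Stage~3 ``mass cancellation'' ($u^{2}\tilde g$ having no $1/|x|$ term) is a correct heuristic for \emph{why} $g$ rather than $\tilde g$ decays quadratically, but it does not replace the bootstrap: with only the $(4-\eta)$-decay as input the expansions of $u$ and $\tilde g$ carry $O(1/|x|^{2-\eta})$ errors, so the cancellation still yields only $g-\delta=O(1/|x|^{2-\eta})$. In short, the key missing ingredient in your outline is the passage through the Poisson equations for $\ln u$ and the twist potential $\phi$, which is where the $\eta$ is eliminated; it cannot be obtained by the diagonal Liouville argument alone.
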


Note that $g_{ij}$ decays faster than $\tilde{g}_{ij}=u^{2}g_{ij}$ according to Definition \ref{AFDEF}. It is indeed the factor $u^{2}$ what causes a slower decay for $\tilde{g}_{ij}$.   

\begin{proof}[\bf Proof.] From (\ref{MEE}) we have $|(\nabla \ln u)(p)|^{2}\leq R(p)/2\leq c_{\eta}/2d(p)^{4-\eta}$. Then, scaling $u$ and $\omega$ if necessary, deduce (integrating along paths) that $\ln u$ goes to zero at infinity and furthermore that $|\ln u (p)| \leq c'_{\eta}/d(p)^{1-\eta/2}$. Similarly, if $\phi$ is a potential for $\omega$, then by (\ref{MEE}) we have $|\nabla \phi|^{2}\leq u^{4}R/2\leq \bar{c}_{\eta}/d^{4-\eta}$. Therefore by adding a constant if necessary the potential $\phi$ goes to zero at infinity and we have $|\phi(p)|\leq \bar{c}'_{\eta}/d(p)^{1-\eta/2}$. Summarizing, 
\begin{gather}\label{ETTA}
|\ln u(p)| + |\phi(p)| \leq \frac{c}{d^{1-\eta/2}(p)},\quad \text{and},\\
\label{ETTAI}|(\nabla \ln u)(p)|+|(\nabla \phi)(p)|\leq \frac{c}{d^{1-\eta/2}(p)}
\end{gather}
We will use below the following claim: {\it Let $1>\eta\geq 0$ (including the possibility $\eta=0$). If (\ref{ETTA}) holds then,
\be\label{CUREST}
|Ric(p)|\leq \frac{b_{1}}{d^{4-\eta}(p)},\qquad\text{and}\qquad |\nabla Ric (p)|\leq \frac{b_{1}}{d^{5-\eta}(p)}
\ee
and there is an harmonic coordinate system $\{x=(x^{1},x^{2},x^{3})\}$ on which we have 
\begin{gather}\label{PROPOPAI}
|\delta_{ij}-g_{ij}|\leq \frac{b_{2}}{|x|^{2-\eta}},\quad |\partial_{k} g_{ij}|\leq \frac{b_{2}}{|x|^{3-\eta}},\qquad |\partial_{l}\partial_{k} g_{ij}|\leq \frac{b_{2}}{|x|^{4-\eta}},\\
\label{PROPOPAII}
|\partial_{i} \ln u|\leq \frac{b_{2}}{|x|^{2-\eta/2}},\qquad \text{and}\qquad |\partial_{i} \phi|\leq \frac{b_{2}}{|x|^{2-\eta/2}}
\end{gather}
}

Let us postpone the proof of the claim until later and use it now, say with $\eta=1/2$. Then, from (\ref{MEE}), $\ln u$ and $\phi$ satisfy the equations
\ben
\Delta \ln u= f_{1} = O(\frac{1}{|x|^{7/2}})\qquad \text{and}\qquad \Delta \phi = f_{2} = O(\frac{1}{|x|^{7/2}})
\een
where we are thinking $f_{1}$ and $f_{2}$ as sources. With this fast decay of the sources we get for both, $\ln u$ and $\phi$, a decay $O(1/|x|)$, namely as in (\ref{ETTA}) with $\eta=0$ [\footnote{The reader can check that and $||x|-d(x)|\leq$ constant.}]. Then, (\ref{PROPOPAI}) and (\ref{PROPOPAII}) with $\eta=0$ are just (\ref{PROPOPI}) and (\ref{PROPOPII}) respectively. The further progressive power-law decay is achieved by a standard elliptic bootstrap of decay and is unnecessary to include here.   

To finish the proof we need to explain how to prove the claim. From (\ref{ETTA}) and for any $p\in {\mathcal A}_{r}(1/2,2)$ we have $|\phi(p)|\leq c/d(p)^{1-\eta/2} = c(r/d(p))^{1-\eta/2}(1/r)^{1-\eta/2}\leq 2c/r^{1-\eta/2}$ and $|(\nabla \phi)(p)|_{r}\leq c/d(p)^{2-\eta/2}(1/r)^{2-\eta/2} = c(r/d(p))^{2-\eta/2}(1/r)^{2-\eta/2}\leq 4c/r^{2-\eta/2}$. Similar bounds are obtained for $\ln u$. Summarizing, all over ${\mathcal A}_{r}(1/2,2)$ we have the uniform bounds 
\be\label{UNIBOUND}
|\ln u|+|\phi|\leq 4c/r^{1-\eta/2}\qquad \text{and}\qquad |\nabla \ln u|_{r} + |\nabla \phi|_{r}\leq 8c/r^{1-\eta/2}
\ee
We will use them to obtain interior elliptic estimates on the smaller annuli ${\mathcal A}_{r}(4/5,5/4)$. Recall that when $r\rightarrow \infty$ the annulus $({\mathcal A}_{r}(1/2,2),g_{r})$ converge to $(\AR(1/2,2),g_{\mathbb{R}^{3}})$ in $C^{i}$ for every $i\geq 2$ and due to this we do not need to worry about the constants involved in Sobolev embeddings or elliptic estimates if $r$ is sufficiently large. Scaling (\ref{MEE}) gives on ${\mathcal A}_{r}(1/2,2)$
\ben
\Delta_{r} \ln u = -2{\displaystyle \frac{|\nabla \phi|_{r}^{2}}{u^{4}}}=f_{1,r},\qquad \text{and}\qquad \Delta_{r} \phi = 4\langle \nabla \ln u,\nabla \phi\rangle_{r}=f_{2,r}
\een
where we will think $f_{r,1}$ and $f_{r,2}$ as sources. Then, $L^{p}$ interior elliptic estimates \cite{MR737190} give 
\ben
\|\ln u\|_{H^{2,4}_{g_{r}}({\mathcal A}_{r}(3/5,5/3))}\leq c_{1}\big(\|f_{1,r}\|_{L^{4}_{g_{r}}({\mathcal A}_{r}(1/2,2))}+\|\ln u\|_{L^{4}_{g_{r}}({\mathcal A}_{r}(1/2,2))}\big)\leq \frac{c_{2}}{r^{1-\eta/2}} 
\een
where to obtain the last inequality use (\ref{UNIBOUND}) (assume $r>1$). Sobolev embeddings then give $\|\ln u\|_{C^{1,\beta}_{g_{r}}({\mathcal A}_{r}(3/5,5/3))}\leq c_{3}/r^{1-\eta/2}$. In the same way we obtain $\|\phi\|_{C^{1,\beta}_{g_{r}}({\mathcal A}_{r}(3/5,5/3))}\leq c_{4}/r^{1-\eta/2}$. Use these bounds to get $C^{1,\beta}$ bounds for the sources $f_{1,r}$ and $f_{2,r}$ and from them and Schauder estimates \cite{MR737190} get the bound $\|\ln u\|_{C^{2,\beta}_{g_{r}}({\mathcal A}_{r}(4/5,5/4))}\leq c_{5}/r^{1-\eta/2}$ and  $\|\phi\|_{C^{2,\beta}_{g_{r}}({\mathcal A}_{r}(4/5,5/4))}\leq c_{5}/r^{1-\eta/2}$. In particular if $r=d(p)$ and by undoing the scaling we have $|(\nabla\nabla \ln u)(p)|\leq c_{5}/r^{3-\eta/2}= c_{5}/d(p)^{3-\eta/2}$ and $|(\nabla\nabla \phi)(p)|\leq c_{5}/r^{3-\eta/2}= c_{5}/d(p)^{3-\eta/2}$. Use these estimates and (\ref{ETTAI}) in (\ref{MEE}) to arrive easily at (\ref{CUREST}). Finally, as shown in \cite{MR1001844} ({\bf Theorem} in pg. 314, with $\eta$ there equal to $2-\eta$ here; see also {\it Remark} 1) cubic volume growth and the curvature decays (\ref{CUREST}) are enough to guarantee the existence of a coordinate system $\{x=(x^{1},x^{2},x^{3})\}$ satisfying (\ref{PROPOPAI}). The proof of the claim is finished. \end{proof} 

\bibliographystyle{plain}
\bibliography{Master.bib}

\end{document}